\definecolor{Gray}{gray}{0.9}
\DeclareMathOperator*{\argmin}{arg\,min}
\newcommand{\Z}{\mathcal{Z}}
\newcommand{\IM}{\mathrm{IM}}
\newcommand{\LGD}{\mathrm{LGD}}
\newcommand{\FC}{\mathrm{FC}}
\newcommand{\TC}{\mathrm{TC}}
\newcommand{\XVA}{\mathrm{XVA}}
\newcommand{\CVA}{\mathrm{CVA}}
\newcommand{\DVA}{\mathrm{DVA}}
\newcommand{\ColVA}{\mathrm{ColVA}}
\newcommand{\MVA}{\mathrm{MVA}}
\newcommand{\FVA}{\mathrm{FVA}}
\newcommand{\I}{\mathbb{I}}
\newcommand{\R}{\mathbb{R}}
\newcommand{\N}{\mathbb{N}}
\newcommand{\E}{\mathbb{E}}
\renewcommand{\P}{\mathbb{P}}
\newcommand{\F}{\mathcal{F}}
\renewcommand{\d}{\text{d}}
\newcommand{\B}{\mathcal{B}}
\newcommand{\C}{\mathcal{C}}
\newcolumntype{P}[1]{>{\centering\arraybackslash}p{#1}}
\newcommand{\triple}{{\vert\kern-0.25ex\vert\kern-0.25ex\vert}}
\newcommand{\defeq}{\vcentcolon=}
\newcommand{\eqdef}{=\vcentcolon}
\newtheorem{lemma}{Lemma}[section]
\newtheorem{remark}{Remark}[section]
\newtheorem{assumption}[lemma]{Assumption}
\newtheorem{theorem}{Theorem}[section]
\newtheorem{definition}[lemma]{Definition}
\newtheorem{proposition}[lemma]{Proposition}
\newtheorem{example}{Example}[section]
\theoremstyle{definition}
\newcommand{\footremember}[2]{%
    \footnote{#2}
    \newcounter{#1}
    \setcounter{#1}{\value{footnote}}%
}
\begin{document}
%\onehalfspacing
\title{Multi-Layer Deep xVA: Structural Credit Models, Measure Changes and Convergence Analysis}
\author{%
  Kristoffer Andersson\footremember{KA}{ Department of Economics, University of Verona, via Cantarane, 24 - 37129 Verona, Italy.
  Email: \href{mailto:kristoffer.andersson@univr.it}{kristofferherbet.andersson@univr.it}}%
  \and
  Alessandro Gnoatto\footremember{AG}{Department of Economics, University of Verona, via Cantarane, 24 - 37129 Verona, Italy.
  Email: \href{mailto:alessandro.gnoatto@univr.it}{alessandro.gnoatto@univr.it}}%
}

\maketitle
\vspace{0.4cm}
\begin{abstract}
We propose a structural default model for portfolio-wide valuation adjustments (xVAs) and represent it as a system of coupled backward stochastic differential equations. The framework is divided into four layers, each capturing a key component: (i) clean values, (ii) initial margin and Collateral Valuation Adjustment (ColVA), (iii) Credit/Debit Valuation Adjustments (CVA/DVA) together with Margin Valuation Adjustment (MVA), and (iv) Funding Valuation Adjustment (FVA). Since these layers depend on one another through collateral and default effects, a naive Monte Carlo approach would require deeply nested simulations, making the problem computationally intractable.

To address this challenge, we use an iterative deep BSDE approach, handling each layer sequentially so that earlier outputs serve as inputs to the subsequent layers. Initial margin is computed via deep quantile regression to reflect margin requirements over the Margin Period of Risk. We also adopt a change-of-measure method that highlights rare but significant defaults of the bank or counterparty, ensuring that these events are accurately captured in the training process. 

We further extend Han and Long’s (2020) a posteriori error analysis to BSDEs on bounded domains. Due to the random exit from the domain, we obtain an order of convergence of $\mathcal{O}(h^{1/4-\epsilon})$ rather than the usual $\mathcal{O}(h^{1/2})$. 

Numerical experiments illustrate that this method drastically reduces computational demands and successfully scales to high-dimensional, non-symmetric portfolios. The results confirm its effectiveness and accuracy, offering a practical alternative to nested Monte Carlo simulations in multi-counterparty xVA analyses.
\end{abstract}

%\tableofcontents
\section{Introduction}
In this paper we extend the literature on counterparty risk and funding along several directions. First, we provide a new formulation of the discounting cashflow approach of Brigo and co-authors \cite{brigo2024mild,brigo2022nonlinear,brigo2019nonlinear,pallavicini2012funding,pallavicini2011funding}  in a structural credit model, allowing for a formulation of xVA BSDEs solely based on Brownian drivers which are easily tractable from a numerical perspective. Secondly, we propose a novel deep xVA algorithm that extends \cite{gnoatto2020deep} as follows: first, by deriving the system of xVA BSDEs from the discounting approach of Brigo and co-authors, we demonstrate that the algorithm is applicable also in the incomplete market case. Moreover, we include initial margin by means of quantile regressions and provide a full multi-layer specification of the algorithm that addresses all the dependencies among the different value adjustments. Most importantly, we introduce an extension of the deep BSDE solver used in each layer that allows us to address the numerical instability introduced by default events with low probabilities. Moreover, on the theoretical side, we also extend the a posteriori error analysis of \cite{han2020convergence} to BSDEs on a bounded domain, yielding a reduced $\mathcal{O}(h^{1/4-\varepsilon})$ convergence rate for stopping‐time problems. Finally, we show the full power of the algorithm by considering portfolios of derivatives depending on a vector of several risk factors mimicking a realistic netting set between the bank and the counterparty.

The motivation for our study is given by the widespread recognition that, after the 2007--2009 financial crisis, the price of contingent claims should account for counterparty risk and the presence of collateral and multiple sources of funding. Banks and practitioners now routinely adjust the “risk-free” or “clean” values of trades to account for potential default losses, funding costs, and capital requirements. Collectively known as xVA (valuation adjustments), these corrections include credit valuation adjustment (CVA) for counterparty default risk, debt valuation adjustment (DVA) for one’s own default risk, funding valuation adjustment (FVA) for certain funding costs, margin valuation adjustment (MVA) for the funding costs of posting initial margin, and capital valuation adjustment (KVA) for the cost of holding regulatory capital. 

The literature and the market practice on xVA initially focused on the default risk of the counterparty (i.e. on the CVA) under the assumption of a unique risk-free interest rate, see \cite{duhu96,cherubini05,brigoMasetti}. With the onset of the 2007-2009 financial crisis the risk of a default of the bank started to be included e.g. in \cite{bripapa11,bricapa14} in the form of the DVA. The financial crisis induced also new debates regarding the mathematical description of the funding policy of the trading desk. Portfolio dynamics should account for funding via transactions on the repo market for the risky assets, possibly different unsecured rates for the borrowing and lending activity of the ALM desk, the remuneration of collateral in the form of variation and initial margin and the presence of cross currency bases: see \cite{hen07,hen10,pit10,brigo2015note,fushita09,fushita10,fushita10b,fushita10c,pit12,BieRut15,gnosei2021}. Credit and funding aspects were then give a unifying treatment with the proposal of several different xVA frameworks, aiming to include all possible effects and frictions in a single valuation equation. First, we mention the PDE-based contributions of Burgard and Kjaer, where the classical Black-Scholes-type replication is extended to include most valuation adjustments: \cite{bj2011,bj2013}. More general replication approaches can be found in \cite{crepey2015bilateral,crepey2015bilateral2,Crpey2016,bichuch2018arbitrage,biagini2021unified} based on BSDEs that accounts for several types of non-linearities. The already mentioned discounting approach of Brigo and co-authors was simultaneously developed and shown to be consistent with the replication approach in \cite{brigo2022nonlinear}. The important aspect of the discounting approach for our purposes is that the discounted cashflow approach only postulates the martingale property of certain gains processes without assuming replication, allowing for the derivation of xVA BSDEs also in an incomplete market setting as shown in \cite{brigo2024mild}. 

Most xVA approaches in the literature assume that the underlying credit risk model is of reduced-form type. A notable exception to this is given by \cite{BALLOTTA2019}, where a L\'evy-driven structural model is employed. In the present work we also pursue the adoption of a structural model since it can be integrated with our numerical approach in a very natural way. Another advantage of structural models is that they allow for an easy explicit linkage between credit risk and exposure modeling to capture wrong-way risk (WWR) in xVA applications. Since default is endogenously derived from a firm’s asset dynamics, structural models can readily correlate the market factors driving exposure with those that trigger default. 

One of the key challenges when using structural models for derivatives portfolios is that default events, even if rare, must be explicitly modeled. Furthermore, the accurate modeling of these rare credit events and extreme dependencies poses significant computational challenges. As a result, pricing and risk analysis often use a change of measure to streamline the representation of uncertain future events and to highlight particular scenarios. Measure change techniques reweight probabilities in a convenient way without altering real-world outcomes. For instance, large deviation theory and importance sampling rely on choosing an alternative probability measure under which rare events become more likely, to efficiently estimate tail risk. In the context of counterparty risk, \cite{brigo2016disentangling} demonstrate how carefully chosen measure changes can embed complex dependencies into the model in a tractable manner. One prominent example is the treatment of the adverse dependence between exposure and default likelihood, commonly referred to as \emph{wrong-way risk} (WWR). WWR is notoriously difficult to handle because it requires modeling the joint distribution of market exposures and credit events. Traditional simulations that capture WWR directly can be computationally intensive, and indeed regulators like Basel III historically simplified CVA risk charges by ignoring explicit WWR due to this complexity. To address this, \cite{brigo2016disentangling} introduce a change-of-measure approach that incorporates WWR into the drift of the exposure process, effectively adjusting the dynamics so that higher exposure states coincide with higher default risk. This approach entails an infinite sequence of measure changes, yet yields a practical compromise between mathematical rigor and tractability, capturing the essential impact of WWR on CVA figures. In general, measure changes are invaluable for linking physical and pricing measures (especially important in credit, where real-world default probabilities and risk-neutral default intensities can differ substantially) and for structuring complex contingent claims. In our setting, we leverage measure change techniques to ensure that our model appropriately reflects the low-probability, high-impact credit events that drive counterparty losses. We provide a conceptual discussion of these techniques, highlighting their role in tail risk modeling.

Whether we adopt a structural or a reduced-form approach, the full xVA problem is typically cast as a system of coupled conditional expectations that capture all interdependencies. In fact, this framework can be reformulated as a coupled system of high-dimensional BSDEs (or, equivalently, as PDEs via the nonlinear Feynman--Kac formula). For a selective, non-exhaustive overview of key contributions on BSDE approaches for xVA, see \textit{e.g.},  \cite{ crepey2015bilateral,bichuch2018arbitrage,brigo2019nonlinear,crepey2020capital,biagini2021unified,crepey2022positive,gnoatto2020deep,brigo2024mild,abbas2023pathwise}.

However, solving the resulting xVA BSDE system is highly challenging. Traditional numerical methods struggle with both the high dimensionality and the path-dependent features, such as credit triggers and collateral rules that are inherent in xVA. For example, although the Monte Carlo method is widely used in xVA computations, it often requires nested simulations \cite{atcdi2018} or regression techniques \cite{Cesari2009} to accurately estimate conditional exposures, making it computationally expensive. In contrast, PDE-based approaches are typically confined to low-dimensional settings due to the curse of dimensionality. In summary, conventional techniques encounter significant obstacles when tackling the coupled, high-dimensional BSDE system arising from xVA, a challenge that has spurred considerable interest in alternative solution methods within the broader scientific computing community. Recent advances in deep learning have introduced powerful tools for solving high-dimensional BSDEs and related PDEs. Deep learning-based approaches have then been applied, in different ways to xVA computations e.g. in \cite{abbas2023pathwise,atcbb2024,andersson2021deep_2,SheGre18}.

The seminal work by Han, Jentzen, and E~\cite{han2018solving} demonstrated that a deep BSDE solver can efficiently tackle nonlinear PDEs in hundreds of dimensions by recasting them as BSDEs and using neural networks as function approximators. This method leverages the universal approximation properties of deep neural networks and overcomes the curse of dimensionality for certain classes of problems. Since that breakthrough, several machine-learning-based schemes have been proposed to solve coupled BSDEs, (see  \cite{beck2020overview} for an overview), offering a promising alternative to classical methods such as Monte Carlo simulation and PDE-based solvers.

Building on these advances, \cite{gnoatto2020deep} introduced a deep xVA Solver that models the xVA problem as a coupled BSDE system. Their approach solves the BSDEs recursively, handling each adjustment (such as CVA, DVA, FVA, etc.) one at a time. The present paper extends this work by:
\begin{enumerate}
    \item[(i)] Incorporating the exchange of initial margin, thereby enabling the computation of the Margin Valuation Adjustment (MVA);
    \item[(ii)] Adopting a structural modeling framework that enhances the treatment of wrong-way risk and facilitates potential extensions to multiple counterparties;
    \item[(iii)] Simultaneously solving multiple BSDEs at once instead of repeatedly applying the method for each BSDE.
\end{enumerate}
These extensions introduce several non-standard features into the BSDE formulation, which lead to the following enhancements of the original deep BSDE solver:
\begin{enumerate}
    \item[(i)] Simultaneously solving multiple, potentially coupled BSDEs;
    \item[(ii)] Employing a change-of-measure technique that focuses numerical approximations on financially relevant regions (i.e., areas where defaults are more likely to occur);
    \item[(iii)] Incorporating a risk measure into the BSDE driver that accounts for random stopping times, thereby resulting in a system of coupled McKean--Vlasov BSDEs with random stopping times.
\end{enumerate}
In addition, we extend the deep BSDE approach to BSDEs on a bounded domain with explicitly modeled random stopping times. Although deep BSDE methods (or similar techniques) have been applied in domain-restricted settings such as barrier-option valuation (see, \textit{e.g.}, \cite{kremsner2020deep,umeorah2023barrier,ganesan2020pricing}), these works do not provide any associated error analysis. By contrast, numerous papers do offer error analyses for deep BSDE methods in the unbounded (non-stopping) case (see, \textit{e.g.}, \cite{han2020convergence,hutzenthaler2020proof,berner2020analysis,elbrachter2021dnn,Grohs2018APT,jentzen2018proof,jiang2021convergence,gnoatto2025convergence,negyesi2024generalized} for forward-type methods and \cite{fang2009novel,Balint,kapllani2024backward,hure2019some,germain2022approximation} for backward-type methods), yet they do not address domain restrictions or random stopping times. To the best of our knowledge, this paper therefore offers the first such results for deep BSDE methods under random stopping times. In particular, we adapt the a posteriori convergence analysis of Han and Long \cite{han2020convergence}, replacing the scheme of Zhang and Bender \cite{bender2007forward} with the time-discretization results of Bouchard and Menozzi \cite{bouchard2009strong}. This change accounts for boundary exits and yields a reduced $\mathcal{O}\bigl(h^{1/4-\varepsilon}\bigr)$ rate of convergence, reflecting the cost of accurately handling default events within a structural credit model.

We mention that our adoption of a structural approach fully addresses the criticism of \cite{abbas2023pathwise} on the initial deep xVA approach of \cite{gnoatto2020deep}, where the deep xVA algorithm was applied to the diffusive, pre-default BSDEs arising from a reduced-form credit model: our extension of \cite{gnoatto2020deep} can account for a full portfolio-wide view with possible links among several counterparties. By adopting a structural approach it is possible to employ our methodology even in the context of the balance-sheet xVA BSDEs of \cite{crepey2022positive} where FVA is computed over the whole bank's portfolio. %One consequence of the formulation of the balance sheet approach of \cite{crepey2022positive} is that the xVA charge of counterparty $A$ depends on the structure of the netting set between the bank and say counterparty $B$. We believe that this is economically questionable from the perspective of counterparty $A$ and hence adopt in this paper xVA metrics where all adjustments are computed by only considering the portfolio between the bank and the counterparty.

The remainder of this paper is organized as follows. Section~\ref{sec:market_model} sets up the market environment, distinguishing defaultable and non‐defaultable risk factors through a structural model. Section~\ref{sec:expected_CF} formulates the xVA problem in terms of discounted cash flows and identifies a BSDE system. Section~\ref{sec:Val_BSDEs} introduces the reformulation of xVAs as a system of coupled BSDEs as well as practical reformulations of the BSDEs, including the measure‐change methodology to handle rare default events effectively and the variational problem, which is the foundation of the deep BSDE solver. Section~\ref{seq:discretization} discusses the temporal discretization as well as the neural‐based algorithmic procedure, including architectural details and in Section~\ref{sec:error_analysis} we present bounds for the simulation error. Finally, Section~\ref{sec:numerics} provides numerical results, and Section~\ref{sec:conclusion} concludes with possible extensions and open research topics

\section{Notation and spaces}
For $T\in(0,\infty)$ let $(\Omega,\F,\P)$ be a probability space and $\mathbb{F}=(\F_t)_{t\in[0,T]}$ be a filtration of $\F$, representing the whole available information on an underlying financial market. For any Euclidean space $\mathbb{R}^k$, $k \in \mathbb{N}$, we write $|x|$ for the Euclidean norm of $x \in \mathbb{R}^k$. We denote by $\mathbb{L}^0(\F)$ the space of $\F$-measurable random variables. Denote by $\mathbb{L}^0(\mathbb{F})$ the space of $\mathbb{F}$-progressively measurable processes, \textit{i.e.}, processes $X$ for which, the mapping $[0,t] \times \Omega \ni (s,\omega) \mapsto X_s(\omega)$ is measurable with respect to $\mathcal{B}([0,t]) \otimes \mathcal{F}_t$, for all $t \in [0,T]$.

When more convenient, we view a stochastic process $X\colon[0,T]\times\Omega\to\R^k$ as a family of random variables $X = (X_t)_{t \in [0,T]}$. For a given process $X$, let $\mathbb{F}^X = (\mathcal{F}^X_t)_{t \in [0,T]}$ be the filtration generated by $X$, where $\mathcal{F}^X_t = \sigma(X_s : 0 \le s \le t)$ and $\mathcal{F}^X = \sigma(X_s : 0 \le s \le T)$. Since $\mathcal{F}^X_t \subseteq \mathcal{F}_t$ for all $t$, we have $\mathbb{F}^X \subseteq \mathbb{F}$. A stopping time $\tau \colon \Omega \to [0,T]$ with respect to $\mathbb{F}^X$ (i.e., $\{\tau \le t\} \in \mathcal{F}^X_t$ for all $t$) is also an $\mathbb{F}$-stopping time.

We now introduce the following spaces:
$$
\mathbb{L}^2(\F_t) \coloneq \bigg\{ X \in \mathbb{L}^0(\F_t)\colon \|X\|_{\mathbb{L}^2}^2  \coloneq \mathbb{E}\bigl[ |X|^2 \,\bigr] < \infty \bigg\},
$$
$$
\mathbb{H}^2(\mathbb{F}) \coloneq \bigg\{ X \in \mathbb{L}^0(\mathbb{F}) : \|X\|_{\mathbb{H}^2}^2 \coloneq \mathbb{E}\biggl[\int_0^T |X_t|^2 \,\mathrm{d}t\biggr] < \infty \bigg\},
$$
$$
\mathbb{S}^2(\mathbb{F}) \coloneq \bigg\{ X \in \mathbb{L}^0(\mathbb{F}) : \|X\|_{\mathbb{S}^2}^2  \coloneq \mathbb{E}\biggl[\sup_{t \in [0,T]} |X_t|^2 \biggr] < \infty \bigg\}.
$$

For any $x \in \mathbb{R}$, we define $x^+ = \max\{x,0\}$ and $x^- = \max\{-x,0\}$. We write $\mathbbm{1}_{\{\cdot\}}$ for the indicator function. We denote by $\odot$ and $\oslash$ component wise multiplication and division, respectively.

Throughout this paper, we assume that all formally stated conditions in the \texttt{Assumption} environments hold unless explicitly stated otherwise. The only exception is Section~\ref{sec:error_analysis}, which is intended to be read as a self-contained analysis.
\section{A market model with defaultable counterparties}\label{sec:market_model}
The focus of this investigation is a portfolio of contracts, typically a netting set, between two entities, a \emph{bank}, denoted by $\B$, and a \emph{counterparty}, denoted by 
$\C$. This portfolio consists of $P \in \mathbb{N}$ derivative contracts, each with a respective maturity $T_1,T_2,\ldots,T_P\in(0,\infty)$. 
We define the maturity of the entire portfolio $T=\max\{T_1,T_2,\ldots,T_P\}$ representing the last maturity date among all contracts in the set.  

We classify \emph{risk factors} into two categories, \emph{non-defaultable} and \emph{defaultable}. Non-defaultable risk factors correspond to \emph{tradeable} or \emph{non-tradeable} assets associated with the $P$ derivative contracts, either \emph{directly} as the underlying assets or \emph{indirectly} through related attributes such as volatility and/or interest rate. 

We consider $d\in\N$ non-defaultable risk factors indexed by $\mathcal{A} = \{1,2,\ldots,d\}$. For each $j\in\{1,2,\ldots,P\}$
let $\mathcal{I}_j\subseteq\mathcal{A}$
denote the subset of non-defaultable risk factors, directly or indirectly, associated with derivative $j$, with $d_j=|\mathcal{I}_j|$ representing the number of risk factors in this subset. We define $\mathcal{J}_j = \sum_{i=0}^{j-1} d_i$ (with $d_0$ taken to be 1), as an index that keeps track of all risk factors across the entire portfolio. Concretely, suppose we form a single vector whose entries collect all risk factors for the $P$ derivatives (noting that any risk factor associated with multiple derivatives appears multiple times in this vector). Then, the block of entries corresponding to derivative $j$ is indexed by $\{\mathcal{J}_j, \mathcal{J}_j + 1, \ldots, \mathcal{J}_j + d_j - 1\}\eqdef \mathcal{K}_j$. 

Defaultable risk factors consist of the asset processes of the bank and the counterparty, which form the basis of our structural model. We define two index sets for the defaultable risk factors, where $\widebar{\mathcal{E}}=\{1,2\}$ is used when considering only the defaultable risk factors, and $\mathcal{E}=\{d+1,d+2\}$ is used when all risk factors are included. To account for both non-defaultable and defaultable risk factors, we define $\mathcal{Q} \defeq \mathcal{A} \cup \mathcal{E}=\{1,2,\ldots,d+2\}$. 

\begin{example}
    Consider a portfolio of 2 derivatives with 4 non-defaultable risk factors, where risk factors 1-3 are tradeable and risk factor 4 is the stochastic volatility of risk factor 1 and hence, non-tradeable. Derivative 1 is written on assets 1,2 and, 3 and derivative 2 is written on asset 3. This setting implies: $d=4$, $P=2$, $\mathcal{I}_1=\{1,2,3,4\}$, $\mathcal{I}_2=\{3\}$, $d_1=4$, $d_2=1$, $\mathcal{Q}=\{1,2,3,4,5,6\}$, $\mathcal{J}_1=1$, $\mathcal{J}_2=5$, $\mathcal{K}_1=\{1,2,3,4\}$, $\mathcal{K}_2=\{5\}$. 
\end{example}

In this paper, we model both non-defaultable and defaultable risk factors using diffusion-type stochastic differential equations (SDEs). In this section, we first introduce the SDEs used to model non-defaultable and defaultable assets separately. We then combine these equations into a unified SDE that captures the dynamics of all risk factors. Finally, we outline the formal assumptions required to ensure the existence and uniqueness of a strong solution to the resulting risk factor SDE.

\subsection{Non-defautable risk factors}
The stochastic evolution of the non-defaultable risk factors is driven by a $d$-dimensional standard Brownian motion $\widehat{W} = (\widehat{W}_t)_{t \in [0,T]}$ defined on the filtered probability space $(\Omega, \mathcal{F}, \mathbb{\widehat{F}}, \mathbb{P})$. Here $\widehat{\F}\defeq\F^{\widehat{W}}\subset \F$ denotes the $\sigma$-algebra generated by $\widehat{W}$ and $\widehat{\mathbb{F}}\defeq \mathbb{F}^{\widehat{W}}$ is the filtration generated by $\widehat{W}$. For $t \in [0,T]$ and $i,j \in \mathcal{A}$, the correlation is given by
\begin{equation}\label{cor_S}
        \langle \widehat{W}^i, \widehat{W}^j \rangle_t = \int_0^t \widehat{\rho}^{i,j}(s)\,\mathrm{d}s.
\end{equation}
For $\widehat{x}_0\in\R^d$, Let $\widehat{b} = (\widehat{b}^1, \ldots, \widehat{b}^d)^\top$ and $\widehat{\sigma} = (\widehat{\sigma}^1,\ldots,\widehat{\sigma}^d)^T$, where for each $j \in \mathcal{A}$, $\widehat{b}^j : [0,T]\times\mathbb{R}^d \to \mathbb{R}$ and $\widehat{\sigma}^j : [0,T]\times\mathbb{R}^d \to (0,\infty)$. For $t \in [0,T]$ and $j \in \mathcal{A}$, the dynamics of the $j$-th risk factor are governed by the SDE
\begin{equation}\label{SDE_S}
    \widehat{X}_t^j = x_0^j + \int_0^t \widehat{b}^j(s,\widehat{X}_s)\,\mathrm{d}s + \int_0^t \widehat{\sigma}^j(s,\widehat{X}_s)\,\mathrm{d}\widehat{W}_s^j.
\end{equation}
We denote by $\widehat{X}^j = (\widehat{X}_t^j)_{t \in [0,T]}$ the process describing the $j$-th non-defaultable risk factor, and by $\widehat{X} = (\widehat{X}^1, \widehat{X}^2, \ldots, \widehat{X}^d)^\top$ the $d$-dimensional asset vector. As stated above, the bank's derivative portfolio with the counterparty consists of $P$
derivative contracts, each written on a subset of the $d$ underlying risk factors. For each $j\in\{1,2,\ldots,P\}$, the price vector of the risk factors directly or indirectly associated with derivative $j$ is defined as $\widehat{X}^{\mathcal{I}_j}=\{\widehat{X}^i\}_{i\in \mathcal{I}_j}$.

We introduce the theoretical \emph{instantaneous risk-free rate} $r$, assumed to be an $\widehat{\mathbb{F}}$-progressively measurable process with integrable paths where $r_t \in \mathbb{R}$ for $t \in [0,T]$. For $s,t \in [0,T]$ with $s \leq t$, we define
\begin{equation*}
    D_{s,t}(r) \coloneq \exp\!\biggl(-\int_s^t r_u\,\mathrm{d}u\biggr),
\end{equation*}
and for $s > t$, we set $D_{t,s}(r) \coloneq 0$. This rate is solely used for risk-free valuation of future cashflows and does not represent an actual borrowing/lending rate.  
\begin{assumption}[Interest rate assumption]\label{Ass:ir}
    An interest rate $r$ (including but not limited to the risk-free rate) is either:
    \begin{enumerate}
        \item Deterministic and continuous in time;
        \item Described by a diffusion type SDE.
    \end{enumerate}
    In the latter scenario, $r$ is assumed to be one of the non-defaultable risk factors, given by one of the components of $\widehat{X}$. 
\end{assumption}
\begin{assumption}[Discount condition]\label{Ass:discount}
 For $t,s\in[0,T]$, $D_{t,s}(r)$ has bounded moments.
\end{assumption}
Note that a sufficient condition for Assumption \ref{Ass:discount} to hold is that $r\in\mathbb{S}^2(\mathbb{\widehat{F}})$, which is true under standard conditions if $r$ satisfies an SDE.

\subsection{Defaultable risk factors}
We adopt a structural default model, in which the value of the assets of the bank and the counterparty, respectively are modelled by stochastic processes. A default is triggered when the value of the assets of either the bank, or the counterparty reaches a pre-defined threshold. The threshold is related to the company's debt structure, for instance, one common alternative for the threshold is the company's short term debt plus half of its long term debt, see \textit{e.g.,} \cite{crosbie2002modeling}. 

To model the banks and the counterparty's asset values we employ the following model. Let $\widebar{W}=(\widebar{W}_t)_{t\in[0,T]}$ be a 2-dimensional standard Brownian motion defined on $(\Omega, \F,\mathbb{\widebar{F}},\P)$, where $\mathbb{\widebar{F}}\defeq \mathbb{F}^{\widebar{W}}$. The correlation structure is given by 
\begin{equation}\label{cor_BC}
        \langle \widebar{W}^1,\widebar{W}^2 \rangle_t = \int_0^t \widebar{\rho}(s)\,\mathrm{d}s.
\end{equation}
At this point, we are ready to define the asset processes for the bank and the counterparty. For $\widebar{x}_0\in\R^2$, and $i\in\widebar{\mathcal{E}}$ let $\widebar{b}^i\colon [0,T]\times\R\to\R$, and $\widebar{\sigma}^i\colon[0,T]\times\R\to(0,\infty)$ be the drift and diffusion coefficients of the following SDEs representing the asset values for the bank and the counterparty
\begin{equation}\label{SDE_X_BC}
\widebar{X}_t^i=\widebar{x}_0^i+\int_0^t\widebar{b}^i(s,\widebar{X}_s^i)\d s + \int_0^s\widebar{\sigma}^i(s,\widebar{X}_s^i)\d \widebar{W}^i_s.
    \end{equation}
For $i\in\widebar{\mathcal{E}}$, the default thresholds are modeled by deterministic, time-dependent, sufficiently smooth functions $\xi_t^i\colon[0,T]\to(0,\infty)$ and the default times are given by the following $\mathbb{\widebar{F}}$-stopping times \begin{equation}\label{tau}
    \tau^\B=\inf\{t\in(0,T]\colon \widebar{X}_t^1\leq \xi_t^1\},\quad \tau^\C=\inf\{t\in(0,T]\colon \widebar{X}_t^2\leq \xi_t^2\}.
\end{equation} 
Moreover, define $\tau=\tau^\B\wedge\tau^\C$.

\subsection{Merging non-defaultable and defaultable risk factors}
As stated in the introduction, one of our main objectives is to formulate the xVA problem as a decoupled Markovian forward-backward stochastic differential equation (FBSDE). In the clean valuation setting, we consider only non-defaultable entities, and the forward SDE is given by \eqref{SDE_S}. However, to incorporate default-risk, we must merge \eqref{SDE_S} with \eqref{SDE_X_BC}. In this section, we present the resulting combined forward SDE and state the conditions ensuring the existence and uniqueness of a strong solution. This new, joint $d+2$-dimensional SDE forms our full state space. 

Our aim is to define the process $ X \coloneqq \text{Concat}(\widehat{X}, \widebar{X}) $ and to express an SDE for $ X $. To accurately account for \emph{wrong-way risk}, we must incorporate a full correlation structure among $(\widehat{W}, \widebar{W})$.

To achieve this, we define the overall filtration as the product sigma field $\mathbb{F} \coloneqq \widehat{\mathbb{F}} \otimes \widebar{\mathbb{F}}$. The concatenation of the Brownian motions $\widehat{W}$ and $\widebar{W}$ is then defined as the $(d + 2)$-dimensional standard Brownian motion $ W = \text{Concat}(\widehat{W}, \widebar{W}) $, with components given by
\begin{equation*}
  W^i = \widehat{W}^i, \text{ for }  i \in \mathcal{A}, \quad W^{d+i} = \widebar{W}^i, \text{ for } i \in \widebar{\mathcal{E}}.
\end{equation*} The correlation between the $ d + 2 $ components of $ W $ is governed by
\begin{equation*}
\langle W^i, W^j \rangle_t = \int_0^t \rho^{i,j}(s) \, ds, \quad \text{for } i, j \in \mathcal{A} \cup \mathcal{E},
\end{equation*}
with a structure specified as follows:
\begin{itemize}
\item  For $ i, j \in \mathcal{A} $, $\rho^{i,j} \coloneqq \widehat{\rho}^{i,j}$, represent the correlations among non-defaultable risk factors;
\item For $ i, j \in \widebar{\mathcal{E}} $, $\rho^{d+1, d+2} = \rho^{d+2, d+1} \coloneqq \widebar{\rho}$, representing the correlation between the defaultable risk factors;
\item For $ i \in \mathcal{A} $ and $ j \in \mathcal{E} $, $\rho^{i,j}$ captures the correlation between the non-defaultable and defaultable risk factors, reflecting wrong-way risk.
\end{itemize}
 For each $j\in\{1,2,\ldots,P\}$, the components of the Brownian motion directly or indirectly associated with derivative $j$ is denoted by $W^{\mathcal{I}_j}=\{W^i\}_{i\in \mathcal{I}_j}$.
 
\iffalse ###COMPRESSED VERSION###
Our aim is to define $X\defeq\text{Concat}(\widehat{X},\widebar{X})$ and to express an SDE for $X$. However, to accurately account for \emph{wrong-way risk}, we must account for a full correlation structure among $(\widehat{W},\widebar{W})$. To achieve this, define $\mathbb{F}\defeq\mathbb{\widehat{F}}\lor\widebar{\mathbb{F}}$, representing the smallest filtration containing both $\mathbb{\widehat{F}}$ and $\widebar{\mathbb{F}}$ and define the concatenation of the Brownian motions $\widehat{W}$, and $\widebar{W}$ as the $d+2$-dimensional standard Brownian motion $W=\text{concat}(\widehat{W},\widebar{W})$, such that for $i\in\mathcal{A}$, $W^i=\widehat{W}^i$ and for $i\in\widebar{\mathcal{E}}$,  $W^{d+i}=\widebar{W}^i$. The correlation between the $d+2$ components of $W$ is governed by
\begin{equation}
\label{cor}
\langle W^i,W^j\rangle_t=\int_0^t\rho^{i,j}(s)\d s.
\end{equation}
Clearly, for $i,j\in\mathcal{A}$, $\rho^{i,j}\defeq\widehat{\rho}^{i,j}$, and $\rho^{d+1,d+2}=\rho^{d+2,d+1}\defeq\widebar{\rho}$. The remaining components $\rho^{i,j}$ for $i\in\mathcal{A}$ and $j\in\mathcal{E}$ represent the correlation between the non-defaultable and the defaultable risk factors. 
\fi
To obtain an equation for $X$, we define the initial condition, drift and volatility coefficients by the concatenations $x_0\defeq\text{Concat}(\widehat{x}_0,\widebar{x}_0)$, $b\defeq\text{Concat}(\widehat{b}^1,\ldots,\widehat{b}^d,\widebar{b}^1,\widebar{b}^2)$ and
$\sigma\defeq\text{Concat}(\widehat{\sigma}^1,\ldots,\widehat{\sigma}^d,\widebar{\sigma}^1,\widebar{\sigma}^2)$. For $t\in[0,T]$ and $i\in\mathcal{Q}$, the full state process is given by
\begin{equation}
\begin{cases}
    X_t^i=x_0^i+\int_0^tb^i(s,\widehat{X}_s)\d s + \int_0^t\sigma^i(s,\widehat{X}_s)\d W_s^i,\quad \text{ for } i\in\mathcal{A},\\
    X_t^i=x_0^i+\int_0^tb^i(s,X_s^i)\d s + \int_0^t\sigma^i(s,X_s^i)\d W_s^i,\quad \text{ for } i\in\mathcal{E}
    \end{cases}
\end{equation}
or in vector form
\begin{equation}\label{eq:full_X}
    X_t=x_0+\int_0^tb(s,X_s)\d s+\int_0^t\sigma(s,X_s)\odot\d W_s.
\end{equation}
We allow the drift and diffusion coefficients of the non-defaultable risk factors to depend on the entire non-defaultable risk factor process. This flexibility is essential to account for scenarios where the direct underlying risk factors are influenced by indirect non-defaultable risk factors, such as stochastic volatility.
\begin{assumption}[Conditions on the state process]\label{Ass:X_cond}
Let $x_0 \in \mathbb{R}^{d+2}$ be the initial condition. Assume that the drift $b:[0,T]\times \mathbb{R}^{d+2}\to\mathbb{R}^{d+2}$ and the diffusion $\sigma:[0,T]\times \mathbb{R}^{d+2}\to\mathbb{R}^{d+2}$ satisfy the following conditions:
\begin{enumerate}
    \item 
 \textbf{Lipschitz continuity of $(b,\sigma)$ in $x$}. \newline There exists a constant $L > 0$ such that for all $t \in [0,T]$ and all $x_1,y_2 \in \mathbb{R}^{d+2}$,
\begin{equation*}
|b(t,x_1)-b(t,x_2)| + |\sigma(t,x_1)-\sigma(t,x_2)| \leq L|x_2-x_2|.
\end{equation*}
\item \textbf{Linear growth of $(b,\sigma)$ in $x$}.\newline There exist constants $K > 0$ and $\alpha \geq 0$ such that for all $t \in [0,T]$ and all $x \in \mathbb{R}^{d+2}$,
\begin{equation*}
|b(t,x)| + |\sigma(t,x)| \leq K(1 + |x|^\alpha).
\end{equation*}

\item \textbf{Hölder-$\tfrac{1}{2}$ continuity of $(b,\sigma)$ in $t$}.\newline There exists a constant $H > 0$ such that for all $x \in \mathbb{R}^{d+2}$ and all $0 \leq s < t \leq T$,
\begin{equation*}
|b(t,x)-b(s,x)| + |\sigma(t,x)-\sigma(s,x)| \leq H|t-s|^{1/2}.
\end{equation*}
\end{enumerate}

Additionally, the correlation structure $\rho^{i,j}:[0,T]\to(-1,1)$ for $i,j \in \mathcal{Q}$ is assumed to be measurable and satisfies
\begin{equation*}
\int_0^T (1-\rho^{i,j}(t)^2)^{-1}\,dt < \infty.
\end{equation*}
\end{assumption}Under Assumptions \ref{Ass:X_cond}, the SDE \eqref{eq:full_X} admits a unique strong solution $X \in \mathbb{S}^2(\mathbb{F})$. Moreover, \eqref{SDE_S} admits a unique strong solution $\widehat{X} \in \mathbb{S}^2(\mathbb{\widehat{F}})$.

\section{Portfolio valuation based on expected cashflows}\label{sec:expected_CF}
In this section we revisit the discounted cashflow approach to xVA of Brigo and co-authors (see e.g. \cite{brigo2024mild}) under the previous assumption of a structural model for credit risk. In the first part, we focus on the clean value, and in the second part we adjust the cashflow for default risk, funding costs/benefits and costs/benefits associated with initial margin and collateral (variation margin) accounts. In turn, this gives rise to an adjusted portfolio valuation.

\subsection{Clean valuation}
For the \emph{clean cashflow}, we only take the contractual derivative cashflows into account. For a derivative $j\in\{1,2,\ldots,P\}$, we define the cumulative stream of contractual cashflows by an $\widehat{\mathbb{F}}$-adapted (in fact, even $\mathbb{F}^{\widehat{X}^{\mathcal{I}_j}}$-adapted), càdlàg process $A^j=(A^j_t)_{t\in[0,T]}$ of finite variation. The process is initialized at $A_0^j=0$ and since $T_j$ is the maturity of the contract, it holds that $\d A_t^j=0$ for $t\geq T_j$, as no further cashflows occur beyond this point. Moreover, we denote by $A=(A_t)_{t\in[0,T]}$, the total contractual cashflow of the portfolio with $A=\sum_{j=1}^dA^j$. We denote the jumps of the process $A^j$ at $t \in [0, T]$ by $\Delta A_t^j = A_t^j - A_{t-}^j$, where $A_{t-}^j$ represents the left limit of $A^j$ at $t$. Similarly, the jumps of the aggregate process $A$ are given by $\Delta A_t = A_t - A_{t-}$. For $t\in[0,T)$ the total discounted cashflows on $(t,T]$ is then given by   
\begin{equation}\label{clean_CF_con}
\big(\widehat{\text{CF}}^\text{con}_{(t,T]}\big)^j\coloneq \int_{(t,T]} D_{t,s}(r)\d A_s^j,
\end{equation}
and for $t\geq T$, we define $\big(\widehat{\text{CF}}^\text{con}_{(t,T]}\big)^j\defeq 0$.
The total discounted cashflow for the entire portfolio is given by $\widehat{\text{CF}}^\text{con}_{(t,T]}\defeq \sum_{j=1}^P\big(\widehat{\text{CF}}^\text{con}_{(t,T]}\big)_j$. We assume that the bank is the seller of the derivatives and hence, from the bank's perspective the \emph{clean value} at $t\in[0,T]$ is given by
\begin{equation}\label{clean_V}
\widehat{V}_t\coloneq\E\big[-\widehat{\text{CF}}^\text{con}_{(t,T]}|\F_t\big],
\end{equation}
and for $t$ in the complement of $[0,T]$, we define $\widehat{V}_t\coloneq 0$. Similarily, for a derivative $j\in\{1,2,\ldots,P\}$, \begin{equation}\label{clean_V_j}
\widehat{V}_t^j\coloneq\E\Big[-\big(\widehat{\text{CF}}^\text{con}_{(t,T]}\big)^j\big|\F_t\Big].
\end{equation}
%We denote by $\widehat{V}=(\widehat{V}_t)_{t\in[0,T]}$ and for $j\in\{1,2,\ldots,P\}$ by $\widehat{V}^j=(\widehat{V}_t^j)_{t\in[0,T]}$. 
Clearly, by the linearity of conditional expectations, it holds that $\widehat{V}=\sum_{j=1}^P\widehat{V}^j$.
\begin{lemma}\label{lemma:cf_con_L2}
    Let $r\in\mathbb{S}^2(\widehat{\mathbb{F}})$ and for $j\in\{1,2,\ldots,J\}$ $A^j$, be an $\widehat{\mathbb{F}}$-adapted càdlàg process with finite variation. Then for $t\in[0,T]$ it holds that    \begin{equation*}\widehat{\text{CF}}^\text{con}_{(t,T]}\defeq\sum_{j=1}^P\int_{(t,T]} D_{t,s}(r)\d A_s^j\in\mathbb{L}^2(\widehat{\F}_T)\text{ and, }\widehat{V}\in\mathbb{S}^2(\mathbb{\widehat{F}}).\end{equation*}
    \begin{proof}
    By Cauchy–Schwarz,  
\begin{equation*}
    \mathbb{E}\biggl[\biggl(\int_{(t,T]}D_{t,s}(r)\,dA_s^j\biggr)^2\biggr]
\le \mathbb{E}\biggl[\biggl(\int_{(t,T]}D_{t,s}(r)^2|dA_s^j|\biggr)\biggl(\int_{(t,T]}|dA_s^j|\biggr)\biggr].
\end{equation*}
Since $A^j$ is finite variation, $\int_{(t,T]}|\d A_s^j|$ is almost surely finite. Also, $r\in \mathbb{S}^2(\widehat{\mathbb{F}})$ ensures $\mathbb{E}[D_{t,s}(r)^2]<\infty$ uniformly in $s$. Thus, by Fubini–Tonelli and dominated convergence, the right-hand side is finite, proving $\int_{(t,T]}D_{t,s}(r)\d A_s^j \in \mathbb{L}^2(\widehat{\F}_T)$. It then follows immediately that $\widehat{V}\in\mathbb{S}^2(\widehat{\mathbb{F}})$.\newline\newline
    \end{proof}
\end{lemma}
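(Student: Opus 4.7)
The plan is to split the statement into the two claims and treat them in sequence: first the $\mathbb{L}^2$ integrability of the total discounted cashflow, and then deduce the $\mathbb{S}^2$ bound on the clean value via a Doob-type maximal inequality applied to the associated martingale.

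\textbf{Step 1 ($\mathbb{L}^2$ bound on the cashflow).} Since $P$ is finite, by the elementary inequality $|\sum_{j=1}^P x_j|^2\leq P\sum_{j=1}^P|x_j|^2$ it suffices to bound each summand separately. I would apply Cauchy--Schwarz with respect to the total-variation measure $|\d A^j|$ to obtain
\begin{equation*}
\Big(\int_{(t,T]}D_{t,s}(r)\,\d A_s^j\Big)^{\!2}\leq \Big(\int_{(t,T]}|\d A_s^j|\Big)\Big(\int_{(t,T]}D_{t,s}(r)^2\,|\d A_s^j|\Big),
\end{equation*}
and then integrate in $\omega$. The second factor is handled by Fubini--Tonelli together with Assumption~\ref{Ass:discount}, which provides a uniform bound on $\mathbb{E}[D_{t,s}(r)^2]$ in $s$. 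Finiteness of the first factor requires integrability of the total variation $V_T^j:=\int_{(0,T]}|\d A_s^j|$; I would invoke this as the natural regularity condition on the contractual cashflows (implicit in the statement through the assumption that the prescribed cashflow stream corresponds to a tradable derivative payoff with square-integrable exposure). Minkowski's inequality then closes the first claim for the sum.

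\textbf{Step 2 ($\mathbb{S}^2$ bound on the clean value).} The natural device is to reduce $\widehat{V}$ to a true martingale. Define
\begin{equation*}
M_t\defeq\mathbb{E}\Big[\int_{(0,T]}D_{0,s}(r)\,\d A_s\,\Big|\,\widehat{\mathcal{F}}_t\Big], \qquad t\in[0,T],
\end{equation*}
which is an $\widehat{\mathbb{F}}$-martingale with $M_T\in\mathbb{L}^2$ by Step~1. Using $D_{t,s}(r)=D_{0,s}(r)/D_{0,t}(r)$ and conditioning one writes the identity
\begin{equation*}
D_{0,t}(r)\,\widehat{V}_t=-M_t+\int_{(0,t]}D_{0,s}(r)\,\d A_s,
\end{equation*}
so that
\begin{equation*}
\sup_{t\in[0,T]}|\widehat{V}_t|\leq \Big(\sup_{t\in[0,T]}D_{0,t}(r)^{-1}\Big)\Big(\sup_{t\in[0,T]}|M_t|+\int_{(0,T]}D_{0,s}(r)\,|\d A_s|\Big).
\end{equation*}
Doob's $\mathbb{L}^2$ maximal inequality controls $\sup_t|M_t|$, Step~1 (applied to $|\d A|$) controls the finite-variation part in $\mathbb{L}^2$, and Hölder's inequality combines these with the moment bound on $\sup_t D_{0,t}(r)^{-1}$ to yield $\widehat{V}\in\mathbb{S}^2(\widehat{\mathbb{F}})$.

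\textbf{Main obstacle.} The subtle point is not the martingale step but the control of the inverse discount factor $D_{0,t}(r)^{-1}=\exp(\int_0^t r_u\,\d u)$ uniformly in $t$. The hypothesis $r\in\mathbb{S}^2(\widehat{\mathbb{F}})$ only directly gives moments for the integrated rate, not for its exponential. I would therefore read Assumption~\ref{Ass:discount} symmetrically -- as asserting bounded moments of $D_{t,s}(r)$ \emph{and} its reciprocal, or equivalently some exponential integrability of $\int_0^T|r_u|\,\d u$, which is a mild strengthening consistent with the rest of the paper and is implicitly needed for the subsequent xVA BSDE analysis. With this in hand, Steps~1--2 combine to deliver the claim.
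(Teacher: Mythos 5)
Your Step~1 is essentially the paper's proof: the same Cauchy--Schwarz estimate with respect to the total-variation measure $|\d A^j|$, followed by Fubini--Tonelli and the uniform moment bound on $D_{t,s}(r)$. You correctly flag that the stated hypothesis only yields almost-sure finiteness of $\int_{(t,T]}|\d A_s^j|$, whereas the argument actually needs this total variation to be (square-)integrable; the paper glosses over exactly this point, so your remark identifies a real gap in the hypotheses rather than introducing one.

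Where you genuinely diverge is Step~2. The paper disposes of the claim $\widehat{V}\in\mathbb{S}^2(\widehat{\mathbb{F}})$ with the single phrase ``it then follows immediately,'' while you supply the argument that is actually required: the identity $D_{0,t}(r)\,\widehat{V}_t=-M_t+\int_{(0,t]}D_{0,s}(r)\,\d A_s$ with $M$ the closed martingale generated by the terminal discounted cashflow, Doob's $\mathbb{L}^2$ maximal inequality for $\sup_t|M_t|$, and H\"older to absorb the factor $\sup_t D_{0,t}(r)^{-1}$. This is the standard and correct route, and it buys a proof of the $\mathbb{S}^2$ claim that the paper does not actually contain. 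Two small caveats: the H\"older step requires $M_T$ (and the discounted variation term) in $\mathbb{L}^{2+\varepsilon}$ rather than merely $\mathbb{L}^2$ unless $D_{0,\cdot}(r)^{-1}$ is bounded, and, as you note, moments of the \emph{reciprocal} discount factor are not literally delivered by Assumption~\ref{Ass:discount} or by $r\in\mathbb{S}^2(\widehat{\mathbb{F}})$ (which controls $\int_0^T|r_u|\,\d u$ but not its exponential). Both are mild strengthenings consistent with the paper's intent, and your explicit acknowledgement of them is a virtue rather than a defect; overall the proposal is correct and strictly more complete than the paper's own proof.
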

\begin{example}[European Call Option]\label{Ex:Call_Option}
Assume that we have a European call option, written on a single asset $\widehat{X}=(\widehat{X}_t)_{t\in[0,T]}$, with maturity $T_1<T$ and strike price $K$, then $A_t=\I_{\{t\geq T_1\}}(\widehat{X}_{T_1}-K)^+$. Note that $A$ is an $\widehat{\mathbb{F}}$-adapted càdlàg process, and under mild assumptions on $\widehat{X}$, (\textit{e.g}., continuity or càdlàg properties), 
$A$ has finite variation, as it only consists of a single jump at $t=T_1$. \iffalse Moreover, $\int_t^TD_{t,s}(r)\d A_s=D_{t,T_1}(r)(\widehat{X}_{T_1}-K)^+$. By assuming that $r$ and $\widehat{X}$ belong to $\mathbb{S}^2(\mathbb{F})$, we have that $D_{t,T_1}(r)$ has finite moments and $\widehat{X}$ is square integrable and hence $D_{t,T_1}(r)(\widehat{X}_{T_1}-K)^+\in\mathbb{L}^2(\F_T)$ (by Cauchy--Schwartz).\fi
\end{example}

\subsection{Adjusted portfolio valuation}
In this section, we summarize and explain all cashflows that may occur during the lifespan of the portfolio. For each type of cashflow $Y\in\{\text{con},\text{col},\text{hed},\text{fun},\text{def}\}$, we define $\text{CF}_{(t,\tau\wedge T]}^Y$ as the random variable representing the cumulative cashflows of type $Y$ that occur over the time interval $(t,\tau\wedge T]$. Each of these cashflow processes is assumed to be constructed from $\mathbb{F}$-adapted, càdlàg processes of finite variation, ensuring that $\text{CF}_{(t,\tau\wedge T]}^Y$ is $\mathcal{F}_{\tau\wedge T}$-measurable.

If $\tau\wedge T \leq t$, the interval $(t,\tau\wedge T]$ is empty and we set $\text{CF}_{(t,\tau\wedge T]}^Y \defeq 0$. Thus, $\text{CF}_{(t,\tau\wedge T]}^Y$ is well-defined for all $\omega\in\Omega$ and $t\in[0,\tau \wedge T)$, and captures the cashflows of type $Y$ that have occurred strictly after time $t$ and up to $\tau \wedge T$.
\vspace{0.25cm}\newline
\noindent\textbf{Contractual derivative cashflow:}\newline
We define the defaultable contractual cashflow process as
\begin{equation*}
\widebar{A}_t = \I_{\{t<\tau\}}A_t + \I_{\{t\geq\tau\}}A_{\tau-}.
\end{equation*}
Here $A_{\tau-}$ represents the cumulative contractual payments made strictly up to, but excluding, the default time. This reflects the assumption that neither party is obligated to fulfill promised payments at the exact moment of default. For a more detailed discussion on this modeling approach and its implications, we refer to \cite{brigo2010dangers}.

For $t\in[0,\tau\wedge T]$, the risky contractual cashflow is given by
\begin{equation}\label{CF_con}
\text{CF}^{\text{con}}_{(t,\tau\wedge T]}\coloneq  \int_{(t,\tau\wedge T]} D_{t,s}(r)\d \widebar{A}_t.
\end{equation}
This formulation accounts for the possibility of default, ensuring that the valuation accurately reflects the associated risks and terminates cashflows upon default.\iffalse 
\color{red}
The presence of the stopping time $\tau$ does not fundamentally alter the argument of Lemma \ref{lemma:cf_con_L2}. The same reasoning applies, because for each fixed $\omega$, $\tau(\omega)\wedge T \leq T$, and $\widebar{A}$ remains of finite variation on $[t,\tau(\omega)\wedge T]$. Thus, by replacing $T$ by $\tau\wedge T$ throughout the proof, the same inequalities and integrability arguments hold. Hence, $\text{CF}^{\text{con}}_{(t,\tau\wedge T]}\in \mathbb{L}^2(\mathcal{F}_{\tau\wedge T})$.\color{black}\fi
\vspace{0.25cm}\newline
\textbf{Costs and benefits associated with collateral:}\newline
Collateral plays a crucial role in managing counterparty credit risk in derivative transactions. By exchanging collateral, parties protect themselves against potential losses arising from default or adverse market movements. There are two main types of collateral in this context, Variation Margin (VM) and Initial Margin (IM). Variation margin addresses current exposure by reflecting the daily fluctuations in the mark-to-market of the derivative contract, whereas initial margin is designed to cover potential future exposure over a specified margin period of risk (MPR). Both types of collateral have associated costs and benefits, which we explore in the following sections.
\vspace{0.25cm}\newline
\noindent\textit{Variation Margin}\newline
Denote by $C=(C_t)_{t\in[0,T]}$, the variation margin process. We assume that the variation margin can be \emph{rehypothecated}, meaning the receiving party can reuse the collateral for other purposes. Hence, the variation margin is assumed to be either posted or received, and can therefore take both positive and negative values. For $C_t>0$, the bank is the taker of variation margin, and for $C_t<0$, the bank is the poster of variation margin. For simplicity, we assume that $C$ is a Lipschitz-continuous function of $\widehat{V}$. 

The interest rate on the variation margin may be different depending on which party is the poster and taker and we set the rate to
\begin{equation*}
    r^c_t=\I_{\{C_t\geq 0\}}r^{c,b}_t +\I_{\{C_t< 0\}}r^{c,l}_t.
\end{equation*}
Here $r^{c,b}=(r_t^{c,b})_{t\in[0,T]}$ and $r^{c,l}=(r_t^{c,l})_{t\in[0,T]}$ are $\mathbb{F}$-adapted, càdlàg processes describing the instantaneous borrowing and lending rates for collateral, respectively.
For $t\in[0,\tau\wedge T]$, this results in the following cashflow attributable to the variation margin
\begin{equation}\label{CF_col}
\text{CF}^\text{col}_{(t,\tau\wedge T]}\coloneq \int_t^{\tau\wedge T} D_{t,s}(r)(r_s-r^c_s)C_s\d s.
\end{equation}
\iffalse\color{red}
Under our assumptions $(r-r^c)C\in\mathbb{S}^2(\mathbb{F})$ and $D_{t,\cdot}(r)$ has finite moments, and it follows that $\text{CF}^\text{col}_{(t,\tau\wedge T]}\in\mathbb{L}^2(\mathcal{F}_{\tau\wedge T})$.\color{black}\fi
%\vspace{0.25cm}\newline
\noindent
{\textit{Initial Margin}}\newline
Initial margin is imposed on a derivative deal to mitigate the risk of price movements during the period between the day one party defaults and the day the positions are closed or collateral is liquidated. As stated above, this period is known as the margin period of risk. The MPR, usually 10 or 20 days, reflects the time it takes to manage the default, including notifying parties, liquidating positions, and settling remaining obligations. Unlike variation margin, initial margin is typically not \emph{rehypothecable}, meaning it cannot be reused by the receiving party for other purposes. This restriction ensures that the collateral remains securely segregated and readily available if a default occurs.

We denote initial margin provided by the counterparty, and hence received by the bank, by $\text{IM}^\text{FC}=\big(\text{IM}_t^\text{FC}\big)_{t\in[0,T]}$. Similarly, we denote initial margin provided by the bank, and received by the counterparty by $\text{IM}^\text{TC}=\big(\text{IM}_t^\text{TC}\big)_{t\in[0,T]}$. Initial margin is defined as a risk measure, for instance value at risk (VaR) or expected shortfall (ES), of adverse price movements of the exposure (clean value of the portfolio) over the MPR. \begin{equation}
\text{IM}_t^\text{FC}=\varrho\Big[\big(\widehat{V}_{t+\text{MPR}_t}-\widehat{V}_t\big)^+\big|\widehat{\F}_t\Big],\quad \text{IM}_t^\text{TC}=-\varrho\Big[\big(\widehat{V}_{t+\text{MPR}_t}-\widehat{V}_t\big)^-\big|\widehat{\F}_t\Big].
\end{equation}
Here, $\text{MPR}_t=\min\{\text{MPR},T-t\}$, $\varrho[\cdot|\mathcal{F}]$, represents a generic risk measure conditioned on $\mathcal{F}$. Since initial margin is not rehypothecable, and must be held in a segregated account $\text{IM}^\text{FC}$ and $\text{IM}^\text{TC}$ are both posted simultaneously and cannot be netted against each other. We introduce the borrowing and lending rates for initial margin $r^{\text{IM},b}=(r^{\text{IM},b}_t)_{t\in[0,T]}$ and $r^{\text{IM},l}=(r^{\text{IM},l}_t)_{t\in[0,T]}$, which are assumed to be $\mathbb{F}$-adapted, càdlàg processes. We assume that the counterparty is remunerated with the rate $r^{\text{IM},b}$ for posting $\text{IM}^\text{FC}$ and the bank is remunerated with the interest rate $r^{\text{IM},l}$ for posting $\text{IM}^\text{TC}$.

 For $t\in[0,\tau\wedge T]$, the cashflow associated with funding costs of the initial margin is given by
\begin{equation}
    \text{CF}_{(t,\tau\wedge T]}^{\text{IM}}\coloneqq \int_t^{\tau\wedge T}D_{s,t}(r)\big((r_s-r_s^{\text{IM},l})\text{IM}_s^\text{TC} -  r_s^{\text{IM},b}\text{IM}_s^\text{FC}\big)\d s.
\end{equation}
\begin{assumption}[Risk measure]\label{Ass:risk_measure}
    We assume that the risk measure $\varrho$ is monotone and translation-invariant.
\end{assumption}
Note that under Assumption \ref{Ass:risk_measure}, it holds for $t\in[0,T]$ that $\varrho[\cdot|\widehat{\F}_t]\colon\mathbb{L}^2(\widehat{\F})\to\mathbb{L}^2(\widehat{\F}_t)$ and in turn $\text{IM}^\text{TC},\text{IM}^\text{FC}\in\mathbb{S}^2(\widehat{\mathbb{F}})$. \iffalse\color{red} and hence by repeating the arguments from above, it holds that $\text{CF}_{(t,\tau\wedge T]}^{\text{IM}}\in\mathbb{L}^2(\F_{\tau\wedge T})$.\color{black}\fi
\vspace{0.25cm}\newline
\textbf{Cashflows at default of one of the parties:}\newline
The actions both parties must take in the event of a default are governed by a \emph{close-out agreement}, which outlines procedures for early termination of the contract and settling outstanding obligations. This agreement covers various aspects, including the conditions under which a default can be declared, such as bankruptcy, missed payments, or breach of contract.

The most important part of this agreement, for our purposes, is the specification of the \emph{close-out amount}, the amount the parties agree to settle in the case of default. We denote this amount by $\theta_\tau$, which typically depends on several factors, including the time of default, which party is in default, the clean value of the portfolio, and any adjustments to portfolio valuations.

From the perspective of $t\in[0,\tau]$, the discounted future cashflow at default is then given by:
\begin{equation}\label{CF_def}
\text{CF}^\text{def}_{(t,\tau\wedge T]}\coloneq D_{t,\tau}(r)\theta_\tau\I_{\{\tau\leq T\}},
\end{equation}
and on the complement of $\{[0,\tau\wedge T]\}$, we define $\text{CF}^{\text{def}}_{t,\tau\wedge T}\coloneq0$. A common approach is to use a close-out amount based on the clean portfolio value. In the event of a default by either the counterparty or the bank, a \emph{recovery payoff}, denoted by $R_\tau$, is made. Along with the recovery payoff, the non-defaulting party retains the variation margin and the portion of the initial margin that was in their possession, \textit{i.e.,} for $\tau<T$ \begin{equation*}
\theta_\tau=R_\tau + C_\tau + \I_{\{\tau=\tau^\mathcal{C}\}}\text{IM}^\text{TB}_\tau + \I_{\{\tau=\tau^\mathcal{B}\}}\text{IM}^\text{TC}_\tau.    
\end{equation*}
For $\tau<T$, the recovery payoff is given by \begin{align*}
    R_\tau
    =&\I_{\{\tau=\tau^\mathcal{C}\}}\big((1-\text{LGD}^\mathcal{C})(Q_\tau-C_\tau-\text{IM}^\mathrm{FC}_\tau)^+ -(Q_\tau-C_\tau-\text{IM}^\mathrm{FC}_\tau)^-\big) \\
    &- \I_{\{\tau=\tau^\mathcal{B}\}}\big((1-\text{LGD}^\mathcal{B})(Q_\tau-C_\tau-\text{IM}^\mathrm{TC}_\tau)^- -(Q_\tau-C_\tau-\text{IM}^\mathrm{TC}_\tau)^+\big)
\end{align*}
Here, $\mathrm{LGD}^\mathcal{B}\in[0,1]$ and $\mathrm{LGD}^\mathcal{C}\in[0,1]$ are the losses given default of the bank and the counterparty, respectively. Finally, this yields
\begin{align}\label{clean_close_out}
\theta_\tau=Q_\tau - \I_{\{\tau=\tau^\mathcal{C}\}}\text{LGD}^\mathcal{C}(Q_\tau-C_\tau-\text{IM}_\tau^\text{FC})^+ + \I_{\{\tau=\tau^\mathcal{B}\}}\text{LGD}^\mathcal{B}(Q_\tau-C_\tau - \text{IM}_\tau^\text{TC})^-.
\end{align}
In the above we have defined the exposure by $Q_\tau\defeq\widehat{V}_\tau+\Delta A_\tau$ which is used instead of $\widehat{V}_\tau$ because $Q_\tau$ reflects the clean portfolio value just before default, incorporating all future discounted cashflows up to $\tau$. Since no contractual cashflows are expected exactly at default, any realized jump $\Delta A_\tau$ at $\tau$ must be explicitly added to the exposure, ensuring consistency with the portfolio’s obligations. Another alternative would be to consider a \emph{replacement close-out}, which would represent the cost of setting up a new similar portfolio with another counterparty, see for instance \cite{antonelli2023analysis}.
\begin{assumption}[Deterministic loss given default]
We assume that $\emph{LGD}^\B$ and $\emph{LGD}^\C$ are deterministic constants.
\end{assumption}
This assumption simplifies the model, making it easier to analyze and interpret, but can be relaxed if needed.
\iffalse\color{red}
It follows directly that $\text{CF}^\text{def}_{(t\tau\wedge T]}\in\mathbb{L}^2(\F_{\tau\wedge T})$.\color{black}\fi
\vspace{0.25cm}\newline
\textbf{Costs and benefits from funding and hedging accounts:}\newline
Following \cite{brigo2019nonlinear}, the treasury funding costs related to the derivative stem from borrowing and lending at the rates $r^{f,b}$ and $r^{f,l}$ to fund the trading activity. %, which correspond to the first two terms of \eqref{eq:hedge}. 
However, only the uncollateralized portion of the portfolio requires hedging. Furthermore, the initial margin posted by the counterparty can be excluded from the funding costs, as it is held in a segregated account and does not need to be financed. In light of these considerations, by assuming that all assets are traded in the repo market, we define the funding process $F=(F_t)_{t\in[0,\tau\wedge T]}$ as 
\begin{equation}\label{eq:F_full}
    F_t:=V_t(\varphi)+ \Delta A_t\I_{\{t=\tau\}}-C_t-\text{IM}_t^\text{TC}%-\sum_{i\in\mathcal{E}}\big(\kappa_t^iZ_t^i+\Psi_t^{b,i}B_t^{i,b}+\Psi_t^{i,l}B_t^{i,l}\big)+\sum_{i\in\mathcal{A}}\big(\zeta_t^iS_t^i+\Psi_t^{b,i}B_t^{i,b}+\Psi_t^{i,l}B_t^{i,l}\big).
\end{equation}
% \begin{assumption}[All assets eligible for repo trading]
% \label{Ass:repo_traded}Every tradable asset within the portfolio defined by $\rho$ is eligible as collateral in the repo market. Consequently, all such assets are assumed to be traded in the repo market.
% \end{assumption}
% \begin{assumption}[Full collateralization in the repo market]\label{Ass:repo_constraint}
% All assets traded in the repo market are fully collateralized. Formally, for all $t \in [0,\tau \wedge T]$,
% \begin{equation*}
% \kappa_t^i Z_t^i + \Psi_t^{b,i} B_t^{i,b} + \Psi_t^{i,l} B_t^{i,l} = 0, \  \text{for } i \in \mathcal{E},
% \  \text{and }
% \zeta_t^i S_t^i + \Psi_t^{b,i} B_t^{i,b} + \Psi_t^{i,l} B_t^{i,l} = 0, \  \text{for } i \in \mathcal{A}.
% \end{equation*}
% This ensures that no net unsecured exposure remains for repo-traded assets.
% \end{assumption} Under these assumptions, \eqref{eq:F_full} is reduced to \begin{equation}
%     \label{eq:F}
%     F_t=V_t(\varphi) + \Delta A_t\I_{\{t=\tau\}} -C_t-\text{IM}^\text{TC}_t.
% \end{equation}
The resulting cashflow from treasury funding the hedging portfolio is then for $t\in[0,\tau\wedge T]$ given by\begin{align}\label{eq:CF_fun}\begin{split}
\text{CF}_{(t,\tau\wedge T]}^{\text{fun}}(V(\varphi))\coloneqq&\int_t^{\tau\wedge T}D_{t,s}(r)\Big[(r-r_s^{f,b})\big(V_s(\varphi)+ \Delta A_t\I_{\{t=\tau\}}-C_t-\text{IM}_s^\text{TC}\big)^+\\
&-(r-r_s^{f,l})\big(V_s(\varphi)+ \Delta A_t\I_{\{t=\tau\}}-C_t-\text{IM}_s^\text{TC}\big)^-\Big]\d s,\end{split}\end{align} where the supersqript $V(\varphi)$ indicates that these cashflows are calculated under the assumption that the adjusted portfolio value is given by $V(\varphi)$. This does not imply that the adjusted portfolio value is replicable; indeed, we do not assume market completeness. %The replicating strategy is used solely to motivate the definition of $F$, but our model does not depend on this particular choice of $F$. 

By expressing the integrand in terms of these positive and negative parts, we distinguish between periods when the trasury desk of the bank requires external funding (the positive part, representing a cost) and periods when it effectively generates surplus funds (the negative part, representing a benefit). Thus, this decomposition naturally separates the funding costs from the funding benefits. Finally, regarding the costs and benefits from repo funding, we assume that the repo rates coincide with the risk-free rate $r$, which implies that there is no cash-flow contribution from the hedging procedure. 

\vspace{0.25cm}
\noindent\textbf{Adjusted portfolio value as a sum of expected cashflows:}\newline
We define the adjusted portfolio value as the conditional expectation of all cash flows from the bank's perspective. However, since the expression involves the adjusted portfolio value implicitly on both sides, we must first ensure that the expression is well-defined. This is established in the following lemma and  proposition.

\begin{lemma} For $t\in[0,\tau\wedge T)$ it holds that
    \begin{equation*}\emph{CF}^\emph{con}_{t,\tau\wedge T}, \emph{CF}^\emph{def}_{t,\tau\wedge T},  \emph{CF}^\emph{col}_{t,\tau\wedge T},  \emph{CF}^{\emph{IM}}_{t,\tau\wedge T} \in\mathbb{L}^2(\F_{\tau\wedge T}).\end{equation*}
    \begin{proof}
    The presence of the stopping time $\tau$ does not fundamentally alter the argument of Lemma \ref{lemma:cf_con_L2}. The same reasoning applies, because for each fixed $\omega\in\Omega$, $\tau(\omega)\wedge T \leq T$, and $\widebar{A}$ remains of finite variation on $[t,\tau(\omega)\wedge T]$. Thus, by replacing $T$ by $\tau\wedge T$ throughout the proof, the same inequalities and integrability arguments hold. Hence, $\text{CF}^{\text{con}}_{(t,\tau\wedge T]}\in \mathbb{L}^2(\mathcal{F}_{\tau\wedge T})$.

    Under our assumptions \ref{Ass:ir}, \ref{Ass:discount}, and \ref{Ass:X_cond} $(r-r^c)C\in\mathbb{S}^2(\mathbb{F})$ and $D_{t,\cdot}(r)$ has finite moments, and it follows that $\text{CF}^\text{col}_{(t,\tau\wedge T]}\in\mathbb{L}^2(\mathcal{F}_{\tau\wedge T})$.

    Under Assumption \ref{Ass:risk_measure}, it holds that $\varrho\colon\mathbb{L}^2(\F)\to\mathbb{S}^2(\mathbb{F})$ and hence by repeating the arguments from above, it holds that $\text{CF}_{(t,\tau\wedge T]}^{\text{IM}}\in\mathbb{L}^2(\F_{\tau\wedge T})$ and $\text{CF}^\text{col}_{(t,\tau\wedge T]}\in\mathbb{L}^2(\mathcal{F}_{\tau\wedge T})$.
    \end{proof}
\end{lemma}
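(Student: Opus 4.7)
The plan is to treat each of the four cashflow terms by separate Cauchy--Schwarz arguments, each one being a variant of the reasoning already carried out in Lemma \ref{lemma:cf_con_L2}. The only genuinely new input is that the upper limit of integration is now the random but bounded stopping time $\tau\wedge T\le T$; throughout, I would dominate the pathwise integrals by their analogues on $[t,T]$ before taking expectations, so that $\tau$ only enters through the measurability of the resulting random variable.

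For $\text{CF}^{\text{con}}_{(t,\tau\wedge T]}$, I would repeat Lemma \ref{lemma:cf_con_L2} verbatim with $T$ replaced by $\tau\wedge T$. Since $\widebar{A}$ differs from $A$ only by freezing at $\tau$, the total variation $\int_{(t,\tau\wedge T]}|d\widebar{A}_s|$ is pathwise dominated by $\int_{(t,T]}|dA_s|$, and Assumption \ref{Ass:discount} yields uniform moment bounds on $D_{t,s}(r)$. Fubini--Tonelli together with the same dominated convergence argument as in Lemma \ref{lemma:cf_con_L2} then close the case, the $\F_{\tau\wedge T}$-measurability being inherited from the fact that the stopped process is adapted.

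For $\text{CF}^{\text{col}}_{(t,\tau\wedge T]}$ and $\text{CF}^{\text{IM}}_{(t,\tau\wedge T]}$, both integrands factor as a discount factor times a process that I want to place in $\mathbb{S}^2(\mathbb{F})$. Applying Cauchy--Schwarz over the bounded interval $[t,T]$ reduces each case to showing $\E\bigl[\sup_{s\le T}D_{t,s}(r)^2\cdot\sup_{s\le T}Z_s^2\bigr]<\infty$, where $Z$ denotes $(r-r^c)C$ or $(r-r^{\text{IM},l})\text{IM}^{\text{TC}}-r^{\text{IM},b}\text{IM}^{\text{FC}}$. The $\mathbb{S}^2$-bound on $C$ follows from its Lipschitz dependence on $\widehat{V}\in\mathbb{S}^2(\widehat{\mathbb{F}})$, and the corresponding bounds on $\text{IM}^{\text{FC}},\text{IM}^{\text{TC}}$ follow from Assumption \ref{Ass:risk_measure}: monotonicity together with translation invariance guarantees that $\varrho[\cdot\mid\widehat{\F}_t]$ maps $\mathbb{L}^2(\widehat{\F})$ into $\mathbb{L}^2(\widehat{\F}_t)$, uniformly in $t$. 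A further Cauchy--Schwarz between discount factor and integrand then closes both terms.

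The main obstacle is $\text{CF}^{\text{def}}_{(t,\tau\wedge T]}=D_{t,\tau}(r)\,\theta_\tau\,\mathbbm{1}_{\{\tau\le T\}}$, because $\theta_\tau$ mixes evaluations of several processes at the random default time. It is enough to show $\theta_\tau\in\mathbb{L}^2(\F_{\tau\wedge T})$ and then apply Cauchy--Schwarz with the moment bound on $D_{t,\tau}(r)$ supplied by Assumption \ref{Ass:discount}. From \eqref{clean_close_out}, $\theta_\tau$ is controlled pathwise by a fixed linear combination of $|\widehat{V}_\tau|$, $|\Delta A_\tau|$, $|C_\tau|$, $|\text{IM}^{\text{FC}}_\tau|$ and $|\text{IM}^{\text{TC}}_\tau|$, the $\text{LGD}$ factors being deterministic. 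Since $\widehat{V}, C, \text{IM}^{\text{FC}}, \text{IM}^{\text{TC}}\in\mathbb{S}^2$ and $\tau\wedge T$ is a bounded stopping time, each of these evaluations is automatically in $\mathbb{L}^2$ via the defining $\mathbb{S}^2$-norm, and the delicate jump term $\Delta A_\tau$ is dominated by the total variation $\int_{(0,T]}|dA_s|$, whose square integrability is exactly the input already used in Lemma \ref{lemma:cf_con_L2}. Combining these bounds with Cauchy--Schwarz yields the conclusion.
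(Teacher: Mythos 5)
Your treatment of $\text{CF}^{\text{con}}$, $\text{CF}^{\text{col}}$ and $\text{CF}^{\text{IM}}$ is essentially identical to the paper's: reuse Lemma~\ref{lemma:cf_con_L2} with $T$ replaced by $\tau\wedge T$ for the contractual term, and place the integrands of the collateral and initial-margin terms in $\mathbb{S}^2(\mathbb{F})$ (via the Lipschitz dependence of $C$ on $\widehat{V}$ and via Assumption~\ref{Ass:risk_measure}) before applying Cauchy--Schwarz against the discount factor. The one substantive difference is that you actually prove the claim for $\text{CF}^{\text{def}}_{(t,\tau\wedge T]}$, which the paper's proof silently omits: your argument — bound $\theta_\tau$ pathwise by a linear combination of $|\widehat{V}_\tau|$, $|\Delta A_\tau|$, $|C_\tau|$, $|\text{IM}^{\text{FC}}_\tau|$, $|\text{IM}^{\text{TC}}_\tau|$, control each evaluation at the bounded stopping time by the corresponding $\mathbb{S}^2$-norm, and dominate $|\Delta A_\tau|$ by the total variation of $A$ — is the natural way to close that gap and buys a genuinely complete proof of the stated lemma. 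The only caveat is that the square integrability of $\int_{(0,T]}|\mathrm{d}A_s|$, which you invoke for the jump term, is not explicitly stated as a hypothesis in Lemma~\ref{lemma:cf_con_L2} (the paper only records a.s.\ finiteness of the total variation), so strictly speaking you are using a slightly stronger integrability assumption on $A$ than the paper formally records — though it is clearly the intended one.
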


\begin{proposition}\label{prp:V^tau}
For $t\in[0,\tau\wedge T]$, the following implicit equation is well defined
    \begin{equation*}  V_t = \mathbb{E}\big[-\emph{CF}^\emph{con}_{t,\tau\wedge T} + \emph{CF}^\emph{def}_{t,\tau\wedge T} + \emph{CF}^\emph{col}_{t,\tau\wedge T} + \emph{CF}^{\emph{IM}}_{t,\tau\wedge T} + \emph{CF}^\emph{fun}_{t,\tau\wedge T}(V_t) \,\big|\, \mathcal{F}_t\big], \end{equation*}
Moreover, $V=(V_t)_{t\in(0,\tau\wedge T]}\in\mathbb{S}^2(\mathbb{F})$ and $\text{CF}^\text{fun}_{t,\tau\wedge T}\defeq \text{CF}^\text{fun}_{t,\tau\wedge T}(V)\in\mathbb{L}^2(\F_{\tau\wedge T})$.
\begin{proof}
The result follows from standard arguments using Banach’s fixed-point theorem. Specifically, the mapping
\begin{equation*}
\Phi(\cdot) \defeq \mathbb{E}\big[-\text{CF}^\text{con}_{t,\tau\wedge T} + \text{CF}^\text{def}_{t,\tau\wedge T} + \text{CF}^\text{col}_{t,\tau\wedge T} + \text{CF}^{\text{IM}}_{t,\tau\wedge T} + \text{CF}^\text{fun}_{t,\tau\wedge T}(\cdot) \,\big|\, \mathcal{F}_t\big]
\end{equation*}
is a contraction on $\mathbb{S}^2(\mathbb{F})$, given the Lipschitz continuity of $\text{CF}^\text{fun}$ in $v$ and the square-integrability of all other terms. As previously shown for the other cashflow terms, $\text{CF}^\text{fun}_{t,\tau\wedge T}(V)\in\mathbb{L}^2(\F_{\tau\wedge T})$ whenever $V\in\mathbb{S}^2(\mathbb{F})$. 
\end{proof}
\end{proposition}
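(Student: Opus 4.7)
The plan is to verify the hypotheses of Banach's fixed-point theorem for the operator
\[
\Phi(V)_t \defeq \mathbb{E}\big[-\text{CF}^\text{con}_{(t,\tau\wedge T]} + \text{CF}^\text{def}_{(t,\tau\wedge T]} + \text{CF}^\text{col}_{(t,\tau\wedge T]} + \text{CF}^{\text{IM}}_{(t,\tau\wedge T]} + \text{CF}^\text{fun}_{(t,\tau\wedge T]}(V) \,\big|\, \F_t\big]
\]
on a suitable Banach space of càdlàg $\mathbb{F}$-adapted processes stopped at $\tau\wedge T$, and then to extract the two integrability claims from the fixed point.

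First I would show that $\Phi$ maps $\mathbb{S}^2(\mathbb{F})$ to itself. By the preceding lemma the sum of the first four terms is the $\F_t$-conditional expectation of an $\mathbb{L}^2(\F_{\tau\wedge T})$ random variable, hence a càdlàg martingale sitting in $\mathbb{S}^2(\mathbb{F})$ by Doob's maximal inequality. For the funding contribution, the integrand in \eqref{eq:CF_fun} is pointwise dominated by $\big(|r_s-r^{f,b}_s| + |r_s-r^{f,l}_s|\big)\big(|V_s| + |\Delta A_s|\I_{\{s=\tau\}} + |C_s| + |\text{IM}^\text{TC}_s|\big)$. Using Assumptions \ref{Ass:ir}--\ref{Ass:X_cond}, the Lipschitz dependence of $C$ on $\widehat{V}\in\mathbb{S}^2(\widehat{\mathbb{F}})$, the fact that $\text{IM}^\text{TC}\in\mathbb{S}^2(\widehat{\mathbb{F}})$ under Assumption \ref{Ass:risk_measure}, and the moment bound on $D_{t,s}(r)$, Cauchy–Schwarz plus Fubini–Tonelli deliver an estimate $\|\Phi(V)\|_{\mathbb{S}^2} \leq c_0 + c_1\|V\|_{\mathbb{S}^2}$.

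The core step is the contraction estimate, which I view as the main obstacle. Since $x\mapsto x^+$ and $x\mapsto x^-$ are $1$-Lipschitz, for $V,V'\in\mathbb{S}^2(\mathbb{F})$ one obtains
\[
|\Phi(V)_t - \Phi(V')_t| \leq \mathbb{E}\Big[\int_t^{\tau\wedge T} D_{t,s}(r)\,\big(|r_s-r^{f,b}_s| + |r_s-r^{f,l}_s|\big)\,|V_s - V'_s|\,\d s\,\Big|\,\F_t\Big].
\]
A direct estimate in the uniform $\mathbb{S}^2$ norm need not yield a contraction when $T$ is not small, so the standard remedy is to work with a Bielecki-type equivalent norm $\|V\|^2_\beta \defeq \sup_{t\in[0,T]} e^{-\beta(T-t)}\,\mathbb{E}\big[\sup_{u\in[t,T]}|V_{u\wedge\tau}|^2\big]$ with $\beta$ large. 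Jensen's inequality on the inner integral, followed by Doob's maximal inequality applied to the càdlàg martingale $t\mapsto\Phi(V)_t-\Phi(V')_t$, absorbs the right-hand side into $\frac{C}{\beta}\|V-V'\|^2_\beta$, producing a strict contraction for $\beta$ sufficiently large. Alternatively, one can first establish the contraction on a short sub-interval $[T-\delta,T]$ and then propagate the fixed point backward in finitely many steps, which is the route best suited to the present bounded domain setting because $\tau\wedge T$ restricts all integrations naturally.

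Banach's theorem then yields a unique $V\in\mathbb{S}^2(\mathbb{F})$ solving the implicit equation. Substituting this fixed point into the growth estimate of the first step shows $\text{CF}^\text{fun}_{(t,\tau\wedge T]}(V)\in\mathbb{L}^2(\F_{\tau\wedge T})$, which closes the argument.
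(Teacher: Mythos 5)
Your proposal follows the same route as the paper: a Banach fixed-point argument for the map $\Phi$ on $\mathbb{S}^2(\mathbb{F})$, using the $1$-Lipschitz property of $x\mapsto x^{+}$ and $x\mapsto x^{-}$ in the funding term and the square-integrability of the remaining cashflow terms established beforehand. The paper's proof simply asserts that $\Phi$ is a contraction; your observation that this assertion needs either a Bielecki-type equivalent norm or a backward iteration over small subintervals (since the raw Lipschitz constant need not be below one for general $T$) supplies a detail the paper omits, and your argument is correct, modulo the implicit requirement that the funding spreads and discount factors have enough moments for the Cauchy--Schwarz and Doob steps --- a point the paper likewise leaves unaddressed.
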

The adjusted portfolio value can then, for $t\in[0,\tau\wedge T]$, be defined as \begin{equation} \label{V} V_t \coloneq \mathbb{E}\big[-\text{CF}^\text{con}_{t,\tau\wedge T} + \text{CF}^\text{def}_{t,\tau\wedge T} + \text{CF}^\text{col}_{t,\tau\wedge T} + \text{CF}^{\text{IM}}_{t,\tau\wedge T} + \text{CF}^\text{fun}_{t,\tau\wedge T} \,\big|\, \mathcal{F}_t\big]. \end{equation}

\subsection{Valuation adjustments}
Without further manipulation, we can identify ColVA, MVA and FVA. These adjustments are defined for $t\in[0,\tau\wedge T]$ as
\begin{equation*}
    \text{ColVA}_t\defeq \E\big[\text{CF}^\text{col}_{(t,\tau\wedge T]}|\F_t\big],\quad     \text{MVA}_t\defeq \E\big[\text{CF}^\text{IM}_{(t,\tau\wedge T]}|\F_t\big], \quad   \text{FVA}_t\defeq \E\big[\text{CF}^\text{fun}_{(t,\tau\wedge T]}|\F_t\big],
\end{equation*}
representing the adjustments for the costs/benefits associated with variation margin, initial margin, and funding, respectively. For the contractual and default cashflows, we use
\begin{equation*}
    \E\bigg[-\int_{(t,\tau\wedge T]}D_{t,s}(r)\d \widebar{A}_s\;\Big|\;\F_t\bigg]+\E\big[D_{t,\tau}(r)\I_{\{\tau<T\}}Q_\tau\;\big|\;\F_t\big]=\widehat{V}_t.
\end{equation*} to derive
\begin{align*}
    \E\big[-\text{CF}^\text{con}_{(t,\tau\wedge T]}+\text{CF}^\text{def}_{(t,\tau\wedge T]}\big|\F_t\big]=&\widehat{V}_t-\E\Big[\I_{\{\tau\leq T\}}\I_{\{\tau=\tau^\mathcal{C}\}}D_{t,\tau}(r)\text{LGD}^\mathcal{C}(Q_\tau-C_\tau-\text{IM}_\tau^\text{FC})^+\;\big|\;\F_t\Big]\\
    &+\E\Big[\I_{\{\tau\leq T\}}\I_{\{\tau=\tau^\mathcal{C}\}}D_{t,\tau}(r)\text{LGD}^\mathcal{B}(Q_\tau-C_\tau-\text{IM}_\tau^\text{TC})^-\;\big|\;\F_t\Big].
\end{align*}
Above, we express the expected contractual and default cashflows as the clean portfolio value adjusted by two correction terms: one for the credit risk of the counterparty and one for the credit risk of the bank. Depending on the sign convention, these terms correspond to the CVA and DVA, where the first term is added and the second term is subtracted. This sign convention reflects the interpretation of these adjustments but does not alter their financial impact.
\begin{definition}[Valuation Adjustments]\label{def:valuation_adjustments}
For $X\in\{\emph{C},\emph{D},\emph{Col},\emph{M},\emph{F}\}$, we define the valuation adjustments $\emph{XVA}=(\emph{XVA}_t)_{t\in[0,\tau\wedge T]}$, which for  $t\in[0,\tau\wedge T]$, are defined as:
\begin{equation}\label{eq:XVA_all_definitions}
\boxed{
\begin{aligned}
\emph{CVA}_t \coloneqq& \, \mathbb{E}\Big[\mathbbm{1}_{\{\tau \leq T\}}\mathbbm{1}_{\{\tau = \tau^\mathcal{C}\}}D_{t,\tau}(r)\,\emph{LGD}^{\mathcal{C}}\bigl(Q_\tau - C_\tau - \emph{IM}_\tau^{\emph{FC}}\bigr)^+ \mid \mathcal{F}_t\Big], \\[6pt]
\emph{DVA}_t \coloneqq& \,\mathbb{E}\Big[\mathbbm{1}_{\{\tau \leq T\}}\mathbbm{1}_{\{\tau = \tau^\mathcal{B}\}}D_{t,\tau}(r)\,\emph{LGD}^{\mathcal{B}}\bigl(Q_\tau - C_\tau - \emph{IM}_\tau^{\emph{TC}}\bigr)^- \mid \mathcal{F}_t\Big], \\[6pt]
\emph{ColVA}_t \coloneqq &\,-\mathbb{E}\biggl[\int_t^{\tau\wedge T} D_{t,s}(r)\,(r_s - r_s^c)C_s\,\mathrm{d}s \;\bigg|\;\mathcal{F}_t\biggr], \\[6pt]
\emph{MVA}_t \coloneqq &\,-\mathbb{E}\biggl[\int_t^{\tau\wedge T} D_{s,t}(r)\bigl((r_s-r_s^{\emph{IM},l})\emph{IM}_s^{\text{TC}} -r_s^{\emph{IM},b}\emph{IM}_s^{\emph{FC}}\bigr)\mathrm{d}s \;\bigg|\;\mathcal{F}_t\biggr], \\[6pt]
\emph{FVA}_t \coloneqq &\,-\mathbb{E}\biggl[\int_t^{\tau\wedge T}D_{t,s}(r)\Big[(r-r_s^{f,b})\big(Q_s-C_s-\emph{IM}_s^\emph{TC}\big)^+\\
&-(r-r_s^{f,l})\big(Q_s-C_s-\emph{IM}_s^\emph{TC}\big)^-\Big]\d s \;\Big|\;\mathcal{F}_t\biggr].
\end{aligned}
}
\end{equation}
\end{definition}
The adjusted portfolio value can then be expressed in terms of the clean portfolio value and the valuation adjustments as \begin{align}\label{V_XVA} V_{t}^\tau = \widehat{V}_t - \text{CVA}_t + \text{DVA}_t - \text{FVA}_t - \text{ColVA}_t - \text{MVA}_t. \end{align}
Hence, at $t=0$, the bank sells the portfolio to the counterparty for the amount\begin{equation*}
    -V_0=-\widehat{V}_0+\text{CVA}_0 - \text{DVA}_0 + \text{FVA}_0 + \text{ColVA}_0 + \text{MVA}_0.
\end{equation*}
Alternatively, if we adopt the option valuation approach, in which the bank purchases the derivative, we obtain
\begin{equation*}
    V_0=\widehat{V}_0+\text{CVA}_0 - \text{DVA}_0 + \text{FVA}_0 + \text{ColVA}_0 + \text{MVA}_0.
\end{equation*}
Finally, denote the total valuation adjustment by $\mathrm{TVA}=(\mathrm{TVA}_t)_{t\in[0,\tau\wedge T]}$, which for $t\in[0,T]$ is defined by \begin{equation}\label{XVA}
    \mathrm{TVA}_t\coloneqq \widehat{V}_t-V_t=\text{CVA}_t-\text{DVA}_t+\text{FVA}_t+\text{ColVA}_t+\text{MVA}_t.
\end{equation}
\section{Valuation BSDEs}\label{sec:Val_BSDEs}
It is clear that the conditional expectation in \eqref{V} can be represented as a BSDE. In this section, we state this as a formal proposition. However, relying on a single BSDE formulation for $V$ has two main drawbacks: 
\begin{itemize}
    \item The resulting BSDE is highly complex and exhibits several non-standard features: different terminal conditions depending on the nature and timing of defaults, McKean--Vlasov anticipative terms arising from the chosen risk measure, multiple nonlinearities due to distinct lending and borrowing rates, and a high-dimensional state space. By contrast, if we instead formulate one BSDE per individual valuation adjustment, we can isolate specific complexities, ensuring that not all these challenging features appear simultaneously in a single equation. This separation can simplify both the conceptual understanding and the numerical solution procedure. \item If we only compute $V$ from one unified BSDE, there is no straightforward way to extract the individual valuation adjustments. Separate BSDE formulations for each valuation adjustment directly provide their respective contributions, enabling a clearer decomposition.
\end{itemize}
With these considerations in mind, we derive not only the full BSDE for the adjusted portfolio value but also distinct BSDEs for each individual valuation adjustment.
\subsection{Expressing the portfolio values as BSDEs}
The following proposition provides the final tool for formulating the clean and adjusted portfolio values as BSDEs.
\begin{proposition}\label{prp:BSDE}
For $t \in [0,T]$, let $\I_{\{\tau \leq T\}} \chi_\tau \in \mathbb{L}^2(\mathcal{F}_{\tau \wedge T})$, $\Lambda = (\Lambda_t)_{t \in [0,T]}$ be an $\mathbb{F}$-adapted càdlàg process satisfying $\int_{(t, \tau \wedge T]} D_{t,s}(r) \, \mathrm{d} \Lambda_s \in \mathbb{L}^2(\mathcal{F}_{\tau \wedge T})$, and $f^Y = (f_t^Y)_{t \in [0, \tau \wedge T]} \in \mathbb{H}^2(\mathbb{F})$.

Let $Y = (Y_t)_{t \in [0,T]} \in \mathbb{S}^2(\mathbb{F})$ be implicitly defined by the conditional expectation (which is well-defined by Proposition~\ref{prp:V^tau})
\begin{equation}\label{eq:BSDE_exp}
    Y_t = \mathbb{E}\biggl[\I_{\{\tau \leq T\}} \chi_\tau D_{t,s}(r) - \int_{(t, \tau \wedge T]} D_{t,s}(r) \, \mathrm{d} \Lambda_s 
    + \int_t^{\tau \wedge T} D_{t,s}(r) f_s^Y \, \mathrm{d}s \, \bigg| \, \mathcal{F}_t\biggr].
\end{equation}
Then, there exists a unique $Z = (Z_t)_{t \in [0, \tau \wedge T]} \in \mathbb{H}^2(\mathbb{F})$ such that $(Y, Z)$ solves the following BSDE
\begin{equation*}
    Y_t = \I_{\{\tau \leq T\}} \chi_\tau - \int_{(t, \tau \wedge T]} \mathrm{d} \Lambda_s + \int_t^{\tau \wedge T} \big(f_s^Y - r_s Y_s\big) \, \mathrm{d}s - \int_t^{\tau \wedge T} Z_s\cdot \mathrm{d}W_s.
\end{equation*}
\begin{proof}
The conditional expectation \eqref{eq:BSDE_exp} satisfies the BSDE structure due to the linear form of the discounting term $D_{t,s}(r)$, which ensures that the integrand remains square-integrable. The uniqueness of the solution $(Y, Z)$ follows directly from standard results in BSDE theory, since $f^Y-rY\in \mathbb{H}^2(\mathbb{F})$ and the stochastic integral with respect to $W$ is well-defined.
\end{proof}
\end{proposition}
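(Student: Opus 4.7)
The plan is to reduce the implicit equation \eqref{eq:BSDE_exp} to a single $\mathcal{F}_{\tau\wedge T}$-measurable terminal random variable, apply the martingale representation theorem to produce $Z$, and then read off the BSDE by an integration-by-parts computation with the discount factor.

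First I would multiply \eqref{eq:BSDE_exp} by $D_{0,t}(r)$ and exploit the multiplicative identity $D_{0,t}(r)D_{t,s}(r)=D_{0,s}(r)$ valid for $t\le s$ to rewrite it as
\begin{equation*}
D_{0,t}(r)Y_t + \int_{(0,t\wedge\tau]} D_{0,s}(r)\,\mathrm{d}\Lambda_s - \int_0^{t\wedge\tau} D_{0,s}(r) f_s^Y\,\mathrm{d}s \;=\; \mathbb{E}\bigl[G \,\big|\, \mathcal{F}_t\bigr],
\end{equation*}
where $G \coloneq \mathbbm{1}_{\{\tau\le T\}}\chi_\tau D_{0,\tau}(r) - \int_{(0,\tau\wedge T]} D_{0,s}(r)\,\mathrm{d}\Lambda_s + \int_0^{\tau\wedge T} D_{0,s}(r) f_s^Y\,\mathrm{d}s$. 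The integrability hypotheses on $\chi_\tau$, $\Lambda$, and $f^Y$, combined with Assumption \ref{Ass:discount} and Cauchy--Schwarz, give $G\in\mathbb{L}^2(\mathcal{F}_{\tau\wedge T})$, so $M_t\coloneq \mathbb{E}[G\mid\mathcal{F}_t]$ is a square-integrable continuous martingale.

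Next, since $\mathbb{F}$ is the filtration generated by the $(d+2)$-dimensional Brownian motion $W$, the martingale representation theorem yields a unique $\widetilde{Z}\in\mathbb{H}^2(\mathbb{F})$ with $M_t = M_0 + \int_0^t \widetilde{Z}_s\cdot \mathrm{d}W_s$. Defining $Z_s\coloneq \widetilde{Z}_s/D_{0,s}(r)$ and applying the Itô product rule $\mathrm{d}(D_{0,t}(r)Y_t) = -r_t D_{0,t}(r) Y_t\,\mathrm{d}t + D_{0,t}(r)\,\mathrm{d}Y_t$---which is valid because $t\mapsto D_{0,t}(r)$ is continuous of finite variation---to the displayed relation and solving for $\mathrm{d}Y_t$, I obtain on $[0,\tau\wedge T]$
\begin{equation*}
\mathrm{d}Y_t = (r_t Y_t - f_t^Y)\,\mathrm{d}t + \mathrm{d}\Lambda_t + Z_t\cdot \mathrm{d}W_t.
\end{equation*}
Integrating from $t$ to $\tau\wedge T$ and using the terminal identity $Y_{\tau\wedge T} = \mathbbm{1}_{\{\tau\le T\}}\chi_\tau$, which follows by substituting $t = \tau\wedge T$ in the defining conditional expectation, yields precisely the stated BSDE; uniqueness of $Z$ is inherited from uniqueness of $\widetilde{Z}$ in the martingale representation.

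The main obstacle is the careful treatment of the random horizon $\tau\wedge T$: one must justify that the Itô computations are performed only on the stochastic interval up to $\tau\wedge T$ and that stopping $M$ at this time preserves its Brownian representation. This is resolved by working throughout with the stopped processes $M^{\tau\wedge T}$, $Y^{\tau\wedge T}$, $\Lambda^{\tau\wedge T}$, using that $\tau$ is an $\mathbb{F}$-stopping time because it is a $\widebar{\mathbb{F}}$-stopping time and $\widebar{\mathbb{F}}\subset\mathbb{F}$, and invoking optional stopping. The required $\mathbb{H}^2$-integrability of $Z$ then follows from that of $\widetilde{Z}$ together with moment bounds on $D_{0,s}(r)^{-1}$ afforded by Assumption \ref{Ass:ir}.
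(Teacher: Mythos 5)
Your proposal is correct and is essentially the fully worked-out version of the argument the paper's two-sentence proof only gestures at: discount, identify a square-integrable martingale, apply martingale representation, and undo the discounting via the Itô product rule with the finite-variation factor $D_{0,\cdot}(r)$. One small slip: in your displayed intermediate relation the signs of the two completed integrals should be $-\int_{(0,t\wedge\tau]}D_{0,s}(r)\,\mathrm{d}\Lambda_s$ and $+\int_0^{t\wedge\tau}D_{0,s}(r)f_s^Y\,\mathrm{d}s$ (so that the left-hand side equals $\mathbb{E}[G\mid\mathcal{F}_t]$); your final differential $\mathrm{d}Y_t=(r_tY_t-f_t^Y)\,\mathrm{d}t+\mathrm{d}\Lambda_t+Z_t\cdot\mathrm{d}W_t$ and the resulting BSDE are nonetheless the correct ones.
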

To express the adjusted portfolio value as a BSDE we introduce some notation. Let $f^V\colon[0,T]\times\R\times\R\times\R\to\R$ such that for $(t,v,c,im_1,im_2)\in\colon[0,T]\times\R\times\R\times\R$, $f^V$ is defined as \begin{align*}
    f^V(t,v,c,im_1,im_2)=&(r_t-r_t^c)c+(r_t^{\text{IM},l}-r_t)im_1 - r_t^{\text{IM},b}im_2 + (r_t-r_t^{f,b})\bigl((v-c-im_1)^+ \\&- (r_t-r_t^{f,l})(v-c-im_1)^-.
\end{align*}

Using Proposition \ref{prp:BSDE} with specific choices of
$\theta=\chi$, $\widebar{A}=\Lambda$ and $f^V=f^Y$ we derive the BSDE for the adjusted portfolio value
\begin{equation}
    V_t=\theta_\tau\I_{\{\tau\leq T\}} -\int_{(t,\tau\wedge T]}\d \widebar{A}_s+\int_t^{\tau\wedge T}\big(f^V(s,V_s,C_s,\text{IM}_s^\text{TC},\text{IM}_s^\text{FC})-r_sV_s\big)\d s - \int_t^{\tau\wedge T}Z_s\cdot\d W_s.\label{dirty_V_BSDE}
\end{equation}
Similarly, for the clean portfolio value, defined in \eqref{clean_V}, we obtain the following BSDE 
\begin{equation}
        \widehat{V}_t= -\int_{(t,T]}\d A_s - \int_t^T r_s\widehat{V}_s\d s - \int_t^T\widehat{Z}_s\cdot\d\widehat{W}_s.\label{clean_V_BSDE}
\end{equation} 
Note that since $\widehat{V}$ is $\widehat{\mathbb{F}}$-adapted, the resulting BSDE is an $\widehat{\mathbb{F}}$-BSDE, meaning it is adapted to the filtration $\widehat{\mathbb{F}}$ and driven by the $\widehat{\mathbb{F}}$-Brownian motion $\widehat{W}$, rather than $W$.

\begin{example}[Continuation - European Call Option]
As in Example \ref{Ex:Call_Option}, assume a portfolio of a single European call option, written on $S=(S_t)_{t\in[0,T]}$, with maturity $T$ and strike price $K$. This setting allows for an equivalent continuous BSDE (since $A$ only consists of a single discontinuous jump at $t=T$). The clean value BSDE is reduced to
\begin{equation*}
        \widehat{V}_t= -(S_T-K)^+ - \int_t^T r_s\widehat{V}_s\emph{d} s - \int_t^T\widehat{Z}_s\cdot\emph{d}\widehat{W}_s.\label{eq:eaxmple_clean_BSDE}
\end{equation*}
Similarly, for the clean portfolio value, defined in \eqref{clean_V}, we obtain \begin{equation}
    V_t=\theta_\tau\I_{\{\tau\leq T\}} -(S_T-K)^+\I_{\{\tau> T\}} +\int_t^{\tau\wedge T}\big(f^V(s,V_s,C_s,\emph{IM}_s^\emph{TC},\emph{IM}_s^\emph{FC}) - r_sV_s\big)\emph{d} s - \int_t^{\tau\wedge T}Z_s\cdot\emph{d} W_s.\label{Eq:Example_dirty_V_BSDE}
\end{equation}
\end{example}

\subsection{Expressing the valuation adjustments as a system of coupled BSDEs}
As mentioned in the previous section, the valuation adjustments defined in \eqref{def:valuation_adjustments} can also be expressed as BSDEs using Proposition \ref{prp:BSDE}.

Let $f^\text{ColVa}\colon[0,T]\times\R\to\R$, $f^\text{MVA}\colon[0,T]\times\R\times\R\to\R$ and $f^\text{FVA}\colon[0,T]\times\R\times\R\times\R\times\R\to\R$ which for $(t,\tilde{v},tva,c,im_1,im_2)\in\colon[0,T]\times\R\times\R,\times\R\times\R\times\R$, are defined as
\begin{align*}
    f^\text{ColVA}(t,c)\defeq&(r_t-r_t^c)c,\quad
    f^\text{MVA}(t,im_1,im_2)\defeq(r_t-r_t^{\text{IM},l})im_1 - r^{\text{IM},b}_tim_2,\\
    f^\text{FVA}(t,\widehat{v},tva,c,im_1)\defeq&(r-r^{f,b}_t)\big(\widehat{v}-tva-c-im_1\big)^+-(r-r^{f,l}_t)\big(\widehat{v}-tva-c-im_1\big)^-.
\end{align*}
Using Proposition \ref{prp:BSDE} the valuation adjustments can be formulated as a system of coupled BSDEs. The dashed lines below indicate layers of depth in the system, where each layer corresponds to a set of equations. At each layer, the solution depends on the solutions of all BSDEs from the preceding layers. This hierarchical structure ensures that the computation of each valuation adjustment incorporates all necessary information from the preceding layers. These layers are particularly critical in the context of numerical computations, as they define the sequential order in which the computations must be performed to ensure consistency. For $t\in[0,\tau\wedge T]$, we have
\begin{center}
\label{box:XVA_BSDEs}
\fbox{%
\begin{minipage}{0.95\textwidth} % Adjust the box width
\begin{align*}
\widehat{V}_t^j &= -\int_{(t,T]} \d A_s^j 
                   - \int_t^T r_s \widehat{V}_s^j \d s 
                   - \int_t^T \widehat{Z}_s^{\mathcal{I}_j}\cdot \d \widehat{W}_s^{\mathcal{I}_j},\quad j\in\{1,2,\ldots, P\},\quad\widehat{V}=\sum_{j=1}^P\widehat{V}^j, \\[6pt]
\noalign{\hdashrule[1.5ex]{\linewidth}{0.7pt}{4pt}}
-\ColVA_t &= \int_t^{\tau\wedge T} \bigl(f^\ColVA(s,C_s) - r_s \ColVA_s\bigr) \d s
            - \int_t^{\tau\wedge T} Z_s^\ColVA\cdot \d W_s, \\
\IM_t^\FC &= \varrho \bigl[(\widehat{V}_{t+\text{MPR}_t} - \widehat{V}_t)^+ \mid \widehat{\mathcal{F}}_t\bigr], \quad
\IM_t^\TC = -\varrho \bigl[(\widehat{V}_{t+\text{MPR}_t} - \widehat{V}_t)^- \mid \widehat{\mathcal{F}}_t\bigr], \\[6pt]
\noalign{\hdashrule[1.5ex]{\linewidth}{0.7pt}{4pt}}
\CVA_t &= \mathbbm{1}_{\{\tau \leq T\}} \mathbbm{1}_{\{\tau = \tau^\mathcal{C}\}} 
          \LGD^{\mathcal{C}} (\widehat{V}_\tau - C_\tau - \IM_\tau^{\FC})^+ 
          - \int_t^{\tau\wedge T} r_s \CVA_s \d s 
          - \int_t^{\tau\wedge T} Z_s^\CVA\cdot \d W_s, \\
\DVA_t &= \mathbbm{1}_{\{\tau \leq T\}} \mathbbm{1}_{\{\tau = \tau^\mathcal{B}\}} 
          \LGD^{\mathcal{B}} (\widehat{V}_\tau - C_\tau - \IM_\tau^{\TC})^- 
          - \int_t^{\tau\wedge T} r_s \DVA_s \d s 
          - \int_t^{\tau\wedge T} Z_s^\DVA\cdot \d W_s, \\
-\MVA_t &= \int_t^{\tau\wedge T} \bigl(f^\MVA(s,\IM^{\FC}_s,\IM^\TC_s) - r_s \MVA_s\bigr) \d s - \int_t^{\tau\wedge T} Z_s^\MVA\cdot \d W_s, 
\\[6pt]
\noalign{\hdashrule[1.5ex]{\linewidth}{0.7pt}{4pt}}
\textbf{XVA}_t &= (\ColVA_t, \CVA_t, \DVA_t, \MVA_t, \FVA_t), \\
-\FVA_t &= \int_t^{\tau\wedge T} \bigl(f^\FVA(s,\widehat{V}_s,C_s,\textbf{XVA}_t,\IM^\TC_s) 
         - r \FVA_s\bigr) \d s 
         - \int_t^{\tau\wedge T} Z_s^\FVA \cdot\d W_s.
\end{align*}
\end{minipage}%
}
\end{center}
It is important to note that solving the system of BSDEs above, along with defining $V^\XVA\defeq\widehat{V}-\ColVA-\CVA+\DVA-\MVA-\FVA$ is equivalent to solving the full BSDE in \eqref{dirty_V_BSDE}, \textit{i.e.,} $V^\XVA=V$. However, as highlighted above, extracting the individual valuation adjustments directly from \eqref{dirty_V_BSDE} is generally not feasible.

\subsection{Deep BSDE solver under multiple measures
}\label{sec:var_problem}
%old title Practical reformulations of the BSDEs
This section introduces two reformulations of the BSDEs from the previous section. The first reformulation involves applying a measure change to adjust the forward SDE that models the underlying assets. By doing so, we aim to increase the probability of default for the bank or the counterparty. This is important because, in practice, we can only draw a limited number of samples in the BSDE approximation, and very low default probabilities can make it difficult to capture default events accurately.

The second reformulation rewrites the BSDE as an equivalent variational problem, a standard technique when using the BSDE solver. Specifically, the backward SDE is converted into a forward SDE, and the objective becomes finding both the initial condition of the BSDE and the control process so that the terminal condition is satisfied in a mean-squared sense.

\subsubsection{Change of measure to increase default probabilities}
\label{sec:ChangeOfMeasure}
The BSDEs describing $\mathbf{XVA} = (\mathbf{XVA}_t)_{t \in [0, \,\tau \wedge T]}$ depend heavily on the default time $\tau = \tau^\mathcal{C} \wedge \tau^\mathcal{B}$. Since $\tau^\mathcal{C}$ and $\tau^\mathcal{B}$ represent the defaults of the counterparty and the bank, respectively, these events tend to be rare, especially for highly creditworthy entities. Our main goal is to approximate these BSDEs using neural network–based algorithms; however, a very low probability of default for either party can be problematic from a computational perspective, due low accuracy and/or high variance in our estimates. In this section, we outline how a change of measure technique can be utilized to increase the default probabilities. 

Consider the decoupled FBSDE, where the forward SDE is given by \eqref{eq:full_X} and the BSDE is of the form given in Proposition \ref{prp:BSDE}
\begin{equation}\label{eq:FBSDE_repeat}
\begin{dcases}
  X_t \;=\; x_0 \;+\; \displaystyle\int_0^t b\bigl(s,X_s\bigr)\,\mathrm{d}s 
           \;+\; \int_0^t \sigma\bigl(s,X_s\bigr)\,\odot\,\mathrm{d}W_s, 
  \\[6pt]
  Y_t
  \;=\; \mathbbm{1}_{\{\tau \le T\}}\chi_\tau
  \;-\; \displaystyle\int_{(t,\;\tau\wedge T]} \mathrm{d}\Lambda_s
  \;+\; \int_t^{\tau\wedge T}
    \bigl(f_s^Y \;-\; r_s\,Y_s\bigr)\,\mathrm{d}s
  \;-\; \int_t^{\tau\wedge T} Z_s\cdot\mathrm{d}W_s,
\end{dcases}
\end{equation}
under our baseline measure $\mathbb{P}$. 

To increase default probabilities, we introduce a ``tilt'' in the drift coefficient of $X$. We assume $q\colon [0,T]\times \mathbb{R}^{d+2} \;\to\;\mathbb{R}^{d+2}$
satisfies the same Lipschitz and linear‐growth bounds as in Assumption~\ref{Ass:X_cond}, and introduce the \emph{modified} or \emph{tilted} forward process $X^q=(X_t^q)_{t\in[0,T]}$ solving
\begin{equation}\label{eq:CM_FBSDE_forward}
  X_t^q 
  \;=\; 
  x_0 
  \;+\;
  \int_0^t\!\bigl[b\bigl(s,X_s^q\bigr)\;-\;q\bigl(s,X_s^q\bigr)\bigr]\,\mathrm{d}s
  \;+\;
  \int_0^t \sigma\bigl(s,X_s^q\bigr)\,\odot\,\mathrm{d}W_s.
\end{equation}
Here $\tau^q$ is the corresponding default time induced by $X^q$. 

By construction, $X^q$ under the \emph{original} measure $\mathbb{P}$ has drift $b - q$. 
However, define the following \emph{Girsanov kernel} and exponential martingale:
\[
\theta_t \;:=\; q\bigl(t,X_t\bigr)\,\oslash\,\sigma\bigl(t,X_t\bigr),
\quad
\Gamma_t^q 
\;:=\;
\exp\Bigl(
  -\!\int_0^t \theta_s\cdot \mathrm{d}W_s
  \;-\;
  \tfrac12 \int_0^t \bigl|\theta_s\bigr|^2 \,\mathrm{d}s
\Bigr).
\]
Under mild conditions (e.g.\ Novikov’s condition), $\Gamma_t^q$ is a true martingale, 
and we may define a new probability measure $\mathbb{P}^q$ on $\mathcal{F}_T$ by $\frac{\mathrm{d}\mathbb{P}^q}{\mathrm{d}\mathbb{P}}\Bigm|_{\mathcal{F}_T}=\Gamma_T^q$. From Girsanov’s theorem, the process
$W_\cdot^q \defeq W_\cdot+\int_0^\cdot \theta_s\mathrm{d}s$, 
is then a Brownian motion under $\mathbb{P}^q$. 
Rewriting the forward SDE in \eqref{eq:FBSDE_repeat} in terms of $W^q$ shows that $X^q$ under $\mathbb{P}$ 
has the \emph{same} distribution as $X$ under $\mathbb{P}^q$. 
Thus, simulating $X^q$ under $\mathbb{P}$ is equivalent to simulating the \emph{original} SDE $X$ 
but under a different measure $\mathbb{P}^q$. 
Consequently, the default event becomes more likely from the sampling perspective 
while still preserving the correct pricing PDE or BSDE structure.

Let $\bigl(Y^q,\,Z^q\bigr)$ solve the ``tilted'' BSDE associated with the default time $\tau^q$. Formally, under $\mathbb{P}^q$ we might write
\begin{equation*}
Y_t^q
\;=\;
\mathbbm{1}_{\{\tau^q \le T\}}\chi_{\tau^q}^q
\;-\;
\int_{(t,\;\tau^q\wedge T]} \mathrm{d}\Lambda_s^q
\;+\;
\int_t^{\tau^q\wedge T}
  \bigl(f_s^{Y,q} - r_s\,Y_s^q\bigr)\,\mathrm{d}s
\;-\;
\int_t^{\tau^q\wedge T}Z_s^q \cdot\mathrm{d}W^q_s.
\end{equation*}
Translating this back to $\mathbb{P}$ introduces an extra generator term. Indeed,
\begin{equation*}
\int Z_s^q\cdot\mathrm{d}W^q_s 
\;=\;
\int Z_s^q\cdot\mathrm{d}W_s
\;+\;
\int Z_s^q\cdot\theta_s\,\mathrm{d}s\;=\;\int Z_s^q\cdot\mathrm{d}W_s
\;+\;
\int \langle q(s,X_s^q)\oslash \sigma(s,X_s^q),Z_s^q\rangle\,\mathrm{d}s
\end{equation*}
Hence, under the original measure $\mathbb{P}$, $\bigl(Y^q,Z^q\bigr)$ satisfies
\begin{equation}\label{eq:CM_FBSDE}
    \begin{dcases}
    X_t^q=x_0 + \int_0^t \big(b(s,X_s^q)-q(s,X_s^q)\big)\d s + \int_0^t\sigma(s,X_s^q)\odot\d W_s,\\
    Y_t^q = \I_{\{\tau^q \leq T\}} \chi_{\tau^q}^q - \int_{(t, \tau^q \wedge T]} \mathrm{d} \Lambda_s^q + \int_t^{\tau^q \wedge T} \big(f_s^{Y,q} - r_s Y_s^q-\langle q(s,X_s^q)\oslash \sigma(s,X_s^q),Z_s^q\rangle\big) \, \mathrm{d}s \\
    \quad\quad\quad- \int_t^{\tau^q \wedge T} Z_s^q \cdot \mathrm{d}W_s.
    \end{dcases}
\end{equation}
This system precisely corresponds to an \emph{adjusted drift} in the forward SDE and an \emph{extra term} $\langle q\oslash\sigma,\;Z^q\rangle$ in the BSDE generator.
\vspace{0.25cm}\newline
\textbf{Practical advantage}:\newline
By simulating the \emph{tilted} SDE $X^q$, one observes default scenarios more frequently. 
This improves training stability and reduces the variance of neural network based methods, 
since they can learn from many more samples in the (otherwise rare) default region. 
After training, the result is mapped back to the original measure, 
so the final solution remains consistent with the true $\mathbf{XVA}$ under $\mathbb{P}$. Figure~\ref{fig:CM} presents a conceptual one-dimensional illustration of the formulations in \eqref{eq:FBSDE_repeat} and \eqref{eq:CM_FBSDE}. In this figure, the blue curves represent the processes $X$ and $Y$ under the original drift, while the red curves represent the processes $X^q$ and $Y^q$ under the adjusted drift that leads to an increased default probability.
\begin{figure}[htp]
\centering
\begin{tabular}{c}
          \includegraphics[width=160mm]{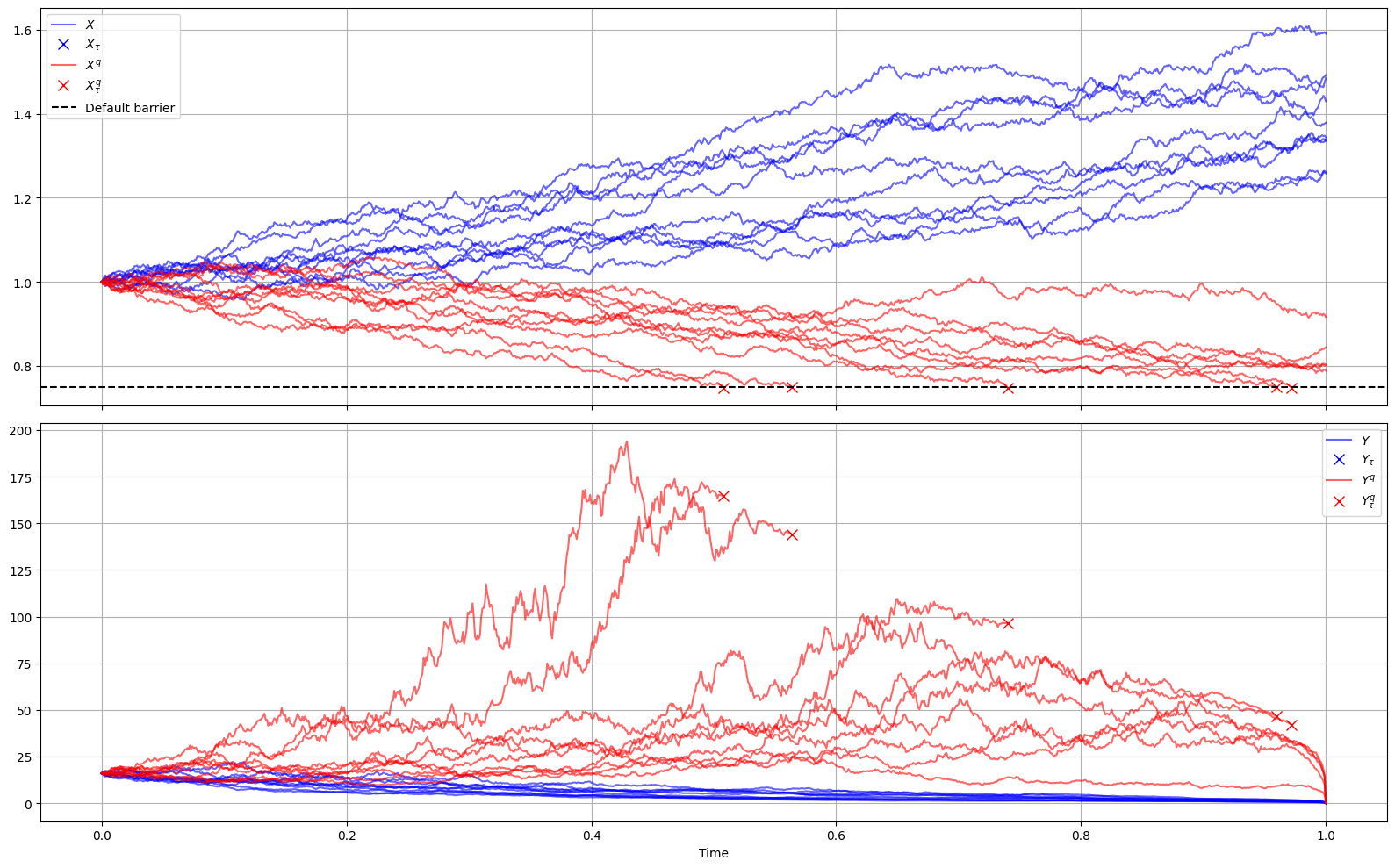}
\end{tabular}
\caption{Conceptual illustration of the formulations in \eqref{eq:FBSDE_repeat} and \eqref{eq:CM_FBSDE}. The blue curves represent the processes $X$ and $Y$ under the original drift, while the red curves represent the processes $X^q$ and $Y^q$ under the adjusted drift, which leads to an increased default probability.}\label{fig:CM}
\end{figure}

\begin{remark}[PDE approach and Itô's lemma]\label{rem:Markovian}
Assume that the BSDE in \eqref{eq:FBSDE_repeat} is Markovian in the state variable so that it can be associated with a PDE. Then the solution to \eqref{eq:FBSDE_repeat} can be represented by the viscosity solution $v$ of that PDE, and for a pre‐default state $(t,X_t)$ we have
\begin{equation*}
    Y_t^\tau \;=\; v(t,X_t), 
    \quad 
    Z_t^\tau \;=\; \sigma\bigl(t,X_t\bigr)\odot \mathrm{D}_x v\bigl(t,X_t\bigr).
\end{equation*}
Next, evaluating $v$ along the tilted SDE \eqref{eq:CM_FBSDE_forward} and applying Itô’s lemma recovers \eqref{eq:CM_FBSDE}. From this it becomes clear that the PDEs associated with \eqref{eq:FBSDE_repeat} and \eqref{eq:CM_FBSDE} coincide; the only difference is that in the former case one follows trajectories generated by $X$, and in the latter by $X^q$. Hence the PDE itself remains the same, but is traced out along different state‐space paths (blue and red lines in the upper panel in Figure~\ref{fig:CM}) under the two SDEs.
\end{remark}

\begin{remark}[Similarities and differences with importance sampling]
Both rewriting the BSDE with a new drift and importance sampling employ a change of measure to focus the forward SDE paths on critical regions. However, they differ fundamentally in their approach and objectives.  

Importance sampling primarily reduces variance in Monte Carlo simulations while leaving the original BSDE unchanged. It relies on simulating under a proposed measure and reweighting via a Radon--Nikod\'ym derivative $\Gamma$ to maintain correctness under the original measure. Although this yields an unbiased estimator, it can suffer from instability if $\Gamma$ becomes large.

Rewriting the BSDE with a new drift, on the other hand, modifies the forward dynamics so that the same Markovian solution $v$ (solution of the associated PDE) remains valid without requiring additional weighting. By integrating the “weighting” directly into the driver or differential operator, one avoids large likelihood-ratio effects.

In importance sampling, the red curves in the upper panel of Figure~\ref{fig:CM} correspond to a change of measure that increases the default probability, while the BSDE in the lower panel is solved under the original measure after correcting via the Radon--Nikod\'ym derivative. In contrast, our approach employs the same red curves for both sampling the forward process and solving the BSDE, as the change of measure is directly incorporated into the drift of the forward SDE and the driver of the backward SDE. 
\end{remark}

\subsubsection{General BSDE solver under multiple measures}
We now summarize the previous discussion by providing the general formulation of the deep BSDE solver under multiple measures. We start buy recalling that the FBSDE \eqref{eq:CM_FBSDE} can be equivalently reformulated as a variational problem: this reformulation reinterprets the search for a solution $(y_0, Z^q)$ in terms of minimizing a mean-square error criterion.

 We define the jump of the contractual payoff process $\Lambda^q$ by $
\Delta \Lambda^q_{\tau^q\wedge T} 
\;=\; \Lambda^q_{\tau^q \wedge T} \;-\; \Lambda^q_{(\tau^q \wedge T)^-}$.
Then, the FBSDE \eqref{eq:CM_FBSDE} can be recast as the optimization problem
\begin{equation}\label{eq:cont_Var_problem}
    \begin{dcases}
    \inf_{y_0,(Z_t^q)_{t\in[0,T]}}\E\Big[\big|\I_{\{\tau^q\leq T\}}\chi_{\tau^q}^q-\Delta\Lambda^q_{\tau^q\wedge T}-Y_{\tau^q\wedge T}^q\big|^2\Big],\quad\text{where,}\\
    X_t^q=x_0 + \int_0^t \big(b(s,X_s^q)-q(s,X_s^q)\big)\d s + \int_0^t\sigma(s,X_s^q)\odot\d W_s,\\
    Y_t^q = y_0 + \int_{(t, \tau^q \wedge T]} \mathrm{d} \Lambda_s^q - \int_t^{\tau^q \wedge T} \big(f_s^{Y,q} - r_s Y_s^q-\langle q(s,X_s^q)\oslash \sigma(s,X_s^q),Z_s^q\rangle\big) \, \mathrm{d}s \\
    \quad\quad\quad+ \int_t^{\tau^q \wedge T} Z_s^q \cdot \mathrm{d}W_s.
    \end{dcases}
\end{equation}

We search for an $\F_0$-measurable initial condition $y_0$ and an $\F_t$-adapted, square-integrable process $Z^q$. It is clear that the unique solution to \eqref{eq:CM_FBSDE} attains an objective value of zero, which implies it is also the unique minimizer of \eqref{eq:cont_Var_problem}. 

Since \eqref{eq:CM_FBSDE} and \eqref{eq:cont_Var_problem} conincide for any $q$ satisfying the drift conditions in Assumption \ref{Ass:X_cond} we can formulate a variational problem, which corresponds to multiple FBSDE.
\begin{equation}\label{eq:cont_Var_problem_multi}
    \begin{dcases}
    \inf_{y_0,(Z_t^{q_1},\ldots,Z_t^{q_K})_{t\in[0,T]}}\sum_{i=1}^K\E\Big[\big|\I_{\{\tau^{q_i}\leq T\}}\chi_{\tau^{q_i}}^{q_i}-\Delta\Lambda^{q_i}_{\tau^{q_i}\wedge T}-Y_{\tau^{q_i}\wedge T}^{q_i}\big|^2\Big],\quad\text{where for } i\in\{1,2,\ldots,K\},\\
    X_t^{q_i}=x_0 + \int_0^t \big(b(s,X_s^{q_i})-q_i(s,X_s^{q_i})\big)\d s + \int_0^t\sigma(s,X_s^{q_i})\odot\d W_s,\\
    Y_t^{q_i} = y_0 + \int_{(t, \tau^{q_i} \wedge T]} \mathrm{d} \Lambda_s^{q_i} - \int_t^{\tau^{q_i} \wedge T} \big(f_s^{Y,{q_i}} - r_s Y_s^{q_i}-\langle q_i(s,X_s^{q_i})\oslash \sigma(s,X_s^{q_i}),Z_s^{q_i}\rangle\big) \, \mathrm{d}s \\
    \quad\quad\quad+ \int_t^{\tau^{q_i} \wedge T} Z_s^{q_i} \cdot \mathrm{d}W_s.
    \end{dcases}
\end{equation}
\noindent
Note that changing the measure does not alter the BSDE’s initial condition. In particular, for every $i \in \{1,2,\ldots,K\}$, we still have $Y_0^{q_i} = y_0^*$, where $y_0^*$ is the first component of the minimizer to \eqref{eq:cont_Var_problem} or, equivalently, \eqref{eq:cont_Var_problem_multi}. Moreover, under the Markovian assumptions in Remark~\ref{rem:Markovian}, the second component of the minimizer is given by
\begin{equation*}
\bigl(Z_t^{q_i,*}\bigr)_{t\in[0,T]} 
\;=\; \bigl(\sigma\bigl(t,X_t^{q_i}\bigr) \odot \mathrm{D}_x v\bigl(t,X_t^{q_i}\bigr)\bigr)_{t\in[0,T]}.
\end{equation*}
Hence, the second component depends only on the trajectory $X_t^{q_i}$, while the underlying functional remains the same for all $q_i$. These observations are crucial from an algorithmic perspective.

\section{Temporal discretization and neural network schemes}\label{seq:discretization}
In this section, we present a fully implementable numerical scheme for approximating the BSDEs as well as the mapping that generates the risk measure introduced in Sections \ref{sec:Val_BSDEs} and \ref{sec:ChangeOfMeasure}. We begin by outlining our algorithm step by step at a high level. In the following sections, we describe in detail the employed temporal discretization algorithms, the BSDE approximation methods, and the neural network–based approach for approximating the risk measures.
\begin{enumerate}
    \item Approximation of the clean value BSDE;
    \item Approximation of the mapping generating the risk measure $\varrho$;
    \item Approximation of the ColVA, CVA, DVA and MVA BSDEs;
    \item Approximation of the FVA BSDE.
\end{enumerate}
It should be emphasized that the enumeration above is best interpreted as layers in our approximation scheme, where each deeper layer depends on approximations obtained from higher layers. Consequently, the computations must be performed in the order specified above. 
\subsection{Temporal Discretization}\label{sec:approximations}
Let $N \in \mathbb{N}^+$. We consider a uniform time grid $\pi(N) \coloneqq \{0 = t_0, t_1, \dots, t_N = T\}$ with constant step size $h \;=\; t_{n+1} - t_n$ for $n = 0,1,\ldots,N-1$. When there is no risk for confusion, we drop $N$ and denote $\pi=\pi(N)$. For simplicity, assume that each of the $P$ derivatives matures at one of these mesh points, that is, for each $p = 1, 2, \ldots, P$, there exists an integer $N_p$ such that $t_{N_p} = T_p$. For $n = 0, 1, \ldots, N-1$, let $\Delta W_n \;\sim\; \mathcal{N}\bigl(\mathbf{0}, h \rho\bigr)$
denote the correlated Gaussian increments, where $\mathbf{0}$ is the zero vector in $\mathbb{R}^{d+2}$ and $\rho$ is the $(d+2)\times(d+2)$ correlation matrix. We write $\Delta\widehat{W}_n$ for the first $d$ components of $\Delta W_n$. Moreover, for each derivative $j$, let $\Delta W^{\mathcal{I}_j}_n$ denote the subset of components of $\Delta W_n$ that are directly or indirectly associated with the $j$-th derivative.

We employ the Euler--Maruyama scheme to discretize the shifted forward SDE defining $X^q$. For the BSDEs, we rewrite the equations as forward SDEs, starting from the (unknown) initial condition. 
\begin{center}
\label{box:XVA_disc_BSDEs}
\fbox{%
\begin{minipage}{0.95\textwidth} % Adjust the box width as needed
\begin{align*}
    F&^{X,q}_n\big(X_{0:n-1}^{\pi,q},\Delta W_{0:n-1}\big)\\
    \defeq& X_0^{\pi,q}+ \sum_{k=0}^{n-1}\bigl[ b(t_k, X_k^{\pi,q}) - q(t_k, X_k^{\pi,q} \bigr]h 
    + \sum_{k=0}^{n-1}\sigma(t_k, X_k^{\pi,q})\odot\Delta W_k,\ 
    \widehat{X}^\pi=[X^{\pi,q}]_{i=1}^d,\\[7pt]
\noalign{\hdashrule[1.0ex]{\linewidth}{0.7pt}{4pt}}
F&^{\widehat{V}^j}_n\big(\widehat{X}_{0:n-1}^\pi,\widehat{V}_{0:n-1}^{j,\pi},\widehat{Z}_{0:n-1}^{j,\pi},\Delta \widehat{W}_{0:n-1}^{\mathcal{I}_j}\big) \\\defeq&\widehat{V}_0^{j,\pi} 
    + \sum_{k=0}^{n-1}\Delta A_{k}^{j,\pi} 
    + r_k^\pi \,\widehat{V}_k^{j,\pi}\,h 
    + \sum_{k=0}^{n-1}\widehat{Z}_k^{j,\pi}\cdot\Delta \widehat{W}_k^{\mathcal{I}_j},
    \ j\in\{1,2,\ldots, P\},\  \widehat{V}^{\pi}=\sum_{j=1}^P\widehat{V}^{j,\pi},\\[7pt]
\noalign{\hdashrule[1.0ex]{\linewidth}{0.7pt}{4pt}}
    F&^{\mathrm{ColVA},q}_n\big(X_{0:n-1}^{\pi,q},\widehat{V}^\pi_{0:n-1},\ColVA_{0:n-1}^{\pi,q},Z_{0:n-1}^{\ColVA,\pi,q},\Delta W_{0:n-1}\big) 
\defeq\ColVA_0^{\pi,q} \\
    &+\sum_{k=0}^{n-1}\Bigl( f^\mathrm{ColVA}\bigl(t_k, C_k^\pi\bigr) - r_k\,\ColVA_k^{\pi,q} - \mathrm{CM}_k^{\ColVA,\pi,q} \Bigr) h-\sum_{k=0}^{n-1}Z_k^{\ColVA,\pi,q}\cdot\Delta W_k,\\[7pt]
\noalign{\hdashrule[1.0ex]{\linewidth}{0.7pt}{4pt}}
    F&^{\mathrm{CVA},q}_n\big(X_{0:n-1}^{\pi,q},\CVA_{0:n-1}^{\pi,q},Z_{0:n-1}^{\CVA,\pi,q},\Delta W_{0:n-1}\big)\\\defeq&\CVA_0^{\pi,q}+ \sum_{k=0}^{n-1}\Bigl( r_k \,\CVA_k^{\pi,q} + \mathrm{CM}_k^{\CVA,\pi,q} \Bigr)h
    + \sum_{k=0}^{n-1}Z_k^{\CVA,\pi,q}\cdot\Delta W_k,\\
    F&^{\mathrm{DVA},q}_n\big(X_{0:n-1}^{\pi,q},\DVA_{0:n-1}^{\pi,q},Z_{0:n-1}^{\DVA,\pi,q},\Delta W_{0:n-1}\big)\\\defeq&\DVA_0^{\pi,q} + \sum_{k=0}^{n-1}\Bigl( r_k \,\DVA_k^{\pi,q} + \mathrm{CM}_k^{\DVA,\pi,q} \Bigr)h
    + \sum_{k=0}^{n-1}Z_k^{\DVA,\pi,q}\cdot\Delta W_k,\\
    F&^{\MVA,q}_n\big(X_{0:n-1}^{\pi,q},\mathrm{IM}^{\mathrm{FC},\pi}_{0:n-1},\mathrm{IM}^{\mathrm{TC},\pi}_{0:n-1},\MVA_{0:n-1}^{\pi,q},Z_{0:n-1}^{\MVA,\pi,q},\Delta W_{0:n-1}\big)\defeq \MVA_0^{\pi,q}\\
    &+ \sum_{k=0}^{n-1}\Bigl( f^\MVA\bigl(t_k, \IM^{\FC,\pi}_k, \IM^{\TC,\pi}_k\bigr) - r_k\,\MVA_k^{\pi,q} - \mathrm{CM}_k^{\MVA,\pi,q} \Bigr) h -\sum_{k=0}^{n-1}Z_k^{\MVA,\pi,q}\cdot\Delta W_k,
    \\[7pt]
\noalign{\hdashrule[1.0ex]{\linewidth}{0.7pt}{4pt}}
    F&^{\FVA,q}_n\big(X_{0:n-1}^{\pi,q},\mathbf{xva}_{0:n-1}^{\pi,q},\IM_{0:n-1}^{\TC,\pi},\FVA_{0:n-1}^{\pi,q},Z_{0:n-1}^{\FVA,\pi,q},\Delta W_{0:n-1}\big)\defeq \FVA_0^{\pi,q}\\
    &+\sum_{k=0}^{n-1}\Bigl( f^\FVA\bigl(t_k, \mathbf{xva}_k^{\pi,q}, \IM^{\TC,\pi}_k\bigr) - r_k\,\FVA_k^{\pi,q} - \mathrm{CM}_k^{\FVA,\pi,q} \Bigr) h -\sum_{k=0}^{n-1}Z_k^{\FVA,\pi,q}\cdot\Delta W_k.
\end{align*}
\end{minipage}%
}
\end{center}

Above, for $n = 0, 1, \ldots, N-1$, we have omitted the superscript $q$ for the unshifted forward SDE (\textit{i.e.,} $X_n^\pi\defeq X_n^{\pi,\mathbf{0}}$) and $\Delta A_n^{j,\pi}=A_{t_{n+1}}^{j,\pi}-A_{t_n}^{j,\pi}$. Set
$\mathrm{MPR}_n^\pi = \lfloor \frac{\text{MPR}_t}{h} \rfloor$,
\begin{equation*}
\IM_n^{\FC,\pi} 
= \varrho\Bigl[\bigl(\widehat{V}_{n+\mathrm{MPR}_n^\pi}^\pi - \widehat{V}_n^\pi\bigr)^+ \,\Big\vert\, X_n^\pi \Bigr],
\quad
\IM_n^{\TC,\pi}
= -\varrho\Bigl[\bigl(\widehat{V}_{n+\mathrm{MPR}_n^\pi}^\pi - \widehat{V}_n^\pi\bigr)^- \,\Big\vert\, X_n^\pi \Bigr].
\end{equation*}
Moreover, for $\mathrm{xva} \in \{\ColVA, \CVA, \DVA, \MVA, \FVA\}$, we use the shorthand notation below for the discretized BSDE compensators 
\begin{equation*}
\mathrm{CM}_n^{\mathrm{xva},\pi,q}
= \Big\langle q\bigl(t_n, X_n^{\pi,q}\bigr) \oslash \sigma\bigl(t_n, X_n^{\pi,q}\bigr),\; Z_n^{\mathrm{xva},\pi,q}\Big\rangle.
\end{equation*}
Note that although the above system is presented in the form of discretized forward SDEs, it cannot be solved solely with the Euler–Maruyama scheme. The reason is that, even though the BSDEs can be reformulated as forward SDEs, the initial conditions and the control processes are still unknown at this stage. Moreover, the risk measures, which are not yet specified, cannot typically be expressed in closed form and therefore require numerical approximation. In the following section, we describe how these approximations are carried out.

\subsection{Semi-discrete optimization problems}\label{sec:semi_discrete_opt}
As noted in the previous section, there are several unknown quantities in the discrete system of equations. Theses quantities are the initial conditions, which are real numbers, the control processes and the risk measures which are functions mapping the current time and state to a vector.

We assume that these BSDEs are Markovian in the state variable $X^q$, meaning that the control processes (denoted by $Z$) and the accumulate contractual cashflow (denoted by $A$) are functions of time and state.

To approximate the BSDEs, we adopt a modified version of the deep BSDE solver proposed in \cite{han2018solving}, 
differing in two key aspects. 
First, we approximate multiple BSDEs \emph{simultaneously} rather than solving each equation separately. 
Second, instead of training a distinct neural network at every time step, 
we use a \emph{single} neural network across all time points, with time included as an additional input feature.

Although the original deep BSDE solver starts from the variational problem in \eqref{eq:cont_Var_problem}, 
our framework begins with the multi-equation variant \eqref{eq:cont_Var_problem_multi}. 
To avoid overburdening the reader with notation, 
we present only the continuous-time variational problem in the form of a general BSDE, 
and then move directly to the semi-discrete formulations for each BSDE. 

Below, we present a hierarchy of semi-discrete optimization problems that motivate our fully implementable algorithms. These problems are structured into four layers of optimization, each building upon the solutions obtained at earlier layers. We denote by superscript $ ^{*} $ the solution entities at the first layer, by $ ^{\diamond} $ those at the second layer, by $ ^{\square} $ those at the third layer, and by $ ^{\bullet} $ those at the fourth (top) layer.

Concretely, the final-layer solution $ ^{\bullet} $ depends on the solutions $ ^{*} $, $ ^{\diamond} $, and $ ^{\square} $, the third-layer solution $ ^{\square} $ depends on $ ^{*} $ and $ ^{\diamond} $, the second-layer solution $ ^{\diamond} $ depends on $ ^{*} $, and the first-layer solution $ ^{*} $ is self-contained. We describe each layer’s problem in detail, highlighting how they interact within the overall framework.
 \vspace{0.25cm}\newline
\textbf{Layer 1} (Clean values):\newline
\begin{equation}\label{eq:clean_BSDE_solver}\begin{dcases}
\underset{\widehat{v}_0^1,\ldots \widehat{v}_0^P,\Z}{\mathrm{minimize }}\;\;
 \sum_{j=1}^P\E
 \Big[
   \big|
     \widehat{V}_{N_j}^{j,\pi}-\Delta A_{N_j}^{j,\pi}
   \big|^2
 \Big],\quad \text{where}\\
\widehat{X}_n^\pi=x_0+\sum_{k=0}^{n-1}\widehat{b}\big(t_k,\widehat{X}_k^\pi\big)h + \sum_{k=0}^{n-1}\widehat{\sigma}(t_k,\widehat{X}_k^\pi)\cdot\Delta \widehat{W}_k,\\
\widehat{V}^{j,\pi}_n=F^{\widehat{V}^j}_n\big(\widehat{X}_{0:n-1}^\pi,\widehat{V}_{0:n-1}^{j,\pi},\widehat{Z}_{0:n-1}^{j,\pi},\Delta \widehat{W}_{0:n-1}^{\mathcal{I}_j}\big),\ \widehat{V}^{j,\pi}_0=\widehat{v}_0^j,\
    \widehat{Z}_k^{j,\pi}=\big[\Z(t_k, \widehat{X}_k^\pi)\big]_{i\in\mathcal{K}_j},\ j\in\{1,2,\ldots,P\}.\end{dcases}
\end{equation}
Here $\big[\Z(t_k, \widehat{X}_k^\pi)\big]_{i\in\mathcal{K}_j}$ denotes elements $\mathcal{K}_j=\{\mathcal{J}_j,\mathcal{J}_j+1,\ldots,\mathcal{J}_j+d_j-1\}$ of $\Z(t_k, \widehat{X}_k^\pi)$ where $\mathcal{Z} \colon [0,T] \times \mathbb{R}^d \to \mathbb{R}^{\sum_{j=1}^P d_j}$, and $\widehat{Z}_{\cdot}^{j,\pi} \colon \{0,1,\dots,N-1\} \to \mathbb{R}^{d_j}$ for each $j \in \{1,2,\dots,P\}$. In BSDE terminology, $\widehat{Z}_{\cdot}^{j,\pi}$ approximates the control process for the $j$-th BSDE. We do not apply the deep BSDE solver to each BSDE individually because doing so would make computation time grow linearly with the (potentially large) number of derivatives. Moreover, we do not apply the method directly to $\widehat{V}$ since financial institutions often require clean values at the level of each derivative, rather than at an aggregated netting set. In addition, separately approximating each derivative and then summing the results often reduces bias compared to approximating the sum of derivatives directly, particularly for contracts with non-smooth payoffs. Finally, we use a single function approximator $\mathcal{Z}$ (instead of one per BSDE) to reduce memory usage.

Denote by $\widehat{v}_0^{1,*},\widehat{v}_0^{2,*},\ldots,\widehat{v}_0^{P,*}$ and $\mathcal{Z}^*$ the optimizers solving \eqref{eq:clean_BSDE_solver}.
\vspace{0.25cm}\newline
\textbf{Layer 2} (Risk measures):\newline
At this stage we must specify the risk measure and in this paper we use Value at risk (VaR) at level $\alpha\in(0,1)$ for $\mathrm{IM}^\mathrm{TC}$ and at level $1-\alpha$ for $\mathrm{IM}^\mathrm{FC}$. However, similar optimization problems can be formulated for other risk measures such as expected shortfall (ES), see for instance \cite{barrera2024statistical,crepey2024adaptive}. 

For $t \in [0, T]$, we define the VaR at $t$ over the margin period of risk, $\mathrm{MPR}_t$, as
\begin{equation*}
    \text{VaR}_{\alpha}\big(\widehat{V}_{t+\mathrm{MPR}_t} - \widehat{V}_t \,\big|\, \widehat{\mathcal{F}}_t\big) = \inf \bigl\{\,x \in \mathbb{R} \colon \mathbb{P}\big(\widehat{V}_{t+\mathrm{MPR}_t} - \widehat{V}_t \le x \,\big|\, \widehat{\mathcal{F}}_t\big) \;\ge\; \alpha \bigr\}.
\end{equation*}
To underpin our regression-based algorithms, we utilize the following equivalent formulation of VaR
\begin{equation*}
    \text{VaR}_{\alpha}\left(\widehat{V}_{t+\mathrm{MPR}_t} - \widehat{V}_t \,\big|\, \widehat{\mathcal{F}}_t\right) = \argmin_{q \in \mathbb{R}} \, \mathbb{E}\left[ \varkappa^\alpha(q;\widehat{V}_{t+\mathrm{MPR}_t}-\widehat{V}_t) \,\bigg|\, \widehat{\mathcal{F}}_t \right],
\end{equation*}
where
$\varkappa^\alpha(q;x)=\max\left( \alpha \cdot (x - q),\, (\alpha - 1) \cdot (x - q) \right)$.
This formulation leverages the properties of the check function in quantile regression, facilitating efficient estimation of the conditional VaR through a regression-based optimization technique proposed in \cite{koenker1978regression}.

Below is a semi-discrete optimzation problem for our specific problem setting
\begin{equation}\label{eq:VaR_regression}\begin{dcases}
\underset{q_n^+,q_n^-}{\mathrm{minimize }}\;\;
 \E
\Big[\varkappa^\alpha\bigl(q^+_n(\widehat{X}_n^\pi,\widehat{V}_n^{\pi,*});\bigl(\widehat{V}_{n+\mathrm{MPR}_n^\pi}^{\pi,*}-\widehat{V}_n^{\pi,*}\bigr)^+\bigr)
 \Big]\\
 \quad\quad\qquad+\E
 \Big[
     \varkappa^\alpha\bigl(q^-_n(\widehat{X}_n^\pi,\widehat{V}_n^{\pi,*});\bigl(\widehat{V}_{n+\mathrm{MPR}_n^\pi}^{\pi,*}-\widehat{V}_n^{\pi,*}\bigr)^-\bigr)
 \Big],\quad \text{where}\\
\widehat{X}_n^\pi=x_0+\sum_{k=0}^{n-1}\widehat{b}\big(t_k,\widehat{X}_k^\pi\big)h + \sum_{k=0}^{n-1}\widehat{\sigma}(t_k,\widehat{X}_k^\pi)\cdot\Delta \widehat{W}_k,\\
\widehat{V}^{j,\pi,*}_n=F^{\widehat{V}^j}_n\big(\widehat{X}_{0:n-1}^\pi,\widehat{V}_{0:n-1}^{j,\pi,*},\widehat{Z}_{0:n-1}^{j,\pi,*},\Delta \widehat{W}_{0:n-1}^{\mathcal{I}_j}\big),\quad \widehat{V}^{j,\pi,*}_0=\widehat{v}_0^j,\quad
    j\in\{1,2,\ldots,P\}\\ \widehat{V}^{\pi,*}=\sum_{j=1}^P\widehat{V}^{j,\pi,*}, \quad
    \widehat{Z}_n^{j,\pi,*}=\big[\Z^*(t_n, \widehat{X}_n^\pi)\big]_{i\in\mathcal{K}_j},\quad j\in\{1,2,\ldots,P\}.\end{dcases}
\end{equation}
Note that by solving the optimization problem above, we simultaneously obtain estimates for both the upper and lower tails of the VaR measure.

Denote by $q_0^{+,\diamond}, q_0^{-,\diamond}, q_1^{+,\diamond}, q_1^{-,\diamond}, \ldots, q_{N-1}^{+,\diamond}, q_{N-1}^{-,\diamond}$ the minimizers of the optimization problems defined in equation~\eqref{eq:VaR_regression}. Specifically, for each $n \in \{0, 1, \ldots, N-1\}$, define
\begin{equation*}
\mathrm{IM}_n^{\mathrm{FC},\pi,\diamond} \defeq q_n^{+,\diamond}(\widehat{X}_n^\pi,\widehat{V}_n^{\pi,*}) \quad \text{and} \quad \mathrm{IM}_n^{\mathrm{TC},\pi,\diamond} \defeq -q_n^{-,\diamond}(\widehat{X}_n^\pi,\widehat{V}_n^{\pi,*}),
\end{equation*}
where $\mathrm{IM}_n^{\mathrm{FC},\pi,\diamond}$ and $\mathrm{IM}_n^{\mathrm{TC},\pi,\diamond}$ represent the semi-discrete random variables describing the initial margin paid by the counterparty and payed to the counterparty, respectively. Including the quantity $\widehat{V}_n^{\pi,*}$ as an input to $q_n^+$ and $q_n^-$ is necessary because, in a discrete-time setting, the quantile of the future price movement is not Markovian in the state variable alone. By incorporating $\widehat{V}_n^{\pi,*}$, we effectively recover the Markov property for these quantile processes.
Superscript $''\diamond''$ indicates that the optimized entities depend not only on the current optimization problem but also on the optimization problem in the first layer, where the optimized entities are denoted with a superscript $''*''$.
\vspace{0.25cm}\newline
\textbf{Layer 3} (ColVA, MVA, CVA, DVA):\newline
\begin{equation}\label{eq:layer_3_BSDE_solver}\begin{dcases}
\underset{\mathbf{xva}_0, \mathcal{Z}^\mathrm{ColVA},\mathcal{Z}^\mathrm{MVA},\mathcal{Z}^\mathrm{CVA},\mathcal{Z}^\mathrm{DVA}}{\mathrm{minimize }}\;\;
 \sum_{i=1}^2\E
\Big[\big|\ColVA_{n_\tau\wedge N}^{\pi,q_i^\mathrm{ColVA}}\big|^2\Big]+\E\Big[\big|\MVA_{n_\tau\wedge N}^{\pi,q_i^\mathrm{MVA}}\big|^2\Big]\\
+\E\Big[\big|\CVA_{n_\tau\wedge N}^{\pi,q_i^\mathrm{CVA}}-\mathbbm{1}_{\{n_\tau\leq N \}} \mathbbm{1}_{\{n_\tau = n_{\tau^\mathcal{C}}\}} 
          \LGD^{\mathcal{C}} (\widehat{V}_{n_\tau}^{\pi,*} - C_{n_\tau}^{\pi,*} - \IM_\tau^{\FC,\pi,\diamond})^+\big|^2\Big],\\
+\E\Big[\big|\DVA_{n_\tau\wedge N}^{\pi,q_i^\mathrm{DVA}}-\mathbbm{1}_{\{n_\tau\leq N \}} \mathbbm{1}_{\{n_\tau = n_{\tau^\mathcal{B}}\}} 
          \LGD^{\mathcal{B}} (\widehat{V}_{n_\tau}^{\pi,*} - C_{n_\tau}^{\pi,*} - \IM_\tau^{\TC,\pi,\diamond})^-\big|^2\Big],\quad \text{where}\\
X_n^{\pi,q^\mathrm{xVA}_i}=F^{X,q_i^\mathrm{xVA}}_n\big(X_{0:n-1}^{\pi,q^\mathrm{xVA}_i},\Delta W_{0:n-1}\big),\quad X_0^{\pi,q^\mathrm{xVA}_i}=x_0,\quad \\
n_{\tau^\B}^i=\inf\{n\in\N\colon [X_n^{\pi,q_i^\mathrm{xVA}}]_{d+1}\leq \xi_{t_n}^1\},\quad n_{\tau^\C}^i=\inf\{n\in\N\colon [X_n^{\pi,q_i^\mathrm{xVA}}]_{d+2}\leq \xi_{t_n}^2\},\\
\widehat{V}^{j,\pi,*}_n=F^{\widehat{V}^j}_n\big(\widehat{X}_{0:n-1}^\pi,\widehat{V}_{0:n-1}^{j,\pi,*},\widehat{Z}_{0:n-1}^{j,\pi,*},\Delta \widehat{W}_{0:n-1}^{\mathcal{I}_j}\big),\  \widehat{V}^{j,\pi,*}_0=\widehat{v}_0^j,\ 
    j\in\{1,2,\ldots,P\},\  \widehat{V}^{\pi,*}=\sum_{j=1}^P\widehat{V}^{j,\pi,*},\\  
    \widehat{Z}_n^{j,\pi,*}=\big[\Z^*(t_n, \widehat{X}_n^\pi)\big]_{i\in\mathcal{K}_j},\ \mathrm{IM}_n^{\mathrm{xC},\pi,\diamond}= q_n^{+,\diamond}(\widehat{X}_n^\pi,\widehat{V}_n^{\pi,*})\ j\in\{1,2,\ldots,P\},\ x\in\{\mathrm{T},\mathrm{F}\},\\
     \ColVA_n^{\pi,q_i^{\ColVA}} = 
    F^{\ColVA,q_i^{\ColVA}}_n\big(X_{0:n-1}^{\pi,q_i^{\ColVA}},\widehat{V}^{\pi,*}_{0:n-1},\ColVA_{0:n-1}^{\pi,q_i^{\ColVA}},Z_{0:n-1}^{\ColVA,\pi,q_i^{\ColVA}},\Delta W_{0:n-1}\big),\\
    \CVA_n^{\pi,q_i^{\CVA}} = 
    F^{\CVA,q_i^{\CVA}}_n\big(X_{0:n-1}^{\pi,q_i^{\CVA}},\CVA_{0:n-1}^{\pi,q_i^{\CVA}},Z_{0:n-1}^{\CVA,\pi,q_i^{\CVA}},\Delta W_{0:n-1}\big),\\
    \DVA_n^{\pi,q_i^{\DVA}} = 
    F^{\DVA,q_i^{\DVA}}_n\big(X_{0:n-1}^{\pi,q_i^{\DVA}},\DVA_{0:n-1}^{\pi,q_i^{\DVA}},Z_{0:n-1}^{\DVA,\pi,q_i^{\DVA}},\Delta W_{0:n-1}\big),\\
    \MVA_n^{\pi,q_i^{\MVA}} =F^{\MVA,q_i^{\MVA}}_n\big(X_{0:n-1}^{\pi,q_i^{\MVA}},\mathrm{IM}^{\mathrm{FC},\pi,\diamond}_{0:n-1},\mathrm{IM}^{\mathrm{TC},\pi,\diamond}_{0:n-1},\MVA_{0:n-1}^{\pi,q_i^{\MVA}},Z_{0:n-1}^{\MVA,\pi,q_i^{\MVA}},\Delta W_{0:n-1}\big)\\
    \ColVA_0^{\pi,q_i^{\ColVA}} =\mathrm{colva}_0,\ \CVA_0^{\pi,q_i^{\CVA}} =\mathrm{cva}_0,\ \DVA_0^{\pi,q_i^{\DVA}} =\mathrm{dva}_0,\ \MVA_0^{\pi,q_i^{\MVA}} =\mathrm{mva}_0,\\
Z_n^{\mathrm{xVA},\pi,q_i^\mathrm{xVA}}=\Z^\mathrm{xVA}(t_n,X_n^{\pi,q_i^\mathrm{xVA}}),\quad i\in\{1,2\},\quad \mathrm{x}\in\{\mathrm{Col},\mathrm{C},\mathrm{D},\mathrm{M}\}.
    \end{dcases}
\end{equation}
In the above, we use the convention and define the infimum of an empty set as $\infty$. Denote by $\mathbf{xva}_0^\square = \{\mathrm{colva}_0^\diamond, \mathrm{cva}_0^\square, \mathrm{dva}_0^\square, \mathrm{mva}_0^\square\}$, $\Z^{\ColVA,\diamond}$, $\Z^{\CVA,\square}$, , $\Z^{\DVA,\square}$ and $\Z^{\MVA,\square}$ are the minimizers of the optimization problem above. Superscript $''\square''$ indicates that the optimized entities depend not only on the current optimization problem but also on the optimization problems in the first and second layers. Note that $\ColVA$ does not depend on the optimization problem in the second layer, hence all optimized entities relating to $\ColVA$ have superscript $''\diamond''$.

The summation over $i=1,2$ controls the measure change. Specifically, we assume that for $\mathrm{x} \in \{\mathrm{Col}, \mathrm{C}, \mathrm{D}, \mathrm{M}\}$, $q_1^\mathrm{xVA} \equiv 0$ and $q^\mathrm{xVA}_2$ are chosen appropriately for the relevant $\mathrm{xVA}$. For instance, when computing CVA, it may be appropriate to choose $q^\CVA_2$ such that defaults of the counterparty occur more frequently than in the original measure. The reason for also using the original measure is that, from a practical perspective, we approximate the valuation adjustments along stochastic trajectories. If we only approximate the problem under a fictitious measure, then we would not have an accurate approximation with the original measure, under which we want to evaluate the model.
\begin{remark}
Since there is no interdependency between the xVAs in this layer, the algorithm can be split into four separate ones, one for each xVA. This was done in the numerical experiments to mitigate RAM limitations, whereas for speed it is preferable to use the algorithm as presented above, which was tested with $N=101$ instead of $N=201$ time points.
\end{remark}
\noindent\textbf{Layer 4} (FVA):
\begin{equation}\label{eq:layer_4_BSDE_solver}\begin{dcases}
\underset{\mathrm{fva}_0, \mathcal{Z}^\mathrm{FVA}}{\mathrm{minimize }}\;\;
 \sum_{i=1}^2\E
\Big[\big|\FVA_{n_\tau\wedge N}^{\pi,q_i^\FVA}\big|^2\Big],\quad \text{where}\\
X_n^{\pi,q^\FVA_i}=F^{X,q}_n\big(X_{0:n-1}^{\pi,q^\FVA_i},\Delta W_{0:n-1}\big),\quad X_0^{\pi,q^\FVA_i}=x_0,\quad \\
n_{\tau^\B}^i=\inf\{n\in\N\colon [X_n^{\pi,q_i^\FVA}]_{d+1}\leq \xi_{t_n}^1\},\quad n_{\tau^\C}^i=\inf\{n\in\N\colon [X_n^{\pi,q_i^\FVA}]_{d+2}\leq \xi_{t_n}^2\},\\
\widehat{V}^{j,\pi,*}_n=F^{\widehat{V}^j}_n\big(\widehat{X}_{0:n-1}^\pi,\widehat{V}_{0:n-1}^{j,\pi,*},\widehat{Z}_{0:n-1}^{j,\pi,*},\Delta \widehat{W}_{0:n-1}^{\mathcal{I}_j}\big),\  \widehat{V}^{j,\pi,*}_0=\widehat{v}_0^j,\ 
    j\in\{1,2,\ldots,P\},\  \widehat{V}^{\pi,*}=\sum_{j=1}^P\widehat{V}^{j,\pi,*},\\  
    \widehat{Z}_n^{j,\pi,*}=\big[\Z^*(t_n, \widehat{X}_n^\pi)\big]_{i\in\mathcal{K}_j},\ j\in\{1,2,\ldots,P\},\ \mathrm{IM}_n^{\mathrm{x}_1\mathrm{C},\pi,\diamond}= q_n^{\mathrm{x}_2,\diamond}(\widehat{X}_n^\pi,\widehat{V}_n^{\pi,*}),\ \mathrm{x}\in\{(\mathrm{T},+),(\mathrm{F},-)\},\\
     \ColVA_n^{\pi,q_i^{\ColVA},\diamond} = 
    F^{\ColVA,q_i^\FVA}_n\big(X_{0:n-1}^{\pi,q_i^{\ColVA}},\widehat{V}^{\pi,*}_{0:n-1},\ColVA_{0:n-1}^{\pi,q_i^{\ColVA},\diamond},Z_{0:n-1}^{\ColVA,\pi,q_i^{\ColVA},\diamond},\Delta W_{0:n-1}\big),\\
    \CVA_n^{\pi,q_i^\FVA,\square} = 
    F_n^{\CVA,q_i^\FVA}\big(X_{0:n-1}^{\pi,q_i^\FVA},\CVA_{0:n-1}^{\pi,q_i^\FVA,\square},Z_{0:n-1}^{\CVA,\pi,q_i^\FVA,\square},\Delta W_{0:n-1}\big),\\
    \DVA_n^{\pi,q_i^\FVA,\square} = 
    F^{\DVA,q_i^\FVA}_n\big(X_{0:n-1}^{\pi,q_i^\FVA},\DVA_{0:n-1}^{\pi,q_i^\FVA,\square},Z_{0:n-1}^{\DVA,\pi,q_i^\FVA,\square},\Delta W_{0:n-1}\big),\\
    \MVA_n^{\pi,q_i^\FVA,\square} =F_n^{\MVA,q_i^\FVA}\big(X_{0:n-1}^{\pi,q_i^\FVA},\mathrm{IM}^{\mathrm{FC},\pi,\diamond}_{0:n-1},\mathrm{IM}^{\mathrm{TC},\pi,\diamond}_{0:n-1},\MVA_{0:n-1}^{\pi,q_i^\FVA,\square},Z_{0:n-1}^{\MVA,\pi,q_i^\FVA,\square},\Delta W_{0:n-1}\big),\\
    \ColVA_0^{\pi,q_i^\FVA,\diamond} =\mathrm{colva}_0^\diamond,\ \CVA_0^{\pi,q_i^\FVA,\square} =\mathrm{cva}_0^\square,\ \DVA_0^{\pi,q_i^\FVA,\square} =\mathrm{dva}_0^\square,\ \MVA_0^{\pi,q_i^\FVA,\square} =\mathrm{mva}_0^\square,\\
Z_n^{\ColVA,\pi,q_i^\FVA,\diamond}=\Z^{\ColVA,\diamond}(t_n,X_n^{\pi,q_i^\FVA}),\quad i\in\{1,2\},\\
Z_n^{\mathrm{xVA},\pi,q_i^\FVA,\square}=\Z^{\mathrm{xVA},\square}(t_n,X_n^{\pi,q_i^\FVA}),\quad i\in\{1,2\},\quad \mathrm{x}\in\{\mathrm{C},\mathrm{D},\mathrm{M}\},\\
    \FVA_n^{\pi,q_i^\FVA,\square} =F_n^{\FVA,q_i^\FVA}\big(X_{0:n-1}^{\pi,q_i^\FVA},\mathbf{xva}^{\pi,q_i^\FVA,\square}_{0:n-1},\mathrm{IM}^{\mathrm{TC},\pi,\diamond}_{0:n-1},\MVA_{0:n-1}^{\pi,q_i^\FVA,\square},Z_{0:n-1}^{\FVA,\pi,q_i^\FVA},\Delta W_{0:n-1}\big),\\
    \FVA_0^{\pi,q_i^\FVA}=\mathrm{fva}_0,\quad Z_n^{\FVA,\pi,q_i^\FVA} = \Z^\FVA(t_n,X_n^{\pi,q_i^\FVA})\quad i\in\{1,2\}.
    \end{dcases}
\end{equation}
Denote by $\mathrm{fva}_0^\bullet$ and $\Z^{\FVA,\bullet}$ the minimizers of the optimization problem above.
\begin{remark}
In the optimization problem \eqref{eq:layer_4_BSDE_solver}, the quantity $\Z^\FVA$ is approximated along stochastic trajectories given by $X_\cdot^{\pi,q_i^\FVA}$. This procedure relies on accurate approximations of 
$\ColVA_\cdot^{\pi,q_i^\FVA,\diamond}$, $\CVA_\cdot^{\pi,q_i^\DVA,\square}$, and $\MVA_\cdot^{\pi,q_i^\FVA,\square}$, 
which in turn depend on 
$\Z^{\ColVA,\diamond}$, $\Z^{CVA,\square}$, $\Z^{DVA,\square}$, and $\Z^{MVA,\square}$. 
However, in the level-four optimization problem, these functions are evaluated along trajectories of $X_\cdot^{\pi,q_i^\FVA}$, whereas in the level-three problem they are evaluated along $X_\cdot^{\pi,q_i^\mathrm{xVA}}$, for $\mathrm{x}\in\{\mathrm{Col}, \mathrm{C}, \mathrm{D}, \mathrm{M}\}$ and $i\in\{1,2\}$. 

In a continuous (undiscretized) setting, this distinction poses no issue so long as the underlying processes share the same support. Once we discretize the probability space and approximate expectations via sample means, however, it becomes crucial that the distributions of $X_\cdot^{\pi,q_i^\FVA}$ and $X_\cdot^{\pi,q_i^\mathrm{xVA}}$ exhibit sufficient overlap in their simulated paths. Without adequate overlap, the variance of the estimators can increase and the numerical approximations for quantities such as $\ColVA_\cdot^{\pi,q_i^\FVA,\diamond}$, $\CVA_\cdot^{\pi,q_i^\DVA,\square}$, and $\MVA_\cdot^{\pi,q_i^\FVA,\square}$ may become unreliable.
\end{remark}

\subsection{Fully discrete optimization problems}
To obtain a fully discrete optimization problem, we replace each expectation in the previous sections with sample means based on $M \in \mathbb{N}$ samples. Specifically, for each optimization problem, we draw $M$ realizations of  $\Delta W\defeq\{\Delta W_n\}_{n\in\{0,\ldots,N-1\}}$ (or $\Delta \widehat{W}\defeq\{\Delta\widehat{W}\}_{n\in\{0,\ldots,N-1\}}$ for layer 1 and 2). For instance, in the layer-3 optimization problem, we use the same realizations of the Gaussian increments for each of the trajectories $X_\cdot^{\pi,q_i^\mathrm{xVA}}$, $i\in\{1,2\}$, $\mathrm{x}\in\{\mathrm{Col}, \mathrm{C}, \mathrm{D}, \mathrm{M}\}$. As a result, these trajectories only differ by the shift in the drift coefficient, which reduces the number of simulated paths.

To avoid overburdening the reader with additional notation, we do not explicitly restate the fully discretized optimization problems here. Instead, the approach of sampling and the practical implementation details are directly analogous to the semi-discrete setting, with expectations simply replaced by empirical averages.

\subsection{Neural networks}
We next describe our neural network architectures and training procedures, structured across the various layers of the xVA problem. Below, we detail the network designs, hyperparameters, and optimization strategies used at each layer. For further details we refer to \url{https://github.com/AlessandroGnoatto/MultiLayerDeepXVA}.
\vspace{0.25cm}\newline
\textbf{Layer 1} (Clean values):\newline
We represent $\widehat{v}_0,\widehat{v}_1,\ldots,\widehat{v}_P$ with scalar valued trainable parameters and $\widehat{\Z}$ with a fully connected neural network $[0,T]\times\R^d\to\R^{\sum_{j=1}^Pd_j}$. The network consists of three hidden layers, each with 100 neurons and we apply the relu activation function to the input layer and each hidden layer. We use $2^{20}$ samples in the training (optimization) procedure, with a batch size $2^{11}$, with the Adam optimization algorithm, \cite{kingma2014adam}. 
\vspace{0.25cm}\newline
\textbf{Layer 2} (Quantile regression):\newline
For each $n \in \{0,1,\ldots,N-1\}$, the pairs $\{q^+_\alpha, q^-_\alpha\}$ are represented by a fully connected neural network, $\R^d\times\R^P\to\R^2$, with three hidden layers, each comprising 16 neurons. 
Again, we apply ReLU activations to the input and hidden layers. 
We use $2^{12}$ samples, a batch size of $2^{11}$, and the Adam optimizer. 
The data is split into training and validation sets; training continues until no improvement is observed on the validation set for 100 consecutive epochs.
We adopt this approach due to the separate network at each time point and the associated uncertainty in how much data is required at the current time step.
\vspace{0.25cm}\newline
\textbf{Layer 3} (ColVA, CVA, DVA, MVA):\newline
We represent the initial values of all four xVAs, $\mathbf{xva}_0$, by a trainable parameter in $\mathbb{R}^4$. 
The functions $\mathcal{Z}^{\mathrm{ColVA}}$, $\mathcal{Z}^{\mathrm{MVA}}$, $\mathcal{Z}^{\mathrm{CVA}}$, and $\mathcal{Z}^{\mathrm{DVA}}$ are collectively parameterized by a fully connected network $[0,T]\times\mathbb{R}^{d+2} \to \mathbb{R}^{d+2}$,
with three hidden layers of 50 neurons each and ReLU activation in the input and hidden layers. 
We use $2^{20}$ samples, a batch size of $2^{11}$, and train with Adam. The reduced network size relative to Layer 1 reflects the lower-dimensional outputs. 
\vspace{0.25cm}\newline
\textbf{Layer 4} (FVA):\newline
We represent $\mathbf{fva}_0$ by a trainable scalar parameter. The function $\mathcal{Z}^\FVA$ is modeled by a fully connected neural network $[0,T]\times\mathbb{R}^{d+2} \to \mathbb{R}^{d+2}$,
also with three hidden layers of 50 neurons each and ReLU activation at the input and hidden layers. 
As with the other layers, we use $2^{20}$ samples, a batch size of $2^{11}$, and the Adam optimizer. 
Here too, the reduced network size relative to Layer 1 stems from the lower-dimensional output.

\section{Error analysis}\label{sec:error_analysis}
In this section, we extend the a posteriori error estimate from \cite{han2020convergence} 
to BSDEs involving stopping times (or equivalently, BSDEs on domains). In addition, we allow 
the driver to take approximations 
coming from preceding layers as inputs. Consequently, our bound for the overall simulation error 
incorporates not only the a posteriori terms for the current layer 
(as in \cite{han2020convergence}) but also the a posteriori terms inherited 
from earlier layers.

\subsection{A posteriori error bounds for BSDEs with stopping times}
Our proof follows a line of argument similar to that of \cite{han2020convergence}, 
but we must rely on preliminary results from \cite{bouchard2009strong} (instead of \cite{bender2008time}), 
which provide estimates for backward discretization schemes on domains. 
As in \cite{han2020convergence}, we first establish a stability bound for the difference 
between (i) the BSDE approximation obtained via the scheme of \cite{bouchard2009strong} 
and (ii) the deep BSDE solver proposed here. 
We then apply the error bounds of \cite{bouchard2009strong} to derive the final 
simulation‐error estimate for our method.

Let $b\colon [0,T]\times\R^d\to\R^d$, $\sigma\colon [0,T]\times\R^d\to\R^d$, $f\colon [0,T]\times\R\times\R\times\R^d\to\R$, $g\colon[0,T]\times\R^d\to\R$ and $v\colon[0,T]\times\R^d\to\R$.
 and consider decoupled FBSDEs on the form 
\begin{equation}\label{eq:FBSDE_err_an}
\begin{cases}
        X_t=x_0 + \int_0^tb(s,X_s)\d s + \int_0^t\sigma(s,X_s)\d W_s,\\
        Y_t^v = g(\tau\wedge T,X_{\tau\wedge T}) +\int_{t}^{ T}\mathbb{I}_{\{s\leq \tau\}}f(s,v(s,X_s),Y_s^v,Z_s^v)\d s -\int_t^ TZ_s^v\cdot\d W_s,
        \end{cases}
\end{equation}
where $\tau\defeq\inf\{t\in[0,T]\colon X_t\notin\mathcal{O}\}$. Here the boundary is an open, piecewise smooth and connected set $\mathcal{O}\subset\R^d$. \begin{assumption}\label{ass:BSDE_coeff}
Let $t\in[0,T]$, $x,x_1,x_2\in\R^d$, $y,y_1,y_2\in\R$ and $z_1,z_2\in\R^d$, denote by $\Delta x=x_1-x_2$, $\Delta y=y_1-y_2$, $\Delta v=v_1(t,x)-v_2(t,x)$, $\Delta y=y_1-y_2$ and $\Delta z=z_1-z_2$. We assume
    \begin{enumerate}
        \item \textbf{One sided Lipschitz continuity of $f$ in $y$}.\newline There exists a, possibly negative, constant $k_y$ such that \begin{equation*}
            \big(f(t,v,y,z_1)-f(t,v,y,z_2)\big)\Delta y\leq k_y\Delta y^2.
        \end{equation*}
        \item \textbf{Lipschitz continuity of $(b,\sigma,f,g,v)$}.\newline There exist positive constants $L_b,L_\sigma,L_f,L_g,L_v$ such that\begin{align*}
        &|b(t,x_1)-b(t,x_2)|^2\leq L_b|\Delta x|^2, \quad
                |\sigma(t,x_1)-\sigma(t,x_2)|^2\leq L_\sigma|\Delta x|^2, \\
        &|f(t,v_1(t,x_1),y_1,z_1)-f(v_2(t,x_1),y_2,z_2)|^2\leq L_f\big(|\Delta v|^2 + |\Delta y|^2 + |\Delta z|^2\big),\\
            &|g(t,x_1)-g(t,x_2)|\leq L_g|\Delta x|^2,\quad    |v(t,x_1)-v(t,x_2)|\leq L_v|\Delta x|^2. 
        \end{align*} This guarantees that $f$ is Lipschitz continuous in $x$, with Lipschitz constant $L_f'=L_fL_v$, and we may define $f^v(t,x,y,z)\defeq f(t,v(t,x),y,z)$.
        \item \textbf{Linear growth of $(b,\sigma,f,g,v)$}.\newline
        There exist positive constants $K_b,K_\sigma,K_f,K_g,K_v$ such that
        \begin{align*}
        &|b(t,x_1)-b(t,x_2)|^2\leq K_b|\Delta x|^2, \quad
                |\sigma(t,x_1)-\sigma(t,x_2)|^2\leq K_\sigma|\Delta x|^2, \\
        &|f(t,v(t,x),y,z)|\leq K_f(1+|x| + |y| + |z|),\quad|g(t,x)|\leq K_g(1+|x|),\quad    |v(t,x)|\leq K_v(1+| x|). 
        \end{align*}
        \item \textbf{Hölder continuity of $(f,b,\sigma)$ in $t$}.\newline
        There exist positive constants $C_b,C_\sigma,C_f$ such that\begin{align*}
            &|b(t,x)-b(s,x)|\leq C_b|t-s|^{1/2}|,\quad |\sigma(t,x)-\sigma(s,x)|\leq C_\sigma|t-s|^{1/2}|,\\
            &f(t,v(t,x),y,z)-f(s,v(s,x),y,z)|\leq C_f|t-s|^{1/2}.
        \end{align*}
        This implicitly assumes Hölder-$\tfrac{1}{2}$ continuity of $v$ in $t$.
    \end{enumerate}
\end{assumption}
Under these standard assumptions, there exists a unique solution $(X,Y,Z)\in\mathbb{S}^2(\mathbb{F})\times \mathbb{S}^2(\mathbb{F})\times \mathbb{H}^2(\mathbb{F})$ to \eqref{eq:FBSDE_err_an}, see \textit{e.g.,} \cite{peng1991probabilistic,pardoux2005backward}.

We now present the backward discretization scheme proposed in \cite{bouchard2009strong}. For $n\in\{0,1,\ldots,N-1\}$ let $X_n^\pi$ be the Euler--Maruyama approximation of $X$ and   define the discrete stopping time as $n_{\tau^\pi}=\inf\{n\in\{0,1,\ldots N\}\colon X_n^\pi\notin\mathcal{O}\}$ and $\tau^\pi\defeq t_{n_{\tau^\pi}}$. For the backward SDE in \eqref{eq:FBSDE_err_an}, we define the  backward scheme:
\begin{equation}\label{eq:bouchard_scheme}
        \begin{cases}
\widebar{Y}_n^{\pi,v}=\E[\widebar{Y}_{n+1}^{\pi,v}] + \mathbb{I}_{\{t_n<\tau^\pi\}}f(t_n,v(t_n,X_n^\pi),\widebar{Y}_n^{\pi,v},\widebar{Z}_n^{\pi,v})h;\quad \widebar{Y}_N^{\pi,v}=g(\tau^\pi\wedge T,X_{\tau^\pi\wedge T}),\\
    \widebar{Z}_n^{\pi,v}=\frac{1}{h}\E[\widehat{Y}_{n+1}^{\pi,v}\Delta W_n\,|\,\F_{t_n}].
        \end{cases}
    \end{equation}
For $t\in[0,T]$, let $\phi(t)=\sup\{n\colon\,\pi\ni nh \leq t\}$. Define the discretization error obtained by the above scheme as
    \begin{equation*}
        \text{Err}(h)_\vartheta^2\defeq\E\big[\sup_{t\in[0,\vartheta]}|Y_t^v-\widebar{Y}_{\phi(t)}^{\pi,v}|^2\big] + \int_0^\vartheta\E|Z_t^v-\widebar{Z}_{\phi(t)}^{\pi,v}|^2\d t.
    \end{equation*}

The following assumptions regarding the domain is a non-technical summary of Assumptions \textbf{D1}, \textbf{D2} and \textbf{C} from \cite{bouchard2009strong}, to which we refer for a detailed presentation.

\begin{assumption}\label{ass:domain}
\begin{itemize}
    \item[\textbf{\emph{(D1)}}] \textbf{Domain regularity}.\\ The domain $\mathcal{O}$ is defined as the intersection of finitely many $C^2$ (twice continuously differentiable) and bounded regions. This ensures that the overall boundary of $\mathcal{O}$ is well smooth except possibly at finitely many "corner" points, which is crucial for constructing a smooth distance function and applying standard PDE results.
    
    \item[\textbf{\emph{(D2)}}] \textbf{Uniform exterior and interior conditions}.\\ For every boundary point, there exists an external ball (touching the domain only at that point) and an internal cone (with its vertex at that point) contained within the domain. This prevents the boundary from having extreme flatness or inward cusps, ensuring that the process has a uniformly controlled mechanism to exit the domain.
    
    \item[\textbf{\emph{(C)}}] \textbf{Non-degeneracy and ellipticity near the boundary}.\\ The diffusion is uniformly non-degenerate in the normal direction at regular parts of the boundary and uniformly elliptic near corner points. In practical terms, the process does not "stick" to the boundary; it is able to exit properly. This is essential for controlling the error in the time-discretization, ultimately leading to the $O(h^{1/4-\varepsilon})$ convergence rate for the BSDE approximation.
\end{itemize}
\end{assumption}

\begin{assumption}[Smooth terminal condition with bounded derivatives]\label{ass:terminal_cond}
The terminal condition function $g$ is assumed to be one time continuously differentiable in $t$ and twice continuously differentiable in $x$ with bounded partial derivatives (1 times in $t$ and twice in $x$), \textit{i.e.,} there exists a positive constant $K$ such that $g \in C^{1,2}([0,T]\times \mathbb{R}^d)\text{ and }\|\partial_t g\|_\infty + \|D_x g\|_\infty + \|D^2_{xx} g\|_\infty \leq K$.
  Here $\|\cdot\|_\infty$ represents the infinity-norm.
\end{assumption}

\begin{theorem}[Bouchard and Menozzi, 2009, \cite{bouchard2009strong}]\label{thm:bounchard}
Let Assumptions~\ref{ass:BSDE_coeff}-\ref{ass:terminal_cond} be satisfied, then for each $\epsilon\in(0,1/2)$, there exists a constant $C_\epsilon$, such that \begin{equation*}
    \emph{Err}(h)_{\tau\wedge\tau^\pi}^2\leq C_\epsilon h^{1-\epsilon},\text{ and
    }    \emph{Err}(h)_T^2\leq C_\epsilon h^{1/2-\epsilon}.
\end{equation*}
\end{theorem}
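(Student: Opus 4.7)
The plan is to mirror the two-tier analysis of Bouchard and Menozzi by splitting the simulation error into a \emph{clean interior} contribution, which behaves essentially like the standard BSDE discretization error, and a \emph{boundary} contribution, whose loss stems from the mismatch between the continuous exit time $\tau$ and its discrete counterpart $\tau^\pi$. The key observation is that on the random interval $[0,\tau\wedge\tau^\pi]$ both $X$ and $X^\pi$ remain inside $\mathcal{O}$, so the indicator $\mathbb{I}_{\{s\leq\tau\}}$ in the driver equals $1$ almost surely along the relevant trajectories. On this set the FBSDE reduces to an unconstrained problem, and the standard machinery (Bouchard--Touzi $L^2$-regularity of $Z$, BDG, discrete Gronwall), combined with the Euler estimate $\mathbb{E}[\sup_{t\leq T}|X_t-X^\pi_{\phi(t)}|^2]=O(h^{1-\epsilon})$, delivers $\emph{Err}(h)^2_{\tau\wedge\tau^\pi}\leq C_\epsilon h^{1-\epsilon}$. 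The $\epsilon$-loss absorbs logarithmic factors from the supremum norm and from the degeneracy of the map $x\mapsto v(t,x)$ near the boundary.

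For the full-interval bound $\emph{Err}(h)^2_T\leq C_\epsilon h^{1/2-\epsilon}$, I would further decompose
\begin{equation*}
Y_t-\widebar{Y}^{\pi,v}_{\phi(t)}=\bigl(Y_t-Y_{t\wedge\tau\wedge\tau^\pi}\bigr)+\bigl(Y_{t\wedge\tau\wedge\tau^\pi}-\widebar{Y}^{\pi,v}_{\phi(t\wedge\tau\wedge\tau^\pi)}\bigr)+\bigl(\widebar{Y}^{\pi,v}_{\phi(t\wedge\tau\wedge\tau^\pi)}-\widebar{Y}^{\pi,v}_{\phi(t)}\bigr).
\end{equation*}
The middle term is the interior error already bounded in the first step. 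The remaining two terms measure the contribution from the \emph{boundary strip} $[\tau\wedge\tau^\pi,\tau\vee\tau^\pi]$. Controlling them reduces to a Krylov-type estimate on the probability that $X$ or $X^\pi$ lies in a shell of width $O(h^{1/2})$ around $\partial\mathcal{O}$: under the uniform non-degeneracy in Assumption~\ref{ass:domain}(\textbf{C}), the law of $X_t$ admits a bounded density near $\partial\mathcal{O}$, so the mass of such a shell is $O(h^{1/2})$. Combined with the smoothness of $g$ from Assumption~\ref{ass:terminal_cond}, this shell estimate converts into a Lipschitz bound on the terminal mismatch $g(\tau\wedge T,X_{\tau\wedge T})-g(\tau^\pi\wedge T,X^\pi_{\tau^\pi\wedge T})$, producing the extra $O(h^{1/2-\epsilon})$ loss.

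The main obstacle is quantifying the exit time discrepancy. Unlike the forward SDE error, the first exit time of a diffusion from a smooth domain is only Hölder-$\tfrac{1}{2}$ continuous with respect to the driving path, and this exponent is precisely what caps the attainable rate. To turn this heuristic into an inequality, I would introduce a signed distance function $d(\cdot)$ to $\partial\mathcal{O}$, which is $C^2$ away from corners by virtue of \textbf{D1}--\textbf{D2}, apply Itô's formula to $d(X_t)$, and combine Bernstein-type concentration for the Brownian increments with the uniform non-degeneracy \textbf{C} to show that $\mathbb{P}(|\tau-\tau^\pi|>\delta)$ decays at the required rate. Transferring this path-wise statement into the $L^2$-bound on $\emph{Err}(h)_T$ then requires a careful treatment of the regularity of $Z^v$ close to $\partial\mathcal{O}$, which degrades at exactly the rate producing the $h^{1/2-\epsilon}$ ceiling. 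Ellipticity is indispensable here: without it, neither the shell probability nor the exit time deviation can be bounded sharply, and the rate would deteriorate further; the corner handling via \textbf{D2} is what keeps the analysis uniform in $x$.
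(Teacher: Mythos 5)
This statement is not proved in the paper at all: it is quoted verbatim from Bouchard and Menozzi \cite{bouchard2009strong} and used as a black box in the subsequent a posteriori analysis (Theorem~\ref{thm:sim_error}). So there is no internal proof to compare your attempt against; the only fair benchmark is the original argument, of which your sketch reproduces the broad architecture (interior versus boundary-strip decomposition, an exit-time discrepancy estimate of order $h^{1/2-\epsilon}$, occupation-time/shell bounds near $\partial\mathcal{O}$ driven by the non-degeneracy condition \textbf{(C)} and the regularity conditions \textbf{(D1)}--\textbf{(D2)}).

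As a proof, however, the proposal has two genuine gaps. First, the two decisive estimates --- $\mathbb{E}|\tau-\tau^\pi|\leq C_\epsilon h^{1/2-\epsilon}$ and the $L^2$-regularity of $Z^v$ up to the exit time --- are named but never established; they are precisely the technical core of the thirty-page source paper, so asserting them is assuming the conclusion. Second, and more substantively, your claim that $\mathrm{Err}(h)^2_{\tau\wedge\tau^\pi}\leq C_\epsilon h^{1-\epsilon}$ follows from ``standard unconstrained machinery'' because the indicator in the driver equals one on $[0,\tau\wedge\tau^\pi]$ does not go through: both the stopped and the unstopped errors see the same terminal mismatch $g(\tau\wedge T,X_{\tau\wedge T})-g(\tau^\pi\wedge T,X^\pi_{\tau^\pi\wedge T})$, and a mere Lipschitz bound on $g$ converts this into $O(\mathbb{E}|\tau-\tau^\pi|)=O(h^{1/2-\epsilon})$ in squared norm, which would cap the stopped error at $h^{1/2-\epsilon}$ as well. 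The near-first-order rate for the stopped error is obtained only by exploiting Assumption~\ref{ass:terminal_cond} ($g\in C^{1,2}$ with bounded derivatives) through an It\^o expansion of $g$ along both the continuous and the discrete paths, so that the first-order part of the terminal mismatch is absorbed into the martingale component. Your sketch invokes the smoothness of $g$ only for the boundary-strip term, which inverts where that hypothesis actually does its work.
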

We define two additional schemes for the BSDE in \eqref{eq:FBSDE_err_an}, namely the (not fully implementable) forward scheme \begin{equation}\label{eq:forward_scheme}
\begin{cases}
        \widetilde{Y}_0=\mathcal{Y}_0^\pi,\quad \widetilde{Z}_0=\mathcal{Z}_0^\pi,\\
        \widetilde{Y}_{n+1}=\widetilde{Y}_n-\mathbb{I}_{\{t_n<\tau^\pi\}}f(t_n,v(t_n,X_n^\pi),\widetilde{Y}_n^{\pi,v},\widetilde{Z}_n^{\pi,v})h +\widetilde{Z}_n^{\pi,v}\cdot\Delta W_n,\\
        \widetilde{Z}_n^{\pi,v}=\mathcal{Z}(t_n,X_n^\pi).
        \end{cases}
\end{equation}
and the generic (neither forward nor backward) scheme 
\begin{equation}\label{eq:generic_scheme}
        \begin{cases}
Y_n^{\pi,v}=\E[Y_{n+1}^{\pi,v}] + \mathbb{I}_{\{t_n<\tau^\pi\}}f(t_n,v(t_n,X_n^\pi),Y_n^{\pi,v},Z_n^{\pi,v})h,\\
    Z_n^{\pi,v}=\frac{1}{h}\E[Y_{n+1}^{\pi,v}\Delta W_n\,|\,\F_{t_n}].
        \end{cases}
    \end{equation}
Note that both the backward scheme \eqref{eq:bouchard_scheme} and the forward scheme \eqref{eq:forward_scheme} can be cast into the form of \eqref{eq:generic_scheme}.
Finally, we impose the following assumptions to guarantee that \eqref{eq:forward_scheme} is well-defined.
\begin{assumption}\label{ass:ANN}
    The initial conditions of the forward scheme, $\mathcal{Y}_0^\pi\in\R$ and $\mathcal{Z}_0^\pi\in\R^d$ are bounded. For $t\in[0,T)$, the functions $\mathcal{Z}(t,\cdot)$ belongs to the set of measurable functions $\R^d\to\R^d$ and satisfies a linear growth condition. 
\end{assumption}
The following Theorem generalizes an a posteriori estimate in \cite{han2020convergence} to BSDEs in a domain. However, for our purpose, it is sufficient to consider decoupled BSDEs. This simplifies the error analysis and relaxes some strong assumptions on the coeffiecients.
\begin{theorem}\label{thm:sim_error}Let assumptions \ref{ass:BSDE_coeff}-\ref{ass:ANN} be satisfied, then for each $\epsilon\in(0,1/2)$, there exists a constant $C_\epsilon$ such that
\begin{align*}
    &\E\big[\sup_{t\in[0,T]}|X_t-X^\pi_{\phi(t)}|^2 + \underset{k\in\{0,1,\ldots,N-1\}}{\mathrm{max}} \E\big[\sup_{t\in[t_k,t_{k+1}]}\I_{\{t<\vartheta\}}|Y_t^{v_1}-\widetilde{Y}_{\phi(t)}^{\pi,v_2}|^2\big] + \int_0^\vartheta\E|Z_t^{v_1}-\widetilde{Z}_{\phi(t)}^{\pi,v_2}|^2\d t\\
    &\leq \begin{cases}C(h^{1-\epsilon}+  \underset{n\in\{0,\ldots,N-1\}}{\max}\E|v_1(t_n,X_n^\pi)-v_2(t_n,X_n^\pi)|^2 + \E\big|g(\tau^\pi\wedge T,X_{\tau\wedge T}^\pi)-\widetilde{Y}_N^{\pi,v}\big|^2)&\vartheta=\tau\wedge\tau^\pi,\\
    C(h^{1/2-\epsilon} +  \underset{n\in\{0,\ldots,N-1\}}{\max}\E|v_1(t_n,X_n^\pi)-v_2(t_n,X_n^\pi)|^2 +\E\big|g(\tau^\pi\wedge T,X_{\tau\wedge T}^\pi)-\widetilde{Y}_N^{\pi,v}\big|^2)&\vartheta=T.
    \end{cases}
\end{align*}
    \begin{proof}
    First, note that the error induced from the approximation of $X$ is the standard Euler--Maruyama discretization error of order $\tfrac{1}{2}$.
    
For the errors induced by the approximations of $(Y^{v_1},Z^{v_1})$ we follow closely the proof in \cite{han2020convergence} in which the special case $\mathcal{O}=\R^d$ and $v_1=v_2$ is proven (although more general in the coupling between the forward and backward SDEs). 

Define $\delta Y_n=Y_n^{\pi,1}-Y_n^{\pi,2}$, $\delta Z_n=Z_n^{\pi,1}-Z_n^{\pi,2}$,  $\delta f_n=f(t_n,v(t_n,X_n^\pi),Y_n^{\pi,1},Z_n^{\pi,1})\\-f(t_n,v(t_n,X_n^\pi),Y_n^{\pi,2},Z_n^{\pi,2})$, and $\delta v_n=v_1(t_n,X_n^\pi)-v_2(t_n,X_n^\pi)$        
where $\{Y_n^{\pi,1}\}_{n\in\{0,\ldots,N\}}$,\\ $\{Z_n^{\pi,1}\}_{n\in\{0,\ldots,N-1\}}$ and $\{Y_n^{\pi,2}\}_{n\in\{0,\ldots,N\}}$, $\{Z_n^{\pi,2}\}_{n\in\{0,\ldots,N-1\}}$ are two solutions to \eqref{eq:generic_scheme}. Directly from \eqref{eq:generic_scheme} we have $\delta Y_n=\E[\delta Y_{n+1}|\F_{t_n}]+\mathbb{I}_{\{t_n<\tau^\pi\}}\delta f_n$ and $\delta Z_n=\frac{1}{h}\E[\delta Y_n\Delta W_n|\F_{t_n}]$. By the martingale representation theorem we know that there exists a unique process $(\delta Z_t)_{t\in[t_n,t_{n+1}]}\in\mathbb{H}^2(\mathbb{F})$ such that $\delta Y_{n+1}=\delta Y_n + \int_{t_n}^{t_{n+1}}\delta Z_t \d W_s$. By the It\^o isometry and subsequently by the root-mean square and geometric
mean inequality (RMS-GM inequality) and Lipschitz continuity of $f$ in the second and fourth components, for any $\lambda>0$ we have\begin{align*} 
    \E|\delta &Y_{n+1}|^2=\E|\delta Y_n - \mathbb{I}_{\{t_n<\tau^pi\}}\delta f_n h|^2 + \int_{t_n}^{t_{n+1}}|\delta Z_t|^2\d t\\
    \geq&\E|\delta Y_n|^2 - 2h\E\Big[\big( f(t_n,v_1(t_n,X_n^\pi),Y_n^{\pi,1},Z_n^{\pi,1}) - f(t_n,v_1(t_n,X_n^\pi),Y_n^{\pi,2},Z_n^{\pi,1})\big)\mathbb{I}_{\{t_n<\tau^\pi\}}\delta Y_n\Big] \\
    &- 2h\E\Big[\big( f(t_n,v_1(t_n,X_n^\pi),Y_n^{\pi,2},Z_n^{\pi,1}) - f(t_n,v_2(t_n,X_n^\pi),Y_n^{\pi,2},Z_n^{\pi,2})\big)\mathbb{I}_{\{t_n<\tau^\pi\}}\delta Y_n\Big]+ \int_{t_n}^{t_{n+1}}|\delta Z_t|^2\d t\\ 
    \geq&\E|\delta Y_n|^2 - 2hk_f\E|\delta Y_n|^2 -h(\lambda\E|\delta Y_n|^2+ \lambda^{-1}(L_v\E|\delta v_n|^2+L_f\E|\delta Z_n|^2))+ \int_{t_n}^{t_{n+1}}|\delta Z_t|^2\d t.
\end{align*}
Noting that $\int_{t_n}^{t_{n+1}}|\delta Z_t|^2\d t=h\E|\delta Z_n|^2$ and collecting terms yield
\begin{equation}\label{eq:split_YZ}
     \E|\delta Y_{n+1}|^2\geq(1-h(2k_y+\lambda))\E|\delta Y_n|^2 +h(1-\lambda^{-1}L_f)\E|\delta Z_n|^2 -h\lambda^{-1}L_v\E|\delta v_n|^2
\end{equation}
Choosing $\lambda\geq L_f$, and assuming sufficiently small $h$ such that $(2k_y+\lambda)<1$ yield
\begin{equation*}
    \E|\delta Y_n|^2 \leq(1-h(2k_y+\lambda))^{-1}\big(\E|\delta Y_{n+1}|^2 + h\lambda^{-1}L_v\E|\delta v_n|^2\big). 
\end{equation*}
Setting $A=(1-h(2k_y+\lambda))^{-1}$, and using induction (a Discrete Grönwall type inequality), we obtain \begin{equation*}
    \E|\delta Y_n|^2\leq A^N\E|\delta Y_N|^2 + TL_v\lambda^{-1}A^N\max_{k\in\{0,1,\ldots,N\}}\E|\delta v_k|^2.
\end{equation*}
To find a bound for $\E|\delta Z_n|^2$ we choose $\lambda=2L_f$ in \eqref{eq:split_YZ} to obtain
\begin{equation*}
    \E|\delta Z_n|^2h\leq 2(\E|\delta Y_{n+1}|^2-\E|\delta Y_n|^2) + 4h(k_y+L_f)\E|\delta Y_n|^2) + h\lambda L_vL_f^{-1}\E|\delta v_n|^2.
\end{equation*}
By summing the above over all $n$, and noting that the first term creates a telescoping sum, we obtain
\begin{align*}
    \sum_{n=0}^{N-1}\E|\delta Z_n|^2h&\leq 2\E|\delta Y_N|^2 + 4T(k_y+L_f)\max_{k\in\{0,1,\ldots,N\}}\E|\delta Y_k|^2 + TL_vL_f^{-1}\max_{k\in\{0,1,\ldots,N\}}\E|\delta v_k|^2\\
    &\leq  (2+4T(k_y+L_f)A^N)\E|\delta Y_N|^2 + T\big(4TL_v(k_y+L_f)+L_vL_f^{-1}\big)\lambda^{-1}A^N\max_{k\in\{0,1,\ldots,N\}}\E|\delta v_k|^2
\end{align*}
Finally, Letting $Y^{\pi,1}=\widebar{Y}^{\pi,v_1}, Z^{\pi,1}=\widebar{Z}^{\pi,v_1}$ and $Y^{\pi,1}=\widetilde{Y}^{\pi,v_2}, Z^{\pi,1}=\widetilde{Z}^{\pi,v_2}$, add and subtract the exact solution $(Y^{v_1},Z^{v_1})$ and applying Theorem~\ref{thm:bounchard} prove the claim.
\end{proof}
\end{theorem}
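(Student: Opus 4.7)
The plan is to decompose the error into three tractable pieces by a triangle inequality. First, the forward SDE error $\E[\sup_{t\in[0,T]}|X_t-X_{\phi(t)}^\pi|^2]$ is classical Euler--Maruyama and is $O(h)$ under Assumption~\ref{ass:BSDE_coeff}, so it is absorbed by the claimed rate. For the backward part, I would compare the implemented forward scheme $(\widetilde{Y}^{\pi,v_2},\widetilde{Z}^{\pi,v_2})$ to Bouchard and Menozzi's reference backward scheme $(\widebar{Y}^{\pi,v_1},\widebar{Z}^{\pi,v_1})$, the latter benefiting from the error estimate of Theorem~\ref{thm:bounchard}. The heart of the argument is then a discrete stability result for the generic scheme \eqref{eq:generic_scheme} that tolerates two different value-function inputs $v_1$, $v_2$.

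To obtain that stability estimate, I would denote two solutions by $(Y^{\pi,1},Z^{\pi,1})$ and $(Y^{\pi,2},Z^{\pi,2})$, set $\delta Y_n,\delta Z_n,\delta f_n,\delta v_n$ as their differences, and use the martingale representation theorem to write $\delta Y_{n+1}=\delta Y_n-\I_{\{t_n<\tau^\pi\}}\delta f_n h+\int_{t_n}^{t_{n+1}}\delta Z_s\,\mathrm{d}W_s$. Squaring, taking expectations, and invoking It\^o's isometry produces $\E|\delta Y_{n+1}|^2$ on one side and a term $h\E|\delta Z_n|^2$ on the other. I would then apply the one-sided Lipschitz bound of $f$ in $y$, the Lipschitz bound in $(v,z)$, and a Young inequality with a free parameter $\lambda>0$ to get a recursion of the form
\begin{equation*}
\E|\delta Y_{n+1}|^2 \;\geq\; \bigl(1-h(2k_y+\lambda)\bigr)\,\E|\delta Y_n|^2 + h\bigl(1-\lambda^{-1}L_f\bigr)\E|\delta Z_n|^2 - h\lambda^{-1}L_v\,\E|\delta v_n|^2.
\end{equation*}
Choosing $\lambda\geq L_f$ and $h$ small enough, a discrete Gr\"onwall iteration bounds $\max_n\E|\delta Y_n|^2$ by $\E|\delta Y_N|^2+\max_n\E|\delta v_n|^2$. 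Choosing instead $\lambda=2L_f$ and summing the recursion yields a telescoping estimate that bounds $\sum_{n=0}^{N-1}h\E|\delta Z_n|^2$ by the same two quantities.

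To conclude, I would identify $(Y^{\pi,1},Z^{\pi,1})$ with the Bouchard--Menozzi scheme run with the exact input $v_1$ and $(Y^{\pi,2},Z^{\pi,2})$ with our implemented forward scheme run with $v_2$, then add and subtract the true solution $(Y^{v_1},Z^{v_1})$ in the target norm. The residual $\E|\delta Y_N|^2$ at the terminal step accounts precisely for $\E|g(\tau^\pi\wedge T,X_{\tau^\pi\wedge T}^\pi)-\widetilde{Y}_N^{\pi,v}|^2$, the $\max_n\E|\delta v_n|^2$ term absorbs the a~posteriori input error from preceding layers, and Theorem~\ref{thm:bounchard} contributes the $h^{1-\epsilon}$ rate on $[0,\tau\wedge\tau^\pi]$ or the $h^{1/2-\epsilon}$ rate on $[0,T]$. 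Upgrading the resulting pointwise bound on $\E|\delta Y_n|^2$ to $\E[\sup_{t\in[t_k,t_{k+1}]}\I_{\{t<\vartheta\}}|\cdot|^2]$ is done via Doob's maximal inequality applied to the martingale part of the one-step increment.

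The main obstacle I expect is the interaction between the indicator $\I_{\{t_n<\tau^\pi\}}$ in the driver and the It\^o isometry; this is what precludes using the Zhang--Bender scheme directly (as in \cite{han2020convergence}) and forces us to rely on the Bouchard--Menozzi backward scheme instead. A closely related difficulty is controlling the mismatch between the continuous stopping time $\tau$ and the discrete $\tau^\pi$: this mismatch introduces a boundary-regularity contribution that is governed by Assumption~\ref{ass:domain} and ultimately degrades the rate on the full horizon $[0,T]$ from the benign $h^{1-\epsilon}$ observed on $[0,\tau\wedge\tau^\pi]$ down to $h^{1/2-\epsilon}$, reflecting the cost of resolving the random exit from $\mathcal{O}$.
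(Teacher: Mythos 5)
Your proposal follows essentially the same route as the paper's proof: the identical stability recursion for two solutions of the generic scheme \eqref{eq:generic_scheme} (with the same choices $\lambda\geq L_f$ for the $Y$-bound and $\lambda=2L_f$ for the telescoped $Z$-bound), followed by identifying the two solutions with the Bouchard--Menozzi scheme under $v_1$ and the implemented forward scheme under $v_2$, adding and subtracting the exact solution, and invoking Theorem~\ref{thm:bounchard}. The only additions are useful glosses the paper leaves implicit (Doob's inequality for the interval suprema, and the observation that the $\tau$ versus $\tau^\pi$ mismatch is what degrades the rate on $[0,T]$), so the argument is correct and matches the paper.
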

\subsection{Applying the error bounds for the recursive xVA BSDEs}
Next, we apply Theorem~\ref{thm:sim_error} to the recursive xVA BSDE.
We use the following notation for the simulation errors for the clean values and the xVA BSDEs 
    \begin{equation*}
\mathrm{Err}^V(h)^2\defeq\sum_{j=1}^P\E\Big[\sup_{t\in[0,T]}\big|\widehat{V}^j_t-\widehat{V}_{\phi(t)}^{j,\pi,*}\big|^2\Big] + \sum_{j=1}^P\int_0^T\E\big|\widehat{Z}_t^j-\widehat{Z}_{\phi(t)}^{j,\pi,*}\big|^2,
\end{equation*}
and for $(\mathrm{x},\mathrm{y})\in\{(\mathrm{Col},*),(\mathrm{M},\diamond),(\mathrm{C},\diamond),(\mathrm{D},\diamond),(\mathrm{F},\bullet)\}$
\begin{equation*}
        \mathrm{Err}^\mathrm{xVA}(h)^2\defeq      \E\Big[\sup_{t\in[0,T]}\big|\mathrm{xVA}_t-\mathrm{xVA}_{\phi(t)}^{\pi,\mathrm{y}}\big|^2\Big] + \int_0^T\E\big|Z_t^\mathrm{xVA}-Z_{\phi(t)}^{\mathrm{xVA},\pi,\mathrm{y}}\big|^2\d t,
    \end{equation*}
and we define the a posteriori error terms as
    \begin{align*}
        &\epsilon_{V^j}\defeq \E|\widehat{V}_t^j-\widehat{V}_{\phi(t)}^{j,\pi,*}|^2,\quad \epsilon_\ColVA \defeq \E\big|\ColVA_{n_{\tau^\pi}\wedge N}^{\pi,\diamond}\big|^2,\quad \epsilon_\MVA \defeq \E\big|\MVA_{n_{\tau^\pi}\wedge N}^{\pi,\square}\big|^2,\\ &\epsilon_\CVA \defeq \E\big|\CVA_{n_{\tau^\pi}\wedge N}^{\pi,\square}-\mathbbm{1}_{\{n_\tau\leq N \}} \mathbbm{1}_{\{n_\tau = n_{\tau^\mathcal{C}}\}} 
          \LGD^{\mathcal{C}} (\widehat{V}_{n_\tau}^{\pi,*} - C_{n_\tau}^{\pi,*} - \IM_\tau^{\FC,\pi,\diamond})^+\big|^2,\\
    &\epsilon_\DVA\defeq\E\Big[\big|\DVA_{n_\tau\wedge N}^{\pi,\square}-\mathbbm{1}_{\{n_\tau\leq N \}} \mathbbm{1}_{\{n_\tau = n_{\tau^\mathcal{B}}\}} 
          \LGD^{\mathcal{B}} (\widehat{V}_{n_\tau}^{\pi,*} - C_{n_\tau}^{\pi,*} - \IM_\tau^{\TC,\pi,\diamond})^-\big|^2\Big],\ \\
          &\epsilon_\FVA \defeq \E\big|\FVA_{n_{\tau^\pi}\wedge N}^{\pi,\bullet}\big|^2.
    \end{align*}  
Finally, for $t\in[0,T]$ and $n\in\{0,1,\ldots,N-1\}$, we denote by $\IM_t=\mathrm{Concat}(\IM_t^\TC,\IM_t^\FC)$ and $\IM_n^{\pi,\diamond}=\mathrm{Concat}(\IM_n^{\TC,\pi,\diamond},\IM_n^{\FC,\pi,\diamond})$, the vector concatenations of the IMs to and from the counterparty and their approximations, respectively. 
 \begin{theorem}
 Let Assumptions~\ref{ass:BSDE_coeff} and \ref{ass:ANN} be satisfied for each of the $P$ clean value BSDEs and let Assumptions~\ref{ass:BSDE_coeff}-\ref{ass:ANN} be satisfied for each of the xVA BSDEs. Then, for any $\epsilon\in(0,1/2)$, there exists a constant $C$ such that
         \begin{align*} 
    &\mathrm{Err}^V(h)^2
    \leq 
    C\big(h +\epsilon_V\big),\\
    &\mathrm{Err}^\ColVA(h)^2
    \leq 
    C\big(h^{1/2-\epsilon} +\epsilon_V  +\epsilon_\ColVA\big),\\
&\mathrm{Err}^\MVA(h)^2
    \leq 
    C\big(h^{1/2-\epsilon} +\epsilon_V +\max_{n\in\{0,\ldots,N-1\}}\E|\IM_{t_n}-\IM_n^{\pi,\diamond}|^2 +\epsilon_\MVA\big),\\
&\mathrm{Err}^\CVA(h)^2
    \leq 
    C\big(h^{1/2-\epsilon} +\epsilon_V +\max_{n\in\{0,\ldots,N-1\}}\E|\IM_{t_n}-\IM_n^{\pi,\diamond}|^2 +\epsilon_\CVA\big),\\
&\mathrm{Err}^\DVA(h)^2
    \leq 
    C\big(h^{1/2-\epsilon} +\epsilon_V +\max_{n\in\{0,\ldots,N-1\}}\E|\IM_{t_n}-\IM_n^{\pi,\diamond}|^2 +\epsilon_\DVA\big),\\
&\mathrm{Err}^\FVA(h)^2
    \leq 
    C(h^{1/2-\epsilon} +\epsilon_V +\max_{n\in\{0,\ldots,N-1\}}\E|\IM_{t_n}-\IM_n^{\pi,\diamond}|^2 +\epsilon_\ColVA+\epsilon_\MVA+\epsilon_\CVA+\epsilon_\DVA+\epsilon_\FVA).
    \end{align*}
 \end{theorem}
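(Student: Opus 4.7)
The plan is to iterate Theorem~\ref{thm:sim_error} through the four layers of the scheme in the order in which they are defined, absorbing the errors from earlier layers into the ``input function'' $v$ of the current BSDE and leaving the current layer's terminal residual as the a posteriori term. For the first layer, each clean value BSDE $\widehat{V}^j$ has no stopping time and an autonomous driver $-r_s\widehat{V}^j_s$, so rather than invoking Theorem~\ref{thm:sim_error} directly I would specialize its proof to the unbounded case: the Bouchard--Menozzi boundary estimate used in \eqref{eq:bouchard_scheme} is then replaced by the classical Zhang-type estimate, which yields a squared time-discretization error of order $h$ instead of $h^{1/2-\epsilon}$. Summing over the $P$ individual equations and collecting the terminal a posteriori residuals gives $\mathrm{Err}^V(h)^2\leq C(h+\epsilon_V)$.

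For the second and third layers (ColVA, MVA, CVA, DVA), I would apply Theorem~\ref{thm:sim_error} with $\vartheta=T$, producing the discretization rate $h^{1/2-\epsilon}$. The input function $v$ is instantiated to the true Markovian representation of every quantity that enters the driver or the terminal condition through an earlier-layer approximation: the Markovian clean value function for ColVA, and the Markovian clean value together with $\IM$ for CVA, DVA, and MVA. The $\max_{n}\E|v_1(t_n,X_n^\pi)-v_2(t_n,X_n^\pi)|^2$ term in Theorem~\ref{thm:sim_error} is then controlled by $C\,\epsilon_V$, augmented by $C\max_{n}\E|\IM_{t_n}-\IM_n^{\pi,\diamond}|^2$ whenever $\IM$ enters the current BSDE. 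The terminal a posteriori residual coincides by construction with the appropriate $\epsilon_{\mathrm{xVA}}$; in the CVA and DVA case these residuals already encode both the neural terminal mismatch and the substitution of $\widehat{V}^{\pi,*}_{n_\tau}$ and $\IM^{\pi,\diamond}_\tau$ for their exact analogues. For the FVA layer, a single application of the same theorem then propagates every preceding squared error additively, since the FVA driver is Lipschitz in the aggregate vector $(\widehat{V},\ColVA,\CVA,\DVA,\MVA,\IM^{\TC},C)$ and each of those components has been controlled in the earlier steps.

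The principal obstacle is that Theorem~\ref{thm:sim_error} is stated with a single scalar Lipschitz input function $v$, whereas the xVA drivers depend simultaneously on several such inputs (clean value, IM, and the previously computed xVAs). I would therefore first verify that the proof carries through verbatim with $v$ replaced by a concatenation of Lipschitz functions: inspection of the key inequality \eqref{eq:split_YZ} shows that only a linear growth in the sum of the individual Lipschitz constants appears, so the extension is routine. A second delicate point is the treatment of the quantile-regression errors; under a standard Lipschitz assumption on the conditional value-at-risk operator (for instance a continuous conditional density near the relevant quantile level), $\max_{n}\E|\IM_{t_n}-\IM_n^{\pi,\diamond}|^2$ can be handled as an exogenous input error, exactly analogous to $\epsilon_V$. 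Beyond these two points no new ideas are needed, since the layered dependency structure forbids any circular propagation of errors and guarantees that each bound contains only the a posteriori terms of the strictly preceding layers.
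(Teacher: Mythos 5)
Your proposal is correct and follows essentially the same route as the paper: the clean-value bound is obtained from the unbounded (Han--Long/Zhang-type) version of the estimate, and the remaining bounds follow by applying Theorem~\ref{thm:sim_error} recursively with the earlier-layer approximations playing the role of the input function $v_2$ and the current layer's terminal residual giving the a posteriori term. The two caveats you raise (extending the single input $v$ to a concatenation of Lipschitz inputs, and treating the quantile-regression error as an exogenous input error) are genuine points the paper glosses over, but they do not change the argument.
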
  
 \begin{proof}
     Under Assumption~\ref{ass:BSDE_coeff}, Theorem 1' in \cite{han2020convergence} can be used to obtain the first bound. Note that the smoothness assumptions on $g$ can be relaxed as long as $g$ is Lipschitz continuous and satisfies a linear growth condition.

     The remaining bounds follow directly from applying Theorem~\ref{thm:sim_error} recursively. 
\end{proof}
The theorem above provides error bounds for the recursive simulation errors in terms of the step size $h$, a posteriori terms, and the mean-squared errors of our IM approximations. While the first two can be controlled in the current framework, controlling the mean-squared error requires extra work. For example, one could adapt the approach proposed in \cite{barrera2024statistical}, which employs Rademacher bounds and an a posteriori Monte Carlo procedure to estimate distances to the true conditional VaR.\vspace{0.25cm}\newline
\noindent\textbf{Qualitative discussion about the assumptions}:\newline
Some of the key assumptions introduced above might initially appear overly restrictive. However, by employing several standard techniques, one can demonstrate that these conditions are indeed met by realistic financial models. In the following, we discuss each assumption in detail.
\begin{itemize}
    \item \textbf{Bounded domain}:\newline
    The domain is bounded in the spatial components that are related to the bank’s and the counterparty’s asset processes. For the remaining, unbounded components, we may impose artificial boundary conditions to make $\mathcal{O}$ bounded. 
    \item\textbf{Smooth terminal condition function}:\newline
    This assumption is satisfied by the BSDEs for clean values, ColVA, MVA, and FVA. However, for FVA, the error in the BSDE is bounded by a posteriori terms associated with the CVA and DVA BSDEs. Since these BSDEs incorporate the pay-off functions of the derivatives in the portfolio, they may not fulfill the smoothness requirements imposed on $g$. On the other hand, assuming that the domain $\mathcal{O}$ is bounded (for instance by imposing artificial boundary conditions), the condition holds for linear products such as swaps, futures, and forwards. Moreover, for European payoffs, smoothing techniques can be applied effectively.
    \item \textbf{Hölder-$\tfrac{1}{2}$ continuous driver in $t$}:\newline
    Although Hölder-$\tfrac{1}{2}$ continuity is not generally required for the existence 
    and uniqueness of BSDE solutions, it is often imposed to achieve a $\tfrac{1}{2}$-order 
    convergence rate in discretization schemes. However, in our models, the payoffs of 
    derivatives that mature before time $t$ appear in the BSDE drivers, causing 
    discontinuities at a finite number of points. A more realistic assumption, therefore, 
    is to require piecewise Hölder-$\tfrac{1}{2}$ continuity. By ensuring that the 
    discontinuity points coincide with the time grid $\theta$, the BSDE drivers will 
    satisfy the continuity requirement on each sub-interval $[t_n, t_{n+1})$. In practice, 
    this is typically sufficient to maintain the $\tfrac{1}{2}$-order convergence rate.

\end{itemize}
We note that the arguments presented above, which relax certain restrictive assumptions, require a rigorous justification. Nevertheless, the methodological ideas we employ align with techniques commonly used for related problems.

\section{Numerical experiments}\label{sec:numerics}
We consider Geometric Brownian motions both for the risky underlyings and for the bank and the counterparty asset processes, which for $t\in[0,T]$ follow 
\begin{equation}
\widehat{X}_t^i=1+\int_0^tr\widehat{X}_s^i\d s+\int_0^t\sigma^i\widehat{X}_s^i\d \widehat{W}_s^i,\ i\in\{1,\ldots,d\}\quad \widebar{X}_t^j = 1 + \int_0^tr\widebar{X}_s^j\d s +\int_0^t\widebar{\sigma}^j\widebar{X}_s^j\d \widebar{W}_s^j,\ j\in\{\B,\C\}.
\end{equation}
We set $T=1$, $r=0.05$,
\begin{equation*}
    \quad d = 5,\quad\widebar{\sigma}=\begin{pmatrix}
        0.2 \\
        0.3 
    \end{pmatrix}, \quad 
    \widehat{\sigma} = \begin{pmatrix}
        0.2 \\
        0.25 \\
        0.25 \\
        0.25 \\
        0.3
    \end{pmatrix}, \quad 
    \rho = \begin{pmatrix}
        1.0 & 0.9 & 0.2 & 0.5 & 0.1 & 0.1 & 0.2 \\
        0.9 & 1.0 & 0.4 & 0.3 & 0.2 & 0.3 & 0.2 \\
        0.2 & 0.4 & 1.0 & 0.2 & 0.75 & 0.15 & 0.25 \\
        0.5 & 0.3 & 0.2 & 1.0 & 0.35 & 0.05 & 0.15 \\
        0.1 & 0.2 & 0.75 & 0.35 & 1.0 & 0.15 & 0.05 \\
        0.1 & 0.3 & 0.15 & 0.05 & 0.15 & 1.0 & 0.25 \\
        0.2 & 0.2 & 0.25 & 0.15 & 0.05 & 0.25 & 1.0
    \end{pmatrix}.
\end{equation*}
Here, $\rho$ is the correlation matrix for $W=\text{Concat}(\widehat{W},\widebar{W})$. We consider a portfolio of $P=33$ European basket call options with respective maturities
$T^P=\{T_1,T_2,\ldots,T_P\}=\{1 , 1 , 1 , 0.8, 0.8, 0.6, 0.6, 0.4, 0.4,\\ 0.2, 0.2, 1 , 1 ,1 , 0.7, 0.7, 0.5, 0.5, 0.3, 0.3, 0.1, 0.1, 1 , 0.7, 0.7, 0.5,
       0.5, 0.3, 1 , 0.8, 0.8, 0.6, 0.6\}$. For each of the 33 derivatives the pay-off structure is given by $A_j=-\mathbb{I}_{\{t=T_j\}}\Big(\big(\prod_{i\in \I_j}\widehat{X}_{T_j}^i\big)^{\frac{1}{d_j}}-K_j\Big)^+$ with $K=\{1.05, 1.1 , 1.05, 1.05, 0.7 , 0.7 , 0.75, 0.75, 1,\\ 0.9 , 0.8 ,
       0.9, 1.1, 1.05, 0.85, 0.9 , 0.9 , 1.05, 1, 1  , 0.9 , 0.95,
       1.05, 0.7 , 0.7 , 0.75, 0.75, 1  , 0.9 , 0.8 , 0.9 , 1.1 , 1.05\}$.\\
Note that from the bank's perspective, this pay-off function is equivalent to buying a portfolio of call options from the counterparty. Each of the 33 derivatives has a pay-off depending on a subset of the $5$ risky underlying assets as follows 
\begin{align*}
    \mathcal{I} &= \{\mathcal{I}_1, \mathcal{I}_2, \ldots, \mathcal{I}_{33}\} 
      = \big\{ 
            [1,2,3,4,5],\ 
            [2,3,4,5],\ 
            [1,3,4,5],\ 
            [1,2,4,5],\ 
            [1,2,3,5], \\
    &\quad [1,2,3,4],\ 
            [1,2,3],\ 
            [1,2,4],\ 
            [1,2,5],\ 
            [1,3,4], 
     [1,3,5],\ 
            [2,3,4],\ 
            [2,3,5],\ 
            [3,4,5],\ 
            [1,2], \\
    &\quad [2,3],\ 
            [1,3],\ 
            [1],\ 
            [2],\ 
            [3],\ 
            [4],\ 
            [5], 
     [1,2],\ 
            [2,3],\ 
            [1,3],\ 
            [1],\ 
            [1,2,3,4,5],\ 
            [1,2,3,4,5], \\
    &\quad [1,2,3,4,5],\ 
            [2,3,4,5],\ 
            [1,2,3],\ 
            [1,2],\ 
            [2,3]
        \big\}.
\end{align*}
Finally, we set $N=200$, where $N+1$ is the number of discrete time points. 
\vspace{0.2cm}\newline
\noindent\textbf{Rationale behind our presentation choices}\newline
For each example across the four layers, we present empirical evidence of accuracy for representative paths and the solution distribution. In Layer one, where a closed-form analytic solution is available, we compare the empirical mean, 99th, and 1st percentiles of our approximate solutions to their analytical counterparts. In particular, the percentiles provide insight into the quality of our approximations in the tails.

In subsequent layers, it is too computationally expensive to generate many reference solutions, so we compare a few representative paths of our approximations to those from a costly nested Monte--Carlo method. To assess the entire distribution, we focus on the terminal and initial conditions of the BSDEs. The terminal condition is relatively inexpensive to sample since it is determined by a function evaluated at the forward SDE solution. We then compare the empirical distribution of the differences between our approximate and reference terminal conditions, which indicates how well our terminal conditions are satisfied. However, as discussed in \cite{andersson2023convergence}, accuracy in the terminal condition alone is insufficient because the approximation before the terminal time may still be poor unless strict conditions are met, see \cite{han2020convergence}.

The initial condition is deterministic and requires no nested Monte--Carlo approximations, allowing us to generate a reference solution relatively cheaply. However, as noted in several papers (see e.g., \cite{wang2023solving}), convergence in the initial condition does not guarantee convergence for $t>0$ when using the deep BSDE solver. This can easily be confirmed by terminating the training procedure early and noting that initial condition is accurately approximated while the full solution path is not.  

In summary, while accuracy in either the initial or terminal condition alone is insufficient, achieving both gives us greater confidence in our approximations.

In Appendix~\ref{sec:reference_solutions}, we present the methodology used to compute the reference solutions in each layer.

\subsection{Layer 1 - Clean values}
For the clean values, the objective is to simultaneously approximate 33 BSDEs, where the dimension of each BSDE is given by $d_j=|\mathcal{I}_j|$. In total, this yields a 93-dimensional, highly non-symmetric problem with five diffusive factors.

Since we consider a Geometric Brownian motion for the asset process $\widehat{X}$, we have access to a closed form solution. Moreover, the Geometric average of multiple Geometric Brownian motions, is itself a Geometric Brownian motion. This can be leveraged in order to rewrite each basket option with a Geometric mean pay-off into a standard European call option. Hence, we have access to closed form solutions for each derivative and in turn the clean portfolio value.

The left panel in Figure~\ref{fig:clean_value} displays the clean portfolio value, for three representatives realizations. The right panel displays the mean, 99th and 1st percentiles, respectively for an empirical distribution with $2^{16}$ samples. 
\begin{figure}[htp]
\centering
\begin{tabular}{cc}
          \includegraphics[width=80mm]{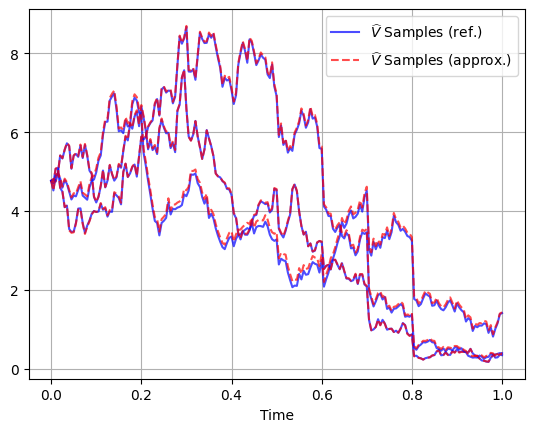}&   
          \includegraphics[width=80mm]{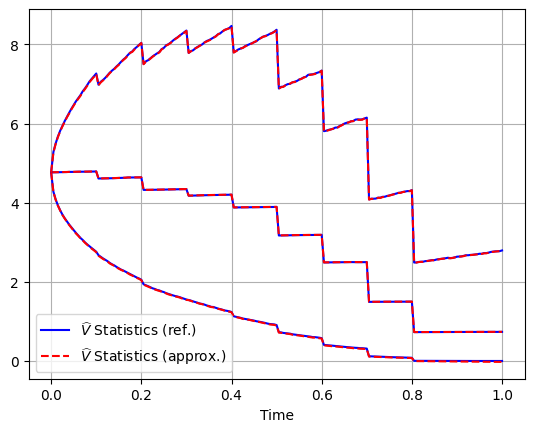}
\end{tabular}
\caption{Approximate clean portfolio values compared with their analytical counterparts. \textbf{Panel 1:} Three representative samples. \textbf{Panel 2:} Empirical mean, 99th and 1st percentiles.}\label{fig:clean_value}
\end{figure}

\subsection{Layer 2 - Initial Margin}
For initial margin (IM) computations, we choose VaR at the level $\alpha = 0.99$. As described in Section~\ref{sec:semi_discrete_opt}, the IM is given by the $\alpha$-level (or $(1-\alpha)$-level) VaR of positive (or negative) changes in the portfolio value over the MPR. We set $\mathrm{MPR}_n^\pi=\min\{\mathrm{MPR},N-n\}$ with $\mathrm{MPR}=8$. With this, we implicitly assume that the maximum MPR is 14.6 days (if time is measured in years).     

Unfortunately, we do not have an analytical reference solution for direct comparison. However, we can simulate $M = 2^{16}$ samples of the portfolio values (generated in layer~1 of our method) and then compute the empirical VaR measures. Because this requires nested Monte Carlo sampling, we are only able to generate reference solutions for a limited number of paths, which prevents us from comparing averages or percentiles of the IM as in Figure~\ref{fig:clean_value}.

In Figure~\ref{fig:IM}, we compare three representative IM paths obtained by our procedure with corresponding reference solutions generated by nested Monte Carlo. These examples illustrate how our approach compares to the nested Monte Carlo benchmark on a path-by-path basis.

For this experiment, we use three hidden layers with 16 nodes in each, $2^{16}$ training samples where 80\% is for training and 20\% for validation, 5000 epochs with 
patience of 100 epochs (terminate training if inference results on the validation set does not improve for 100 epochs), and a batch size of $2^{12}$.
\begin{figure}[htp]
\centering
\begin{tabular}{c}
          \includegraphics[width=160mm]{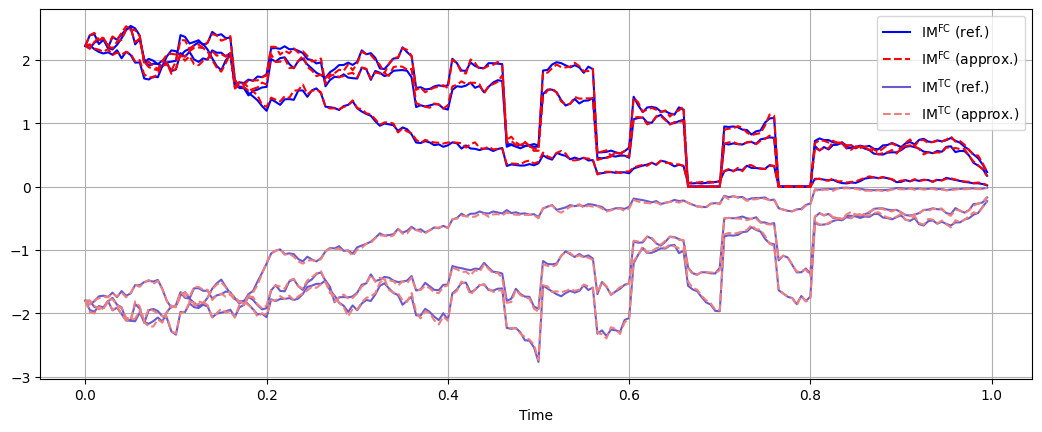}
\end{tabular}
\caption{Three representative samples of approximate IM compared with reference solutions obtained by nested Monte--Carlo sampling.}\label{fig:IM}
\end{figure}

\subsection{Layer 3 - ColVA, MVA, DVA and CVA}
This is the first layer that considers defaultable parties. We set $r_t^{c,l}=0.085$, $r_t^{c,b}=0.075$, $r_t^{\IM,l}=0.065$ and $r_t^{\IM,b}=0.05$, $\mathrm{LGD}^\C=\mathrm{LGD}^\B=0.3$ and assume that $50\%$ of the clean portfolio value is collateralized, \textit{i.e.,}, for $t\in[0,\tau\wedge T]$, $C_t=0.5\widehat{V}_t$, and we let the default barriers for the bank and the counterparty be defined by $\xi_t^\B\defeq0.575$, $\xi_t^\C\defeq 0.675$. Moreover, for $\mathrm{x}\in\{\mathrm{Col},\mathrm{M},\mathrm{C},\mathrm{D}\}$, we set $q_1^\mathrm{xVA}= \mathbf{0}$ and \begin{equation*}
    q_2^\ColVA=\begin{pmatrix}
        [\mathbf{0}^4]\\0.2\\0.35
    \end{pmatrix},\quad
        q_2^\MVA=\begin{pmatrix}
        [\mathbf{0}^4]\\0.2\\0.35
    \end{pmatrix},\quad q_2^\CVA=\begin{pmatrix}
        [\mathbf{0}^4]\\0\\0.35
    \end{pmatrix},\quad    q_2^\DVA=\begin{pmatrix}
        [\mathbf{0}^4]\\0.2\\0
    \end{pmatrix}.
\end{equation*} Here $[\mathbf{0}^4]$ is a column vector of four zeros. The specific parameter choices above are arbitrary, but they reflect the following motivations.
For ColVA and MVA, it is important to increase the default probability of both the bank and the counterparty, 
since the possibility of either party defaulting is relevant in these adjustments. 
For CVA, emphasis falls on the counterparty's default probability in order to capture the effect accurately, 
whereas for DVA, the emphasis shifts to the bank's own default probability. 
These modifications ensure that each type of valuation adjustment is properly accounted for 
in scenarios where the relevant party is more likely to default.

\begin{figure}[htp]
\centering
\begin{tabular}{cc}
          \includegraphics[width=80mm]{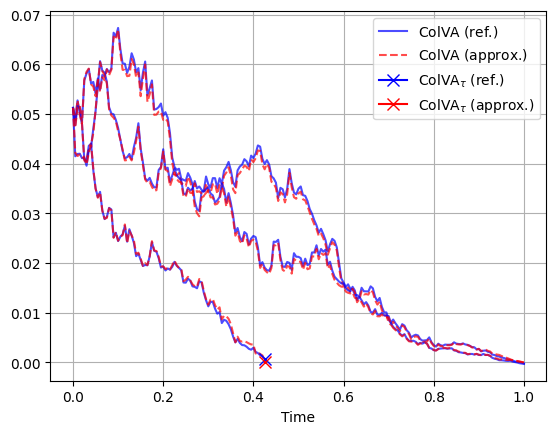}& 
          \includegraphics[width=80mm]{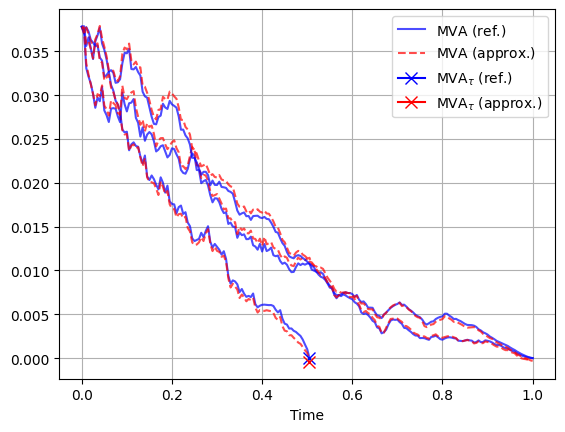}\\
        \includegraphics[width=80mm]{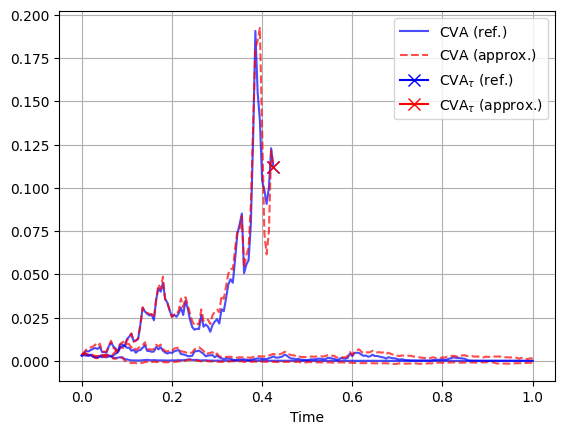}& 
          \includegraphics[width=80mm]{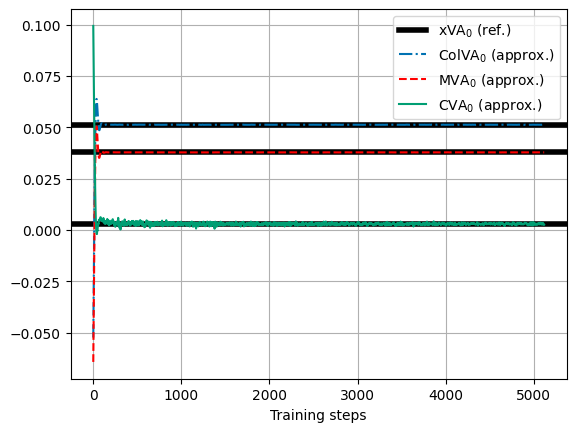}
\end{tabular}
\caption{\textbf{Panels 1-3}: The approximate ColVA, MVA and CVA compared with their respective reference solutions. \textbf{Panel 4}: The approximate initial conditions $\ColVA_0,\ \MVA_0$ and $\CVA_0$ plotted against training batches.} \label{fig:layer_3}
\end{figure}
In Figure~\ref{fig:layer_3}, panels 1-3, we compare three representative approximate solutions with their references for ColVA, MVA and CVA. For ColVA and MVA, we have one trajectory where the bank defaults and two where no default occur and for CVA, we have one trajectory where the counterparty defaults and two where no default occur. Panel 4 displays the convergence of the initial conditions during training. Note that for this specific problem DVA, is identically zero for all $t\in[0,T]$, which is why we do not include any results on DVA.
\begin{figure}[htp]
\centering
\begin{tabular}{cc}
          \includegraphics[width=80mm]{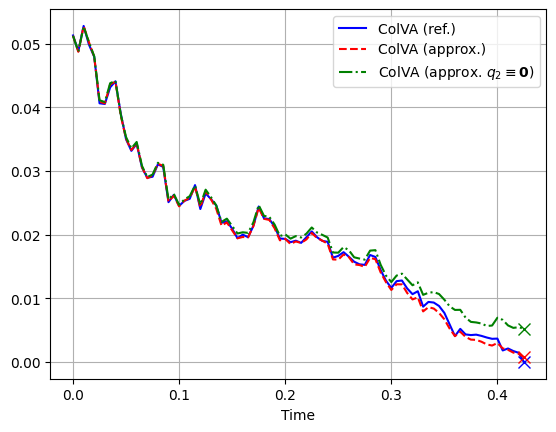}&   
          \includegraphics[width=80mm]{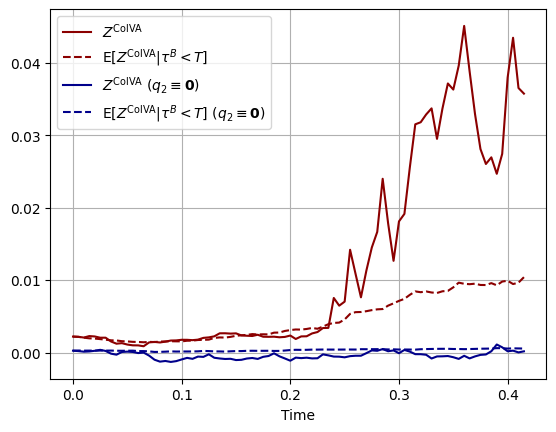}
\end{tabular}
\caption{A trajectory where default of the bank occurs at $t=0.42$. \textbf{Panel 1:} Comparison of the reference solution and our approximations with and without the measure change technique. \textbf{Panel 2:} Comparison of the control process of the BSDE in the 6th component which is associated with default of the bank.}\label{fig:ColVA_compare_2}
\end{figure}

In Figure~\ref{fig:ColVA_compare_2}, we compare our ColVA approximation with and without the measure-change technique for a trajectory where the bank defaults at $t = 0.42$. We observe that, when the measure change is applied, the terminal condition is accurately captured. By contrast, without the measure change, ColVA near the default event is poorly approximated.

The right panel of Figure~\ref{fig:ColVA_compare_2} illustrates this by plotting the sixth component of $Z^{\mathrm{ColVA}}$ with and without the measure-change technique for a representative sample path that defaults (solid lines), and the average over all samples that default prior to $T$ (dashed lines). We see that, if no measure change is used, $Z^{\mathrm{ColVA}}$ remains near zero at all times. This happens because the neural network rarely ``sees'' default scenarios under the original measure, and thus it cannot learn the correct control behavior close to the bank's default. Building on the insight from the right panel, we see that the green ColVA curve aligns with the scenario in which the bank's default probability is negligible (i.e.\ effectively zero).

As pointed out above, it is too expensive to sample many samples from the reference solutions of ColVA, MVA and CVA and hence, we only compare with the reference solutions for representative trajectories. However, we can efficiently sample from the reference terminal conditions and compare with our approximations to gain some insight into accuracy of the full distribution of our approximations. In Figure~\ref{fig:terminal_errors}, we display empirical distributions of terminal condition errors for ColVA, MVA and CVA, i.e., the difference between our approximate value at either default or the terminal time and the reference values.  
\begin{figure}[htp]
\centering
\begin{tabular}{c}
          \includegraphics[width=165mm]{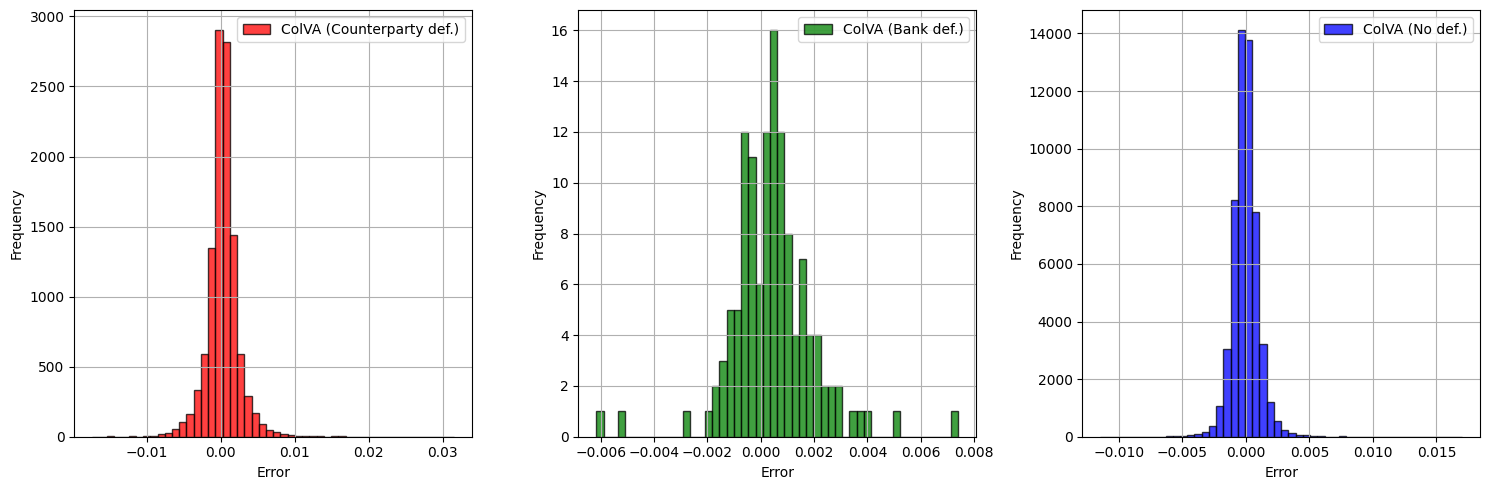}\\
          \includegraphics[width=165mm]{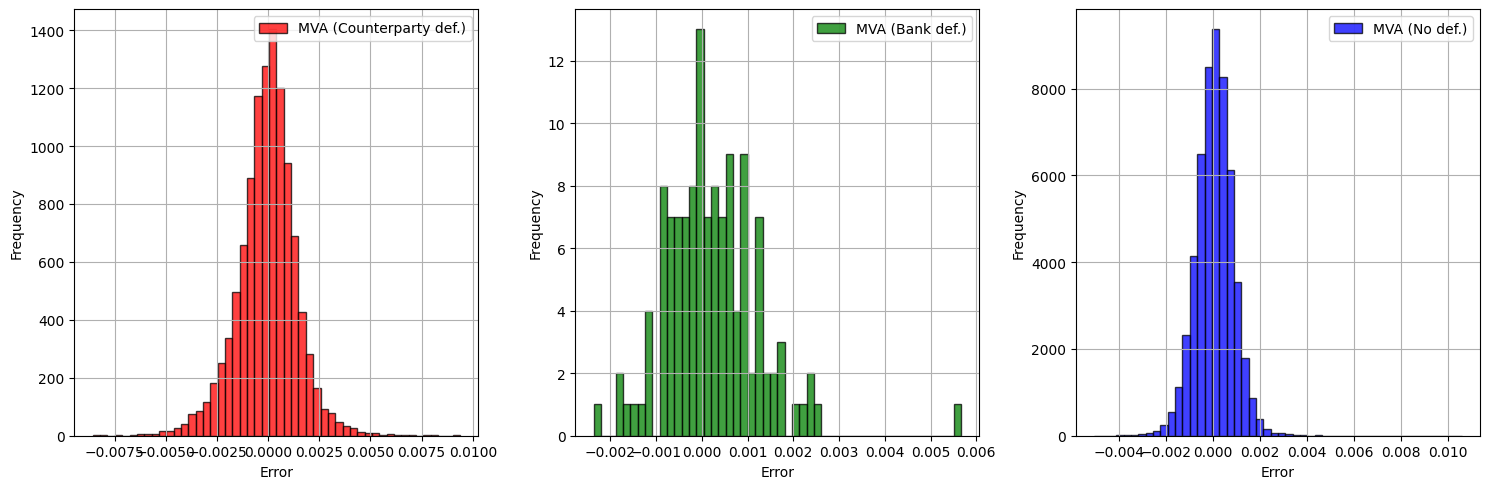}\\
          \includegraphics[width=165mm]{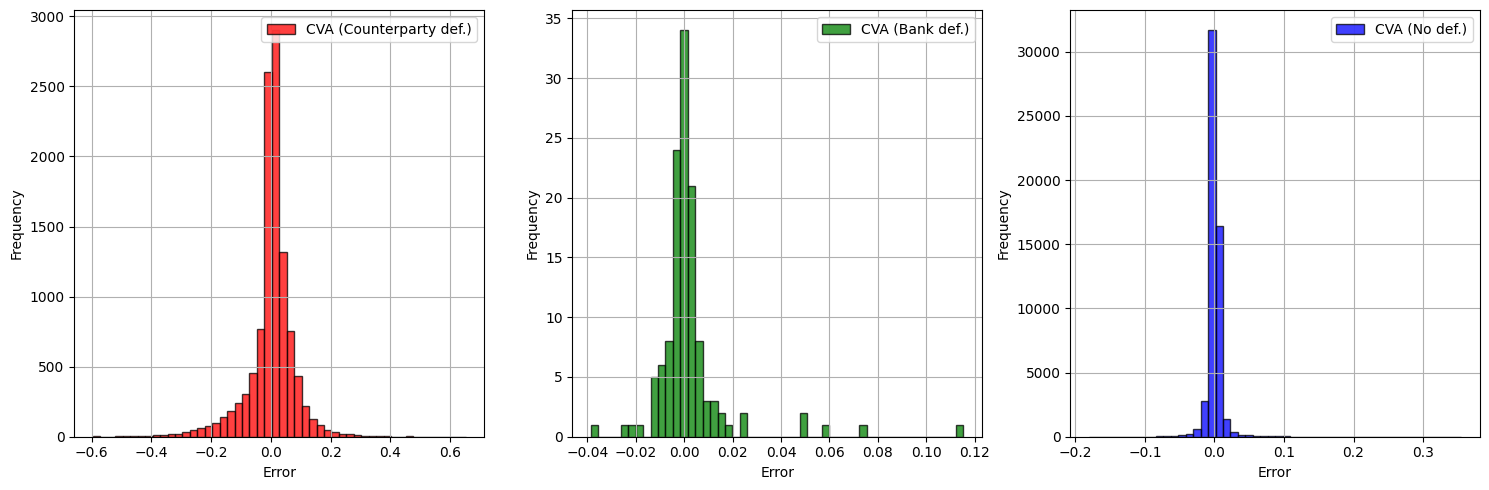}
\end{tabular}
\caption{Empirical distributions of terminal condition errors for ColVA, MVA and CVA.}\label{fig:terminal_errors}
\end{figure}
\vspace{0.25cm}
\newline
\noindent\textbf{Stress test of our algorithm}:\newline
To push our algorithm to its limits, we perform an experiment in which all parameters are held constant except for $\xi_t^\C$, which is decreased from 0.675 to 0.45. As a result, the probability that the counterparty defaults before the bank—and that this default occurs before time $T$—drops to approximately 0.6\%. Under these conditions, only a few trajectories register a default. Moreover, when we also require that the close-out amount be positive (since collateral and IM can sometimes lead to negative values), this probability further declines to about 0.2\%. In Figure~\ref{fig:CVA_compare} (panel 1), we observe that the empirical distribution of terminal errors obtained with the measure change technique has a higher concentration of probability mass around zero and is more symmetric about zero compared to the distribution without the technique. Moreover, in panels 2-3, we observe that for two representative trajectories, even though the terminal error is comparable with and without the measure change technique, the approximation performed without the measure change technique fails to capture the fluctuations caused by the bank's asset process, leading to poor path-wise accuracy.

\begin{figure}[htp]
\centering
\resizebox{\textwidth}{!}{
  \begin{tabular}{ccc}
    \includegraphics{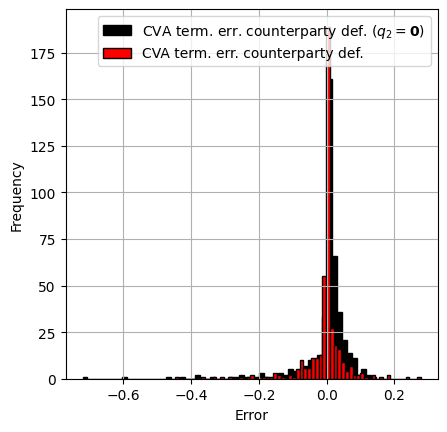} &   
    \includegraphics{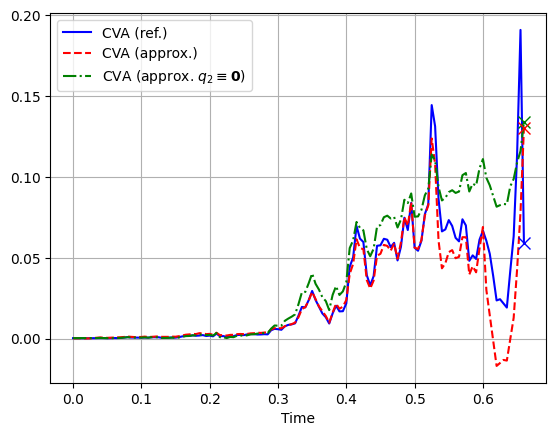} &
    \includegraphics{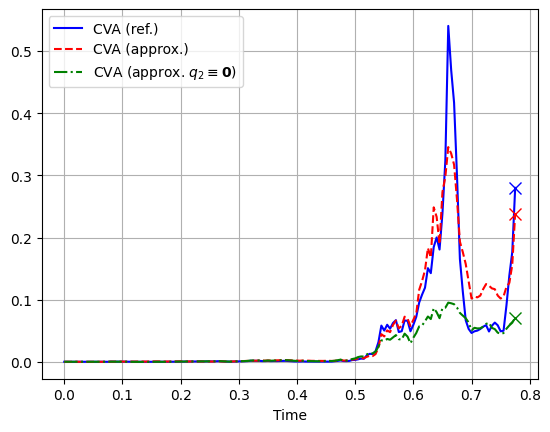}
  \end{tabular}
}
\caption{CVA in scenarios where the counterparty defaults before the terminal time and prior to the bank's default. 
\textbf{Panel 1:} Empirical distributions of terminal errors for CVA approximations with and without the measure change technique.  
\textbf{Panels 2-3:} Comparison of the reference solution with our approximations (with and without the measure change technique) for representative scenarios where the counterparty defaults at $t=0.66$ (Panel 2) and $t=0.775$ (Panel 3).
}\label{fig:CVA_compare}
\end{figure}

\subsection{Layer 4 - FVA}
We set $r^{f,b}=0.75$, $r^{f,l}=0.65$, $q_1^\FVA= \mathbf{0}$ and $q_1^\FVA=(0,0,0,0,0,0.35,0.2)^\top$.

In Figure~\ref{fig:layer_4}, panel 1, we compare three representative approximate solutions with their references for FVA. In one of the trajectories the counterparty defaults and in two no default occur. Panel 2 displays the convergence of the initial conditions during training. 
\begin{figure}[htp]
\centering
\begin{tabular}{cc}
        \includegraphics[width=80mm]{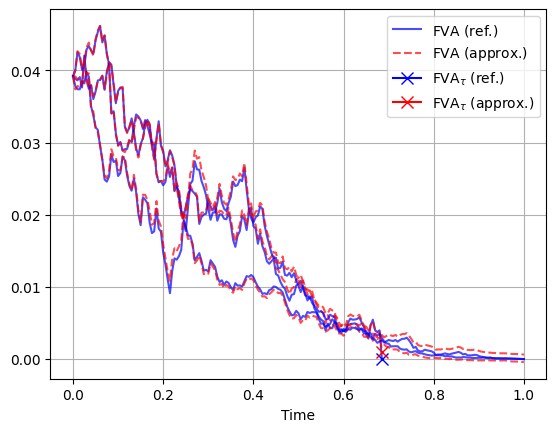}& 
          \includegraphics[width=80mm]{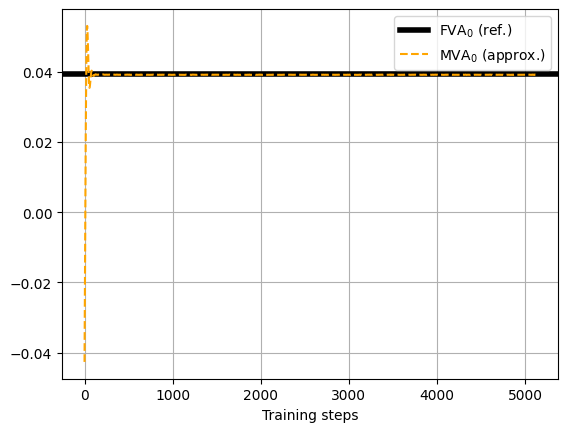}
\end{tabular}
\caption{\textbf{Panel 1}: The approximate FVA compared with reference solutions. \textbf{Panel 2}: The approximate initial condition, $\FVA_0$, plotted against training batches.} \label{fig:layer_4}
\end{figure}
Figure~\ref{fig:terminal_error_FVA} displays the empirical distributions of terminal condition errors for FVA. 
\begin{figure}[htp]
\centering
\begin{tabular}{c}
\includegraphics[width=165mm]{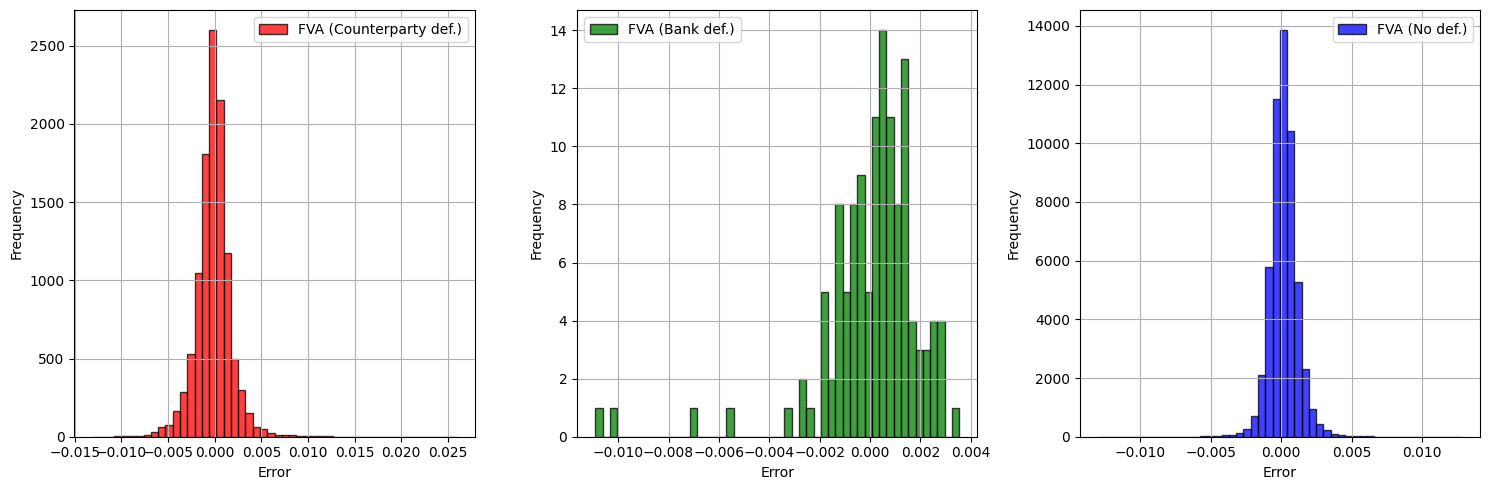}
\end{tabular}
\caption{Empirical distributions of terminal condition errors for FVA.}\label{fig:terminal_error_FVA}
\end{figure}

\section{Conclusions and potential future research directions}\label{sec:conclusion}
This paper presents a structural model for computing multi-layer valuation adjustments (xVAs) by formulating them as a high-dimensional system of backward stochastic differential equations (BSDEs). We employ iterative deep BSDE techniques, including strategic measure changes, to effectively capture rare default events and mitigate the need for costly nested Monte Carlo schemes. Our numerical results demonstrate both the flexibility and high accuracy of this approach across four layers even when confronted with substantial dimensionality, asymmetric portfolio structures, numerous risk factors, and heavily interdependent BSDEs.
This is evidenced by comparisons to closed-form references (where available) and carefully constructed nested Monte Carlo benchmarks.  In addition, we establish the first error analysis for deep BSDE methods with stopping times, showing that domain-restricted BSDEs retain a reduced $\mathcal{O}(h^1/4-\varepsilon)$ convergence rate under suitable assumptions.

A natural extensions of the present paper would be to allow for multiple counterparties, but this is clearly feasible by means of a divide and conquer approach that leverages our results. Further extensions could involve derivatives with early exercise features and/or jumps in the asset dynamics.  We remark that the inclusion of the capital value adjustment (KVA) would also not represent a significant challenge to our approach. We stress again that our algorithm, under the assumption of a structural credit model, could be applied to alternative specifications of the xVA BSDEs where for example FVA and KVA are computed at the level of the whole bank's portfolio and not at the single netting set as we do.

As for the second extension, one could either formulate reflected BSDEs for the clean values or adopt the deep optimal stopping method proposed in \cite{becker2019deep} to compute optimal stopping decisions. In the latter approach, the optimal stopping algorithm is applied in an initial phase and in a second phase, the clean values are treated as barrier options, which can be solved with our algorithm for BSDEs with random stopping times (without optionality which is handled in the initial phase). A similar strategy, applying one algorithm for execution decisions and another for pathwise valuation using neural network-based regression, was proposed in \cite{andersson2021deep} and later extended to entire netting sets in \cite{andersson2021deep_2}.

Allowing for Lévy jumps in the asset dynamics would remove the predictability of default events and address the common criticism that structural models underestimating short-term default risk, see \textit{e.g.,} \cite{BrigoVrins2018,ballotta2015counterparty}. In this setting, one could extend our method by incorporating a deep approach for jump BSDEs; see, for example, \cite{alasseur2024deep,andersson2022deep,lu2024temporal}.

\section*{Acknowledgments}
We thank Athena Picarelli and Adam Andersson for their valuable insights and fruitful discussions.

.

%\bibliographystyle{plain}
%\bibliography{references}
%\printbibliography 

\appendix
\section{Reference solutions}
\label{sec:reference_solutions}

In this section, we describe how we generate reference solutions for the system 
\eqref{eq:FBSDE_repeat}, which in its conditional-expectation form can be written as 
\begin{align}
\label{eq:FBSDE_cond_exp}
\begin{split}
    \begin{dcases}
  X_t \;=\; x_0 
  \;+\; \displaystyle\int_0^t b\bigl(s,X_s\bigr)\,\mathrm{d}s 
  \;+\; \int_0^t \sigma\bigl(s,X_s\bigr)\,\odot\,\mathrm{d}W_s, 
  \\[6pt]
  Y_t
  \;=\; \E\biggl[\,
    \mathbbm{1}_{\{\tau \le T\}}\chi_\tau
    \;-\; \displaystyle\int_{(t,\;\tau\wedge T]} \mathrm{d}\Lambda_s
    \;+\; \int_t^{\tau\wedge T}
      \bigl(f_s^Y \;-\; r_s\,Y_s\bigr)\,\mathrm{d}s
  \;\Big|\; X_t \;=\; x
  \biggr].
\end{dcases}
\end{split}
\end{align}
The aim is to obtain a sufficiently accurate approximation of $(X_t, Y_t)$ 
for each relevant $(t,x)$, which we refer to
as the \emph{reference solution} 
in subsequent numerical comparisons.

\begin{enumerate}[\itshape i)]
\item \textbf{Euler--Maruyama for $X$ and $\tau$.}  
We approximate the process $X$ by applying the Euler--Maruyama scheme over a refined time grid of size $N_{\mathrm{ref}} \in \mathbb{N}^{+}$ (potentially large). Denote by $h_{\mathrm{ref}} = \tfrac{T}{N_{\mathrm{ref}}}$ the corresponding step size. We then define the discrete stopping time $n_{\tau}$ as the smallest grid index $n$ such that $X_{n}^\pi$ crosses the given default boundary.

Under standard assumptions on the SDE coefficients, Euler--Maruyama yields a strong approximation of order $\mathcal{O}\bigl(\sqrt{h_{\mathrm{ref}}}\bigr)$. However, because $\tau$ is defined as a hitting time of a boundary, the error also depends on the geometry of that boundary; see, e.g., \cite{kloeden1992stochastic,mikulevicius1991rate,marion2002convergence}.
\item \textbf{Quadrature for the integrals.}
For each simulated path, we approximate the integrals
\begin{equation*}
  \int_{(t,\;\tau\wedge T]} \mathrm{d}\Lambda_{s}
  \quad\text{and}\quad
  \int_t^{\tau\wedge T}
    \bigl(f_{s}^{Y} - r_{s}\,Y_{s}\bigr)\,\d s
\end{equation*}
by a left Riemann sum on the same time grid, truncating at the stopping index. Under standard assumptions on the integrands, these approximations incur an error of order~$\mathcal{O}\bigl(h_{\mathrm{ref}}\bigr)$.
\item \textbf{Monte Carlo approximation.} 
Finally, the conditional expectation $\E[\ \cdot \mid X_t=x]$
is replaced by the corresponding sample average over 
the $M^\mathrm{ref}\in\N$ simulated paths that started at $(t,x)$. In particular, this implies that \textit{i} and \textit{ii} are repeated $M_\text{ref}$ times and the approximation error is of order $\mathcal{O}(M_\text{ref}^{-1/2})$ in a mean-squared sense (or in probability).
\end{enumerate}
By taking both $N^\mathrm{ref}$ (the number of time steps) and 
$M^\mathrm{ref}$ (the number of sample paths) sufficiently large, 
we obtain a solution $Y^{\pi,\mathrm{ref}}_{t_n}(x)$ with an error 
that is typically of order 
\begin{equation*}
    O\bigl(h_\mathrm{ref}^{1/2}\bigr) + O\bigl(N_{\mathrm{MC}}^{-1/2}\bigr),
\end{equation*}
reflecting both the time-discretization (including default-time approximation) 
and the Monte Carlo sampling. In practice, one chooses $N^\mathrm{ref}$ and 
$M^\mathrm{ref}$ large enough that this reference solution is stable and 
accurate to the desired precision.

The initial margins in this paper are given by the $\alpha$- and $(1-\alpha)$-quantiles of price movements over the margin period of risk. At each discrete time step, we approximate these upper and lower quantiles by their empirical counterparts from an $M_q$-sample, $M_q \in \mathbb{N}$. In analogy to a Monte Carlo approximation of an expectation, the empirical quantiles converge at rate $\mathcal{O}(M_q^{-1/2})$ in a mean-squared sense. However, the constant in this convergence rate depends on both $\alpha$ and the underlying distribution (specifically on the density near the true $\alpha$-quantile).

\begin{remark}
Although we can control the approximation errors of our reference solutions,
it is important to note that in a nested approach each higher-level approximation 
relies on approximations computed at lower levels. As a result, any error 
introduced at the lower levels carries over to, and can be compounded in, subsequent 
levels. Moreover, applying this nested approach more than once quickly becomes 
prohibitively expensive in terms of computational resources. Therefore, in this 
work we treat the neural network approximations at lower levels as a given ``ground truth'' and 
use these directly as inputs. In this way, the reference-solution scheme outlined above is applied only at the final level. For instance, approximations at level 4 would use our neural network approximation at levels 1-3 and apply the nested approach at level 4. This can be improved if we have closed form solutions at the first layer.   
\end{remark}

\end{document}